\documentclass[journal]{IEEEtran}
\IEEEoverridecommandlockouts

\usepackage{silence}
\usepackage{cite}

\usepackage{amsmath,amssymb,amsthm,fixmath}

\usepackage{color,colortbl}
\usepackage{xcolor}

\usepackage[inline]{enumitem}

\usepackage{siunitx}
\DeclareSIUnit{\dBm}{dBm}

\usepackage{graphicx}
\usepackage[labelformat=simple]{subcaption}

\usepackage{epstopdf}
\usepackage{lipsum}

\usepackage{cuted}
\usepackage{multirow,booktabs}
\usepackage{diagbox}
\usepackage{makecell}
\usepackage{hhline}
\usepackage{array} 
\newcolumntype{x}{!{\vrule width 2px}}
\newcolumntype{y}{!{\vrule width 1.5px}}

\usepackage{optidef}

\usepackage[hidelinks]{hyperref}

\usepackage[shortcuts,acronym,automake]{glossaries}
\makeglossaries
\newacronym{6g}{6G}{Sixth Generation}
\newacronym{awgn}{AWGN}{additive white Gaussian noise}
\newacronym{bcd}{BCD}{block coordinate descent}
\newacronym{bec}{BEC}{binary erasure channel}
\newacronym{blcp}{BLCP}{block confusion probability}
\newacronym{blep}{BLEP}{block erasure probability}
\newacronym{fap}{FAP}{false alarm probability}
\newacronym{bler}{BLER}{block error rate}
\newacronym{blec}{BLEC}{block erasure channel}
\newacronym{bri}{BRI}{biregular irreducible}
\newacronym{clt}{CLT}{central limit theorem}
\newacronym{cp}{CP}{control plane}
\newacronym{cdf}{CDF}{cumulative distribution function}
\newacronym{csi}{CSI}{channel state information}
\newacronym{csit}{CSIT}{channel state information at transmitter}
\newacronym{dft-s-ofdm}{DFT-s-OFDM}{Discrete Fourier Transform-spread-OFDM}
\newacronym{fbl}{FBL}{finite blocklength}
\newacronym{gan}{GAN}{generative adversarial network}
\newacronym{ibl}{IBL}{infinite blocklength}
\newacronym{ldpc}{LDPC}{low-density parity-check}
\newacronym{lfp}{LFP}{leakage-failure probability}
\newacronym{ls}{LS}{least squares}
\newacronym{mac}{MAC}{medium access control}
\newacronym{mcs}{MCS}{modulation and coding scheme}
\newacronym{mimo}{MIMO}{multi-input multi-output}
\newacronym{ml}{ML}{maximum likelihood}
\newacronym{mm}{MM}{Minorize-Maximization}
\newacronym{noma}{NOMA}{non-orthogonal multi-access}
\newacronym{nom}{NOM}{non-orthogonal multiplexing}
\newacronym{ofdm}{OFDM}{orthogonal frequency-division multiplexing}
\newacronym{ofdma}{OFDMA}{orthogonal frequency-division multiple access}
\newacronym{oma}{OMA}{orthogonal multiple access}
\newacronym{papr}{PAPR}{Peak-to-Average Power Ratio}
\newacronym{pdf}{PDF}{probability density function}
\newacronym{per}{PER}{packet error rate}
\newacronym{phy}{PHY}{physical}
\newacronym{pld}{PLD}{physical layer deception}
\newacronym{pls}{PLS}{physical layer security}
\newacronym{prb}{PRB}{physical resource block}
\newacronym{psk}{PSK}{phase-shift keying}
\newacronym{sic}{SIC}{successive interference cancellation}
\newacronym{sinr}{SINR}{signal-to-interference-and-noise ratio}
\newacronym{snr}{SNR}{signal-to-noise ratio}
\newacronym{tdma}{TDMA}{time-division multiple access}
\newacronym{up}{UP}{user plane}
\newacronym{urllc}{URLLC}{ultra-reliable low-latency communication}
\newacronym{xurllc}{xURLLC}{next-generation URLLC}

\usepackage{tcolorbox}
\tcbuselibrary{many}
\newtheorem{theorem}{Theorem}% from 'amsthm'
\newtheorem{lemma}{Lemma}% from 'amsthm'
\newtheorem{corollary}{Corollary}% from 'amsthm'
\newtheorem{proposition}{Proposition}% from 'amsthm'
\theoremstyle{definition}
\newtheorem{remark}{Remark}% from 'amsthm'

\usepackage[norelsize,linesnumbered,ruled]{algorithm2e}
\SetKwRepeat{Do}{do}{while}
\makeatletter
\newcommand{\removelatexerror} {\let\@latex@error\@gobble}
\makeatother

\usepackage{flushend}

\newcommand{\superscript}[1]{^{\mathrm{#1}}}
\newcommand{\subscript}[1]{_{\mathrm{#1}}}

\newcommand{\comb}[2]{\begin{pmatrix}#1\\#2\end{pmatrix}}

\usepackage[normalem]{ulem}

\usepackage[textsize=tiny,colorinlistoftodos]{todonotes}
\makeatletter
\define@key{todonotes}{bh}[]{% Comment by 
	\setkeys{todonotes}{author=\textbf{Bin}, color=lime!30}}%
\define@key{todonotes}{yz}[]{% Comment by 
	\setkeys{todonotes}{author=\textbf{Yao}, color=blue!30}}%
\define@key{todonotes}{rs}[]{% Comment by 
	\setkeys{todonotes}{author=\textbf{Rafael}, color=green!30}}%
\makeatother

\tikzstyle{note}=[rectangle, minimum width=3cm, draw = none, fill = none, minimum width = 1.5cm, anchor=center, align=left]
\tikzstyle{block}=[rectangle, draw, line width=1pt, fill = none, minimum width = 1cm, minimum height = 0.75cm, anchor=center, inner sep = 0.5mm, align=center]
\tikzstyle{arrow} = [thick,->,>=stealth]

\newif\ifreviewmode
\reviewmodefalse

\ifreviewmode
\else
  \renewcommand{\todo}[1]{} % hide todo notes
\fi

\begin{document}

\title{Block Erasure Channel and Block z-Channel with Bounded Decoders and Finite Blocklength}

\author{
	\IEEEauthorblockN{
		Bin~Han,~\IEEEmembership{Senior Member,~IEEE,}
        Yao~Zhu,~\IEEEmembership{Member,~IEEE,}
        Rafael~F.~Schaefer,~\IEEEmembership{Senior Member,~IEEE,}
		Wenwen~Chen,
		Giuseppe~Caire,~\IEEEmembership{Fellow,~IEEE,}
        Anke~Schmeink,~\IEEEmembership{Senior Member,~IEEE,}\\        
        H.~Vincent~Poor,~\IEEEmembership{Life Fellow,~IEEE,} and~Hans~D.~Schotten,~\IEEEmembership{Member,~IEEE}
	}%\\
	% \IEEEauthorblockA{
	% 	\IEEEauthorrefmark{1}RPTU University Kaiserslautern-Landau,
    %   \IEEEauthorrefmark{2}Wuhan University,
	%   \IEEEauthorrefmark{2}RWTH Aachen University,
	% 	\IEEEauthorrefmark{3}TU Dresden,
	% 	\IEEEauthorrefmark{4}TU Berlin,\\
	% 	\IEEEauthorrefmark{5}Princeton University,
	% 	\IEEEauthorrefmark{6}German Research Center for Artificial Intelligence (DFKI GmbH)
	% }
	\thanks{B. Han, W. Chen, and H. D. Schotten are with RPTU University Kaiserslautern-Landau. Y. Zhu is with Wuhan University.  A. Schmeink is with RWTH Aachen University. R. F. Schaefer is with TU Dresden. G. Caire is with TU Berlin. H. V. Poor is with Princeton University. H. D. Schotten is with the German Research Center for Artificial Intelligence (DFKI). B. Han (bin.han@rptu.de) and Y. Zhu (yao.zhu@whu.edu.cn) are the corresponding authors. Part of this work was presented at IEEE INFOCOM 2026~\cite{HZS+2026confusions}.
	% \emph{AI disclosure:} In accordance with the IEEE policy on the use of AI, the authors disclose that Anthropic's Claude was employed in a strictly assistive capacity during the preparation of this manuscript---limited to language polishing of author-drafted text, auxiliary verification of the mathematical derivations prepared by the authors, and minor refactoring and debugging assistance of author-designed simulation code. All research ideas, the system model, the analytical framework, the theorems and their proofs, the algorithm design, the experimental methodology, and the interpretation of results were conceived, developed, and verified by the human authors, who take full and sole responsibility for the content and correctness of the paper.
	Anthropic's Claude and OpenAI's GPT assisted with language polishing, auxiliary derivation checking, and minor simulation debugging.
	}% <-this % stops a space
	% \thanks{The work of B. Han, M. A. Habibi, A. Schmeink, and H. D. Schotten was supported in part by the German Federal Ministry of Education and Research in the programme of ``Souver\"an. Digital. Vernetzt.'' joint projects 6G-RIC (16KISK028), Open6GHub (16KISK003K/16KISK004/16KISK012), and Open6GHub+ (16KIS2406). The work of Y. Zhu and Y. Hu was supported in part by the National Key R\&D Program of China under Grant 2023YFE0206600, NSFC Grant 62471341, and by the Hubei Provincial Science and Technology Cooperation Project under Grant 2025EHA040.  The work of V. Sciancalepore was partially supported by SNS JU Project 6G-GOALS (GA no. 101139232).}
}

\maketitle

\begin{abstract}
	Block error rate is a standard metric in finite blocklength (FBL) communication, yet it conflates two qualitatively different failure modes: block confusions, where the decoder selects a wrong codeword, and block erasures, where it declares a loss. Higher-layer protocols treat physical-layer failures as erasures, but this cross-layer assumption lacks FBL justification. We derive a rigorous upper bound and a companion lower-side estimate on the \ac{blcp} and \ac{blep} for bounded-distance decoders over \ac{awgn} channels at finite blocklength, recasting the coding problem as a geometric sphere packing one. We analyze the sensitivity of these bounds to blocklength and signal-to-noise ratio, characterize the envelope of the upper bound, and derive closed-form Chernoff approximations. Extending the model to idle transmission blocks, we bound the \ac{fap} and show that a block z-channel abstraction emerges from the bounded-decoding geometry. Numerical results confirm that confusion and false alarm probabilities lie far below the error rate constraint, providing quantitative physical-layer support for the block erasure channel and block z-channel abstractions assumed in protocol design.
\end{abstract}

\begin{IEEEkeywords}
    Finite blocklength, block erasure channel, block z-channel, bounded decoder
\end{IEEEkeywords}

\glsresetall

\section{Introduction}\label{sec:intro}
For decades, \ac{bler} has been a standard performance metric for digital communication systems using block coding. It has become particularly important for \ac{urllc} and \ac{xurllc}, which operate in the \ac{fbl} regime, where Shannon capacity loses its relevance and lossless transmission is impractical~\cite{PPV2010channel}.

However, the \ac{bler} does not distinguish between two qualitatively different failure modes. The first, a \emph{block confusion}, occurs when the decoder selects a wrong codeword with high confidence, failing to detect the error. The second, a \emph{block erasure}, occurs when a bounded decoder cannot identify any codeword with sufficient confidence and rejects all candidates, issuing an ``erasure/loss'' as output.

Block erasures and the associated \acp{blec} have been studied in detail in the \ac{ibl} regime, with bounds on erasure probability derived for bounded decoders~\cite{Forney1968exponential,Merhav2008error,SM2010exact}. When the transmitter may be idle in some blocks, a third error mode becomes possible: a \emph{false alarm}, where pure noise is mistakenly decoded as a codeword. If both the confusion probability and the \ac{fap} are negligible, the block channel seen at the decoder output has only three transitions: an idle block is always erased, and an active block is either decoded correctly or erased. This is a z-channel over the codebook extended by the idle block; as the capacity of the z-channel reflects, an outer code operating on blocks can then convey information through the timing of idle blocks as well. Binary erasure channels and binary z-channels have well-understood capacity and error exponents~\cite{DGP+2006capacity,AKT2004extrinsic}, and their block-level counterparts in the \ac{ibl} regime follow from standard asymptotic arguments~\cite{GQ2006coding,Didier2006new}. In the \ac{fbl} regime, however, little effort has been made to distinguish block erasures from block confusions, or to quantify false alarm probabilities,
despite networking protocols modeling \ac{phy} failures as erasures in retransmission schemes, \ac{mac} protocols, and reliability mechanisms. This cross-layer assumption, that block errors at the \ac{phy} translate to detectable losses rather than silent corruptions at upper layers, remains theoretically unvalidated.
In the \ac{ibl} regime this assumption is well founded: Forney's erasure decoding~\cite{Forney1968exponential} and the information-theoretic analysis of hybrid-ARQ protocols in~\cite{CT2001throughput} show that, at sufficiently large blocklength, the undetected-error probability of a code can be made arbitrarily small while its block error rate is dominated by the information outage probability. At finite blocklength no such limit is available, and \ac{fbl} works such as~\cite{HZS+2025semantic} have implicitly assumed the \ac{bler} to be approximately equal to the \ac{blep} without quantifying the two failure modes separately. Evidence supporting this assumption, however, is missing, and the questions of how to quantify the \ac{fbl} \ac{blep} and the \ac{fbl} \ac{fap} remain open.

This work provides a systematic analysis of the \ac{fbl} \ac{blep} and \ac{fap} in \ac{awgn} channels under \ac{bler}-bounded \ac{ml} decoding. Following the classical information-theoretic approach, we map codewords onto a hyper-sphere in the high-dimensional Euclidean signal space, model the decoder's decision regions as hyper-spheres centered at valid codewords, and recast the coding problem as a geometric one under sphere packing constraints. We derive probability bounds for block confusions and block erasures, analyze their sensitivity to system parameters, and show that the resulting channel is a \emph{block erasure channel} where confusion is negligible. We then extend the model to include idle blocks and show that the \ac{fap} is negligible as well, yielding a \emph{block z-channel} whose asymmetry emerges from the bounded-decoding geometry. These results provide quantitative \ac{phy}-layer support for these standard networking abstractions in the \ac{fbl} regime.

The remainder of this paper is organized as follows. Sec.~\ref{sec:related} reviews related work. Sec.~\ref{sec:bounded_decoding} formulates error-bounded block decoding with erasures, Sec.~\ref{sec:bound_analysis} derives an upper bound and a uniform-spectrum estimate on the \ac{blcp}, and Sec.~\ref{sec:sensitivity_analysis} analyzes their sensitivity to blocklength and \ac{snr}. Sec.~\ref{sec:z_channel} extends the model to idle blocks and shows that the resulting block z-channel has negligible \ac{fap}. Sec.~\ref{sec:results} presents numerical and code-based validation, Sec.~\ref{sec:discussion} discusses non-constant-envelope constellations, and Sec.~\ref{sec:conclusion} concludes.

% \clearpage

\section{Related Work}\label{sec:related}
Information theory and coding theory trace back to the work of Shannon~\cite{Shannon1949communication,Shannon1959probability}, who modeled codewords and their decision regions as hyper-spheres packed in Euclidean space, and Hamming~\cite{Hamming1950error}, who introduced the discrete sphere-packing view in the Hamming metric. We follow the former, Euclidean geometric approach throughout.

Classical information theory operates in the asymptotic regime, where typicality arguments achieve capacity but fail to capture the non-asymptotic error behavior of short codes. While early finite-length analyses existed for specific coding schemes~\cite{DPT+2002finite}, systematic \ac{fbl} information theory began with Polyanskiy et al.~\cite{PPV2010channel}, who established tight \ac{bler} bounds for \ac{awgn} channels. These were later extended to multi-antenna fading channels~\cite{YDKP2014quasi} and to both coherent and non-coherent multiple-antenna settings~\cite{AY2019coherent,QK2025noncoherent}; surveys of the \ac{fbl} regime for short-packet \ac{urllc} systems include~\cite{DKP2016toward}. The resulting \ac{bler} bounds have since become a standard tool for evaluating \ac{fbl} codes in cooperative relay networks~\cite{HGS2015capacity}, \ac{noma} schemes~\cite{XYC+2020noma}, and wireless power transfer systems~\cite{AFSA2018wireless}.

Erasure channels have a long history in communications. Forney~\cite{Forney1968exponential} established error exponents for the trade-off between confusions and erasures; these were later refined by Merhav~\cite{Merhav2008error} and Somekh-Baruch and Merhav~\cite{SM2010exact}. These error exponent results characterize the exponential decay rate of the error probability as blocklength grows, operating in a different regime from Polyanskiy-style analysis, which targets achievable rates at fixed blocklength and error probability.
Channels with erasures have since been studied at multiple levels. The \ac{bec} is the simplest nontrivial communication channel: its capacity is elementary, and the analysis of \ac{ldpc} codes over the \ac{bec} takes a particularly simple form~\cite{AKT2004extrinsic,DPT+2002finite}, which was instrumental in the design of the first provably capacity-achieving code families~\cite{LMS+2001efficient,KRU2011threshold}. For $q$-ary erasure channels, \cite{LPC2013bounds} derived performance bounds, and \cite{MTS+2022methods} proposed methods to convert error channels to pure erasure channels. For \acp{blec}, coding performance and bounds are analyzed in~\cite{GQ2006coding,Didier2006new}; a book-length treatment of erasure-channel coding has recently appeared in~\cite{PLM+2026coding}.

A closely related model is the z-channel (also called binary asymmetric channel), in which errors are one-directional: a transmitted~$1$ may be received as~$0$ with probability~$p$, whereas a transmitted~$0$ is always received correctly. Code constructions and capacity bounds for the binary z-channel date back to Kl{\o}ve~\cite{Klove1981asymmetric}; the capacity was characterized by Tallini et~al.~\cite{TAB2002capacity}, who also gave capacity-achieving feedback codes~\cite{TAB2008feedback}. More recently, Polyanskii and Zhang~\cite{PZ2023codes} established new bounds on z-channel code sizes. All of these results, however, treat the binary z-channel at the symbol level. The question of whether a block-level z-channel abstraction, where a block left idle by the transmitter passes through almost undisturbed while an active block faces an erasure probability, is justified in the \ac{fbl} regime has, to the best of our knowledge, not been addressed.

The distinction between confusions and erasures also matters in applications such as semantic communications, where absent information differs qualitatively from distorted information: in~\cite{HZS+2025semantic}, erasures can be exploited while confusions must be avoided, and the opposite preference may arise elsewhere, or even vary across packets within a session.

\section{Error-Bounded Decoding with Erasures}\label{sec:bounded_decoding}
While \emph{complete} decoders always select the most likely codeword, \emph{incomplete} decoders allow an erasure option (selecting none) and/or a list option (listing several candidates)~\cite{Forney1968exponential}. We allow erasures but no list output.

We consider a code $\mathcal{C}\subset\mathcal{A}^n$ of size $\vert\mathcal{C}\vert=M^k$ over an abstract $M$-ary alphabet $\mathcal{A}$ (e.g., $\mathbb{F}_M$ for linear codes), with blocklength $n=k+r$, where $k$ is the payload length and $r$ the redundancy length. A bijective labeling $\mu:\mathcal{A}\to\mathcal{M}$ maps each code symbol onto a constellation $\mathcal{M}$ of $M$ signal points, which occupies $\nu$ real signal dimensions per symbol: $\nu=1$ for real constellations (e.g., BPSK, $\mathcal{M}\subset\mathbb{R}$), and $\nu=2$ for complex ones (e.g., $M$-PSK with $M>2$, $\mathcal{M}\subset\mathbb{C}$, identified with $\mathbb{R}^2$); higher-dimensional constellations such as coherent orthogonal $M$-FSK ($\nu=M$) are covered by the same formalism. Applied symbol by symbol, as in coded modulation with a fixed labeling, $\mu$ maps a codeword $\mathbf{c}=(c_1,\dots,c_n)\in\mathcal{C}$ one-to-one onto the transmitted signal vector $\mathbf{x}=\mu(\mathbf{c})\triangleq(\mu(c_1),\dots,\mu(c_n))$ of the modulated codebook $\mathcal{X}\triangleq\mu(\mathcal{C})\subset\mathcal{M}^n\subset\mathbb{R}^N$, where $N\triangleq\nu n$ is the geometric dimension of the signal space; as $\mu$ is one-to-one, we use the term codeword for both $\mathbf{c}$ and $\mathbf{x}$. The Hamming blocklength $n$ and the geometric dimension $N$ must be carefully distinguished throughout: all combinatorial quantities (Hamming distances, codebook sizes, distance-distribution bounds) are properties of $\mathcal{C}$ and governed by~$n$, whereas all noise statistics (the $\chi$ and non-central $\chi^2$ distributions below) act in $\mathbb{R}^N$ and are governed by~$N$.
Given an ideal channel with \ac{awgn} $\mathbf{w}\sim\mathcal{N}(\mathbf{0},\sigma^2\mathbf{I}_N)$, where $\sigma^2$ denotes the noise variance per real dimension (i.e., the one-sided noise power spectral density is $N_0=2\sigma^2$), the received sequence is $\mathbf{y}=\mathbf{x}+\mathbf{w}$ with the conditional \ac{pdf}\footnote{For complex constellations, Eq.~\eqref{eq:prob} is the \ac{pdf} of the circularly symmetric complex Gaussian noise with variance $2\sigma^2$ per complex symbol, and $\Vert\mathbf{y}-\mathbf{x}\Vert^2=\sum_{i=1}^n\vert y_i-x_i\vert^2$ with $\vert\cdot\vert$ the complex modulus.}
\begin{equation}
	f(\mathbf{y}|\mathbf{x})=\left(2\pi\sigma^2\right)^{-\frac{N}{2}}\exp\left(-\frac{\Vert\mathbf{y}-\mathbf{x}\Vert^2}{2\sigma^2}\right).
	\label{eq:prob}
\end{equation}
We consider a constant-envelope constellation with per-symbol energy $E\subscript{s}=\vert\mu(a)\vert^2$ for all $a\in\mathcal{A}$, so that every codeword has the same energy $\Vert\mathbf{x}\Vert^2=E=nE\subscript{s}$, and equally likely codewords. The constant-envelope assumption is adopted in the main exposition for convenience; Sec.~\ref{sec:discussion} extends the analysis to non-constant-envelope constellations. In Shannon's geometric view~\cite{Shannon1959probability}, all codewords thus lie on the surface of a hyper-sphere of radius $\sqrt{E}$ in $\mathbb{R}^N$. We denote the shortest Euclidean distance between two distinct codewords as $D\subscript{min}$.

Two remarks delimit the class of codes studied. First, the symbol-wise labeling ties the Euclidean distances of Sec.~\ref{sec:bound_analysis} to Hamming distances; all results therefore concern standard coded modulation, in which an $M$-ary code is mapped symbol by symbol onto an $M$-ary constellation. Bit-level constructions such as bit-interleaved and multilevel coded modulation belong to this class at the symbol level, with $m$ coded bits forming one $2^m$-ary label; the bounds then apply to the grouped code. Second, constant-envelope symbol-wise codebooks are spherical codes, but only a small subfamily of the spherical codes of Shannon's analysis~\cite{Shannon1959probability}, whose codewords have components of varying magnitude with an empirical distribution approaching a Gaussian as $N$ grows, unreachable by any symbol-wise labeling of a discrete alphabet. The distinction matters for capacity: general spherical codes achieve the \ac{awgn} capacity at every \ac{snr}, whereas constant-envelope signaling is limited to the polyphase capacity~\cite{Wyner1966bounds}, whose high-\ac{snr} growth is half that of the \ac{awgn} capacity. Capacity is not the subject of this paper, which concerns the split of block errors into confusions and erasures for the coded modulation used in practice; only the false alarm analysis of Sec.~\ref{sec:z_channel}, which uses no Hamming structure, applies to every spherical codebook.

Since $f(\mathbf{y}|\mathbf{x})$ decreases with $\Vert\mathbf{y}-\mathbf{x}\Vert$, \ac{ml} decoding of equally likely codewords is nearest-neighbor decoding. The decoder studied in this paper is a bounded-distance decoder with erasure option: it outputs the codeword $\hat{\mathbf{x}}\in\mathcal{X}$ if $\mathbf{y}$ lies in the ball $\mathcal{B}_R(\hat{\mathbf{x}})\triangleq\{\mathbf{y}\in\mathbb{R}^N:\Vert\mathbf{y}-\hat{\mathbf{x}}\Vert\leqslant R\}$ of radius $R$ around it, and declares an erasure if $\mathbf{y}$ lies in none of these balls. The balls $\mathcal{B}_R(\mathbf{x})$, $\mathbf{x}\in\mathcal{X}$, all of the same radius, are the decision regions of the decoder. Since list output is not allowed, the decision regions must be disjoint, i.e., $R\leqslant D\subscript{min}/2$; we call this the \emph{no-list condition}. Under the no-list condition, a codeword within distance $R$ of $\mathbf{y}$ is necessarily the nearest one, so the decoder coincides with \ac{ml} decoding followed by the acceptance test $\Vert\mathbf{y}-\hat{\mathbf{x}}\Vert\leqslant R$; in this sense it is a bounded \ac{ml} decoder.

Now let an arbitrary codeword $\mathbf{x}=\mu(\mathbf{c})\in\mathcal{X}$ be sent over the noisy channel. Throughout, the Hamming distance $d(\mathbf{c},\mathbf{c}')\triangleq\vert\{i\in\{1,\dots,n\}:c_i\neq c_i'\}\vert$ is taken between code symbol vectors, whereas the Euclidean distance $D(\mathbf{x},\mathbf{x}')\triangleq\Vert\mathbf{x}'-\mathbf{x}\Vert$ is taken between the corresponding signal vectors $\mathbf{x}=\mu(\mathbf{c})$ and $\mathbf{x}'=\mu(\mathbf{c}')$; since $\mu$ is one-to-one, $\mathbf{x}$ and $\mathbf{x}'$ differ exactly in the $d(\mathbf{c},\mathbf{c}')$ symbol positions in which $\mathbf{c}$ and $\mathbf{c}'$ differ (one constellation symbol counting as one position regardless of $\nu$). We denote the set of all neighboring codewords of $\mathbf{x}$ in $\mathcal{X}$ at Hamming distance $\kappa$ as $\mathcal{V}(\mathcal{X},\mathbf{x},\kappa)\triangleq\left\{\mu(\mathbf{c}')\in\mathcal{X}\,\vert\, d(\mathbf{c},\mathbf{c}')=\kappa\right\}$, and therewith the total set of codewords in $\mathcal{X}$ other than $\mathbf{x}$ as $\mathcal{V}_\Sigma(\mathcal{X},\mathbf{x})=\cup_{\kappa>0}\mathcal{V}(\mathcal{X},\mathbf{x},\kappa)$.
When $\mathbf{y}$ falls into the decision region $\mathcal{B}_R(\mathbf{x})$ of the transmitted codeword, the decoding is correct. When it falls into the decision region $\mathcal{B}_R(\mathbf{x}')$ of any other codeword $\mathbf{x}'\in\mathcal{V}_\Sigma(\mathcal{X},\mathbf{x})$, a block confusion occurs. When it falls into no decision region, a block erasure occurs.

Thus, under the \ac{awgn} $\mathbf{w}$, the probability of block errors (including both block confusions and block erasures) is $\varepsilon=P(\Vert\mathbf{w}\Vert\geqslant R)$.

For an $\varepsilon$-bounded decoder, one that accepts the result $\hat{\mathbf{x}}$ only if the estimated block error probability is below $\varepsilon$, the decision region radius is
\begin{equation}\label{eq:decision_region_radius}
	R(\varepsilon)=F^{-1}_{\Vert\mathbf{w}\Vert}(1-\varepsilon)=\sigma F^{-1}_{\chi_N}(1-\varepsilon),
\end{equation}
where $F_{\chi_N}(x)=\mathcal{P}(N/2,x^2/2)$ is the \ac{cdf} of the $\chi$ distribution with $N$ degrees of freedom and $\mathcal{P}(s,x)\triangleq\gamma(s,x)/\Gamma(s)$ is the regularized lower incomplete gamma function, set in calligraphic type to distinguish it from the probability operator $P(\cdot)$.

Meanwhile, the probability of block confusion mistaking $\mathbf{x}$ into $\mathbf{x}'$ is given by $P(\mathbf{x}\to\mathbf{x'})\triangleq P(\hat{\mathbf{x}}=\mathbf{x}'|\mathbf{x})=P(\Vert\mathbf{w}-\mathbf{x'}+\mathbf{x}\Vert\leqslant R)$.
Since $\mathbf{w}\sim\mathcal{N}(\mathbf{0},\sigma^2\mathbf{I}_N)$ is rotationally symmetric, it depends on $\mathbf{x}'-\mathbf{x}$ only through its Euclidean norm, so that
\begin{equation}\label{eq:pairwise_con_prob}
	P(\mathbf{x}\to\mathbf{x}')=P\subscript{pair}(\Vert\mathbf{x}'-\mathbf{x}\Vert).
\end{equation}
To express $P\subscript{pair}(D)$ in closed form, we align the displacement vector with the first coordinate axis of $\mathbb{R}^N$, writing $\mathbf{x}'-\mathbf{x}=D\mathbf{e}_1$ without loss of generality. Then $P\subscript{pair}(D)=P(\Vert\mathbf{w}-D\mathbf{e}_1\Vert^2\leqslant R^2)=P(\Vert\mathbf{w}-D\mathbf{e}_1\Vert^2/\sigma^2\leqslant\rho^2)$, where $\rho\triangleq R/\sigma=F^{-1}_{\chi_N}(1-\varepsilon)$. Since $\mathbf{w}/\sigma\sim\mathcal{N}(\mathbf{0},\mathbf{I}_N)$, the random variable $\Vert\mathbf{w}/\sigma-(D/\sigma)\mathbf{e}_1\Vert^2$ follows a non-central chi-squared distribution with $N$ degrees of freedom and non-centrality parameter $\lambda=D^2/\sigma^2$, so that
\begin{equation}\label{eq:con_prob_upon_distance}
	P\subscript{pair}(D)=F_{\chi^2_N(D^2/\sigma^2)}(\rho^2),
\end{equation}
where $F_{\chi^2_N(\lambda)}(\cdot)$ denotes the \ac{cdf} of the non-central chi-squared distribution with $N$ degrees of freedom and non-centrality parameter $\lambda$.
The overall \ac{blcp} is
\begin{equation}\label{eq:overall_con_prob}
	\begin{split}
		P\subscript{con}=&\sum\limits_{\mathbf{x}\in\mathcal{X}}\left[P(\mathbf{x})\sum\limits_{\mathbf{x}'\in\mathcal{V}_\Sigma(\mathcal{X},\mathbf{x})}P(\mathbf{x}\to\mathbf{x}')\right]
		\\=&\frac{1}{\vert\mathcal{X}\vert}\sum\limits_{\substack{\mathbf{x}\in\mathcal{X}\\\mathbf{x}'\in\mathcal{V}_\Sigma(\mathcal{X},\mathbf{x})}}P(\mathbf{x}\to\mathbf{x}')
	\end{split}
\end{equation}
and the \ac{blep} is $P\subscript{ers}=\varepsilon-P\subscript{con}$, so every lower bound on $P\subscript{con}$ is an upper bound on $P\subscript{ers}$ and vice versa.

For generic \ac{ml} decoders, the dependence of the \ac{bler} and \ac{blep}s on $R$ is straightforward:
\begin{theorem}\label{th:monotonicity_ers_and_con_regarding_R}
	Given fixed $(\mathcal{X}, E,\sigma^2)$, both $\varepsilon$ and $P\subscript{ers}$ are monotonically decreasing w.r.t. $R$, while $P\subscript{con}$ monotonically increases.
\end{theorem}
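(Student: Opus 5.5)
The argument works at the level of events, using only that the decision regions are balls of a common radius $R$. Throughout we stay within the no-list condition $R\leqslant D\subscript{min}/2$, so that the regions $\mathcal{B}_R(\mathbf{x})$, $\mathbf{x}\in\mathcal{X}$, remain disjoint and all quantities are well defined.

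\emph{Block error rate.} We have $\varepsilon(R)=P(\Vert\mathbf{w}\Vert\geqslant R)=1-F_{\chi_N}(R/\sigma)$. The $\chi_N$ distribution has a density that is strictly positive on $(0,\infty)$, so its \ac{cdf} is strictly increasing. Hence $\varepsilon$ decreases strictly in $R$; equivalently, $R(\varepsilon)$ in \eqref{eq:decision_region_radius} is strictly decreasing in $\varepsilon$.

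\emph{Confusion probability.} For every pair $(\mathbf{x},\mathbf{x}')$, the event $\{\Vert\mathbf{w}-(\mathbf{x}'-\mathbf{x})\Vert\leqslant R\}$ grows with $R$. Thus each term $P(\mathbf{x}\to\mathbf{x}')=P\subscript{pair}(\Vert\mathbf{x}'-\mathbf{x}\Vert)$ is nondecreasing in $R$; by \eqref{eq:con_prob_upon_distance} it is the \ac{cdf} $F_{\chi^2_N(\lambda)}$ evaluated at $\rho^2=R^2/\sigma^2$. Since the non-central $\chi^2$ density is positive on $(0,\infty)$, each term is in fact strictly increasing. Summing as in \eqref{eq:overall_con_prob} with fixed nonnegative weights $1/\vert\mathcal{X}\vert$ shows that $P\subscript{con}$ increases monotonically in $R$.

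\emph{Erasure probability.} This is the one step that needs care. The identity $P\subscript{ers}=\varepsilon-P\subscript{con}$ writes $P\subscript{ers}$ as a decreasing term minus an increasing term, which already gives monotonicity. To make the statement transparent, I would also argue directly with events. Conditioned on $\mathbf{x}$ being sent, an erasure is exactly the event that $\mathbf{y}\notin\bigcup_{\mathbf{x}''\in\mathcal{X}}\mathcal{B}_R(\mathbf{x}'')$. The union of the balls grows with $R$, so its complement, the erasure region, shrinks. Averaging over equiprobable codewords preserves the monotonicity, and strictness follows because the Gaussian density is positive everywhere, so any strict enlargement of an open shell region carries positive probability.

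Finally, monotonicity is only asserted for fixed $(\mathcal{X},E,\sigma^2)$ over the admissible range $R\in(0,D\subscript{min}/2]$. Fixing these parameters keeps the pairwise distances in \eqref{eq:pairwise_con_prob} constant, so $R$ is the only varying quantity.
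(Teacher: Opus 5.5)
Your proof is correct and matches the paper's approach: the paper says the proof is trivial and omits it, and the intended argument is the event-inclusion reasoning you give. As $R$ grows, the correct-decoding ball and every confusion ball enlarge, so $\varepsilon$ falls and $P\subscript{con}$ rises, and $P\subscript{ers}=\varepsilon-P\subscript{con}$ (equivalently, the shrinking complement of the union of disjoint balls) falls; your strictness claims from the positive (non-central) $\chi$ densities are a valid strengthening of the stated monotonicity.
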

\begin{proof}
	Trivial, omitted.
\end{proof}

%We focus on the coding aspect: given $(\varepsilon,E,\sigma^2)$ (so that $R$ is fixed via Eq.~\eqref{eq:decision_region_radius}), what is the minimum achievable $P\subscript{ers}$ over all codebooks $\mathcal{X}\subseteq\mathcal{M}^n$? We consider the case $\vert\mathcal{X}\vert=M^k$, i.e., the payload code space is fully utilized.

\section{Confusion Probability Bounds}\label{sec:bound_analysis}
\subsection{Bounds on the Minimum Distance between Codewords}
\label{subsec:bounds_dmin}
Eq.~\eqref{eq:pairwise_con_prob}--\eqref{eq:overall_con_prob} reveal that the \ac{blcp} is dominated by the Euclidean distance between distinct codewords. For two codewords $\mathbf{x}=\mu(\mathbf{c})$ and $\mathbf{x}'=\mu(\mathbf{c}')$ from a constant-modulus constellation with per-symbol energy $E\subscript{s}=|x_i|^2$, the squared Euclidean distance $D^2(\mathbf{x},\mathbf{x}')$ decomposes over the positions where $\mathbf{c}$ and $\mathbf{c}'$ differ:
\begin{equation}\label{eq:euclidean_decomposition}
	D^2(\mathbf{x},\mathbf{x}') = \sum_{i:\,c_i\neq c_i'} |x_i - x_i'|^2 = 2\bigl(1-\bar{\eta}(\mathbf{c},\mathbf{c}')\bigr)E\subscript{s}\,d(\mathbf{c},\mathbf{c}'),
\end{equation}
where $\eta_i \triangleq \text{Re}(x_i x_i'^*)/E\subscript{s}$ is the normalized symbol correlation at position~$i$ and $\bar{\eta}(\mathbf{c},\mathbf{c}')\triangleq d(\mathbf{c},\mathbf{c}')^{-1}\sum_{i:\,c_i\neq c_i'}\eta_i$ is the \emph{empirical mean correlation} of the pair. Eq.~\eqref{eq:euclidean_decomposition} is exact, but $\bar{\eta}(\mathbf{c},\mathbf{c}')$ depends on the symbols that actually appear in the two codewords. To relate $D$ to $d$ through the constellation alone, define the \emph{average pairwise correlation} $\bar{\eta}\triangleq\binom{M}{2}^{-1}\sum_{a\neq b\in\mathcal{M}}\text{Re}(ab^*)/E\subscript{s}$ and the per-symbol squared-distance constant $\Delta\triangleq 2(1-\bar{\eta})E\subscript{s}$; for $M$-PSK, by the roots-of-unity identity $\sum_{l=1}^{M-1}\cos(2\pi l/M)=-1$, $\bar{\eta}=-1/(M-1)$, coinciding with the regular simplex. The ordered pair $(a,b)\in\mathcal{M}^2$ with $a\neq b$ is the \emph{transition type} of a mismatched position carrying symbol $a$ in $\mathbf{x}$ and symbol $b$ in $\mathbf{x}'$; there are $M(M-1)$ such types, and $\bar{\eta}$ is the mean of $\eta$ over them. For a tolerance $\delta\geqslant0$, a codeword pair is called \emph{$\delta$-uniform} if $\vert\bar{\eta}(\mathbf{c},\mathbf{c}')-\bar{\eta}\vert\leqslant\delta$, i.e., if its transition types are mixed closely enough to the constellation average. For every $\delta$-uniform pair, Eq.~\eqref{eq:euclidean_decomposition} yields the two-sided bound
\begin{equation}\label{eq:delta_sandwich}
	2(1-\bar{\eta}-\delta)E\subscript{s}\, d\leqslant D^2(\mathbf{x},\mathbf{x}')\leqslant 2(1-\bar{\eta}+\delta)E\subscript{s}\, d
\end{equation}
with $d=d(\mathbf{c},\mathbf{c}')$, i.e., $D^2=\Delta d$ up to the factor $1\pm\delta/(1-\bar{\eta})$.
\begin{remark}[Uniform mismatch assumption]\label{rem:distance_bounds}
	For simplicity of exposition, we assume $\delta=0$ throughout, so that
	\begin{equation}\label{eq:hamming_to_euclidean}
		D(\mathbf{x},\mathbf{x}') = \sqrt{\Delta\cdot d(\mathbf{c},\mathbf{c}')}
	\end{equation}
	for every pair of codewords. This is an assumption on the code, not an identity. When a nonzero $\delta$ is known, every bound of this paper holds with $\Delta$ replaced by the endpoint of~\eqref{eq:delta_sandwich} that keeps the respective inequality valid, in the same manner as the constellation-level sandwich of Sec.~\ref{sec:discussion}. Because the bounds are applied to each codeword pair with its worst-case endpoint, no average over codes or pairs is ever moved inside a probability; Lemma~\ref{lem:mismatch_concentration} below only quantifies how many pairs of a random code may fall outside the tolerance. Which $\delta$ applies is a property of the code:
	\begin{enumerate}
		\item \emph{Equicorrelated constellations ($\delta=0$ for every code).} If all distinct symbol pairs have the same correlation, as for BPSK ($\bar{\eta}=-1$), the regular simplex ($\bar{\eta}=-1/(M-1)$), and orthogonal signaling ($\bar{\eta}=0$), then $\bar{\eta}(\mathbf{c},\mathbf{c}')=\bar{\eta}$ for every pair and Eq.~\eqref{eq:hamming_to_euclidean} is exact.
		\item \emph{Group codes on $M$-PSK.} The individual $M$-PSK correlations $\cos(2\pi l/M)$ differ from their mean. For a $\mathbb{Z}_M$-linear code with the labeling $\mu(a)=\sqrt{E\subscript{s}}\,e^{\mathrm{j}2\pi a/M}$, $\eta_i=\cos[2\pi(c_i-c_i')/M]$ depends only on the difference symbol, and $\mathbf{c}-\mathbf{c}'$ is itself a codeword; a pair is thus exactly $0$-uniform whenever its difference codeword is \emph{symbol-balanced}, i.e., contains every nonzero element of $\mathbb{Z}_M$ equally often. Pairs whose difference codeword is dominated by a single transition type (e.g., a constant codeword) deviate up to the extremal correlations, so $\delta$ has to be determined from the code; Sec.~\ref{subsec:code_validation} does so for a random $\mathbb{Z}_4$-linear code.
		\item \emph{Random codes.} In the ensemble where each code symbol is drawn independently and uniformly from~$\mathcal{A}$, the transition types at the $d$ mismatched positions are i.i.d.\ uniform, and Lemma~\ref{lem:mismatch_concentration} bounds the probability that a pair violates $\delta$-uniformity at any finite $(d,\delta)$.
		\item \emph{Structured and learned codes.} Interleaving, scrambling, and the polarization transform of practical codes (e.g., polar and turbo codes) spread the transition types without favoring a single one; for these, as for learned (autoencoder-based) codebooks, $\delta$ is best estimated empirically from the codebook.
	\end{enumerate}
\end{remark}

\begin{lemma}\label{lem:mismatch_concentration}
	Let $\mathbf{c},\mathbf{c}'$ be drawn from the i.i.d.-uniform ensemble over $\mathcal{A}^n$, conditioned on $d(\mathbf{c},\mathbf{c}')=d$, and let $\mathbf{x}=\mu(\mathbf{c})$ and $\mathbf{x}'=\mu(\mathbf{c}')$. Then the per-position correlations $\eta_i$ at the mismatched positions are i.i.d.\ with mean $\bar{\eta}$ and range $[\eta\subscript{min},\eta\subscript{max}]$, and
	\begin{equation}\label{eq:hoeffding_uniformity}
		P\!\left(\bigl\vert\bar{\eta}(\mathbf{c},\mathbf{c}')-\bar{\eta}\bigr\vert>\delta\right)\leqslant 2\exp\!\left(-\frac{2d\delta^2}{(\eta\subscript{max}-\eta\subscript{min})^2}\right).
	\end{equation}
\end{lemma}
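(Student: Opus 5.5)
The plan has two steps: first establish the distributional claim about the $\eta_i$, then apply Hoeffding's inequality to their empirical mean.

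\emph{Conditional law of the symbol pairs.} Under the i.i.d.-uniform ensemble, the pairs $(c_i,c_i')$, $i=1,\dots,n$, are i.i.d.\ and uniform on $\mathcal{A}^2$. The event $\{d(\mathbf{c},\mathbf{c}')=d\}$ is a disjoint union over the support sets $S\subseteq\{1,\dots,n\}$ with $\vert S\vert=d$. Fix such a set $S$ and condition on the event that the mismatched positions are exactly $S$. This event is a product event: $c_i\neq c_i'$ for $i\in S$ and $c_i=c_i'$ for $i\notin S$. Conditioning a product measure on a product event keeps the coordinates independent. Each $(c_i,c_i')$ with $i\in S$ is then uniform on the $M(M-1)$ ordered pairs with distinct entries.

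Because $\mu$ is a bijection, the transition type $(\mu(c_i),\mu(c_i'))$ is uniform over the ordered pairs $(a,b)\in\mathcal{M}^2$ with $a\neq b$. Hence $\eta_i=\mathrm{Re}(\mu(c_i)\mu(c_i')^*)/E\subscript{s}$, $i\in S$, are i.i.d.\ with mean $\frac{1}{M(M-1)}\sum_{a\neq b}\mathrm{Re}(ab^*)/E\subscript{s}$. This equals the unordered average $\bar{\eta}$ defined in the paper, since $\mathrm{Re}(ab^*)=\mathrm{Re}(ba^*)$ and each unordered pair is therefore counted twice with the same value. The range is $[\eta\subscript{min},\eta\subscript{max}]$, where these denote the extremal values of $\mathrm{Re}(ab^*)/E\subscript{s}$ over $a\neq b$. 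The resulting law does not depend on $S$. Mixing over $S$ therefore preserves it, which gives the claim conditioned on $d(\mathbf{c},\mathbf{c}')=d$ alone. (Strictly speaking, the $\eta_i$ are indexed by the random set $S$; I will relabel them $\eta_{(1)},\dots,\eta_{(d)}$ in increasing position order.)

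\emph{Hoeffding step.} Next, $\bar{\eta}(\mathbf{c},\mathbf{c}')=d^{-1}\sum_{j=1}^d\eta_{(j)}$ is an average of $d$ independent variables, each bounded in an interval of length $\eta\subscript{max}-\eta\subscript{min}$, with common mean $\bar{\eta}$. Hoeffding's inequality gives each one-sided tail the bound $\exp(-2d^2\delta^2/(d(\eta\subscript{max}-\eta\subscript{min})^2))$. A union bound over the two tails yields \eqref{eq:hoeffding_uniformity}.

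\emph{Degenerate case.} If $\eta\subscript{max}=\eta\subscript{min}$, the constellation is equicorrelated and the left side is $0$, so the bound holds trivially with the convention that the right side equals $0$. The only step needing care is the conditioning argument: one must verify that conditioning on the total distance keeps the mismatched-position pairs independent and uniform. The decomposition over support sets $S$ handles this.
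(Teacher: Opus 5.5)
Your proposal is correct and follows the same route as the paper. Conditioned on the mismatch locations, each mismatched position carries an ordered symbol pair uniform over the $M(M-1)$ transition types, so the $\eta_i$ are i.i.d.\ with mean $\bar{\eta}$ on $[\eta\subscript{min},\eta\subscript{max}]$, and the two-sided Hoeffding inequality then gives the bound. You spell out what the paper leaves implicit: the decomposition over support sets $S$, the mixing argument that removes the conditioning on $S$, why the ordered average equals the unordered $\bar{\eta}$, and the degenerate equicorrelated case.
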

\begin{proof}
	Conditioned on the mismatch locations, each mismatched position carries an ordered symbol pair drawn uniformly and independently from the $M(M-1)$ transition types, so the $\eta_i$ are i.i.d.\ with mean $\bar{\eta}$ and support $[\eta\subscript{min},\eta\subscript{max}]$, and Eq.~\eqref{eq:hoeffding_uniformity} is Hoeffding's inequality.
\end{proof}
For QPSK ($\eta\subscript{max}-\eta\subscript{min}=1$) at tolerance $\delta=1/3$, the right side of~\eqref{eq:hoeffding_uniformity} equals $0.34$, $0.057$, and $0.0097$ at $d=8$, $16$, and $24$: the concentration argument is effective at the minimum-distance scales of the \ac{fbl} configurations in Sec.~\ref{sec:results} ($d\subscript{min}\superscript{max}$ between $12$ and $192$), while at the shorter distances arising at high \ac{snr} the sandwich~\eqref{eq:delta_sandwich} with a code-specific $\delta$ takes over.

The minimum Hamming distance $d\subscript{min}(\mathcal{C})$ between distinct codewords obeys the following bounds.
\begin{lemma}\label{lem:dmin_bounds}
	Given fixed $(M,n,k,E\subscript{s},\sigma)$, for any $\varepsilon$-bounded decoder that rejects list-output, the minimum Hamming distance $d\subscript{min}(\mathcal{C})$ between any pair of distinct codewords of an $M^k$-sized code $\mathcal{C}$ is always bounded between $[d\subscript{min}\superscript{min}, d\subscript{min}\superscript{max}]$ where
	\begin{align}
		d\subscript{min}\superscript{min}&=\left\lceil\frac{4R^2(\varepsilon)}{\Delta}\right\rceil\label{eq:lower_bound_dmin},\\
		d\subscript{min}\superscript{max}&=\min\left\{d\subscript{min}\superscript{H},\,\left\lfloor\frac{n(M-1)M^{k-1}}{M^k-1}\right\rfloor,\,n-k+1\right\},\label{eq:upper_bound_dmin}
	\end{align}
	with the Hamming-bound term
	\begin{equation}
		d\subscript{min}\superscript{H}\triangleq 2\min\left\{t\in\mathbb{N}\left\vert\sum\limits_{l=0}^t\comb{n}{l}(M-1)^l>M^r\right.\right\},\label{eq:hamming_dmax}
	\end{equation}
	while the second and third terms of Eq.~\eqref{eq:upper_bound_dmin} are the Plotkin and Singleton bounds. The Hamming-bound term typically dominates at moderate code rates, whereas the Plotkin bound is the tightest in the high-distance regime $d\subscript{min}/n>(M-1)/M$ arising at low code rates and low \ac{snr}.
\end{lemma}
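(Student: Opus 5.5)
The plan has two independent halves: a lower bound that comes from the decoder geometry, and an upper bound that comes from classical coding-theoretic limits on any code with $M^k$ codewords of length $n$ over an $M$-ary alphabet.

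\emph{Lower bound.} By the no-list condition, the decision regions $\mathcal{B}_R(\mathbf{x})$ must be pairwise disjoint. Hence $R(\varepsilon)\leqslant D\subscript{min}/2$, i.e., $D\subscript{min}^2\geqslant 4R^2(\varepsilon)$.
\begin{itemize}
\item Take a codeword pair attaining $D\subscript{min}$.
\item Under the uniform-mismatch assumption of Remark~\ref{rem:distance_bounds}, Eq.~\eqref{eq:hamming_to_euclidean} gives $D^2=\Delta\, d$ for this pair. Since $D^2=\Delta d$ is increasing in $d$, the pair with minimum Euclidean distance is also the pair with minimum Hamming distance, so $\Delta\, d\subscript{min}(\mathcal{C})\geqslant 4R^2(\varepsilon)$.
\item Since $d\subscript{min}$ is an integer, taking the ceiling yields Eq.~\eqref{eq:lower_bound_dmin}.
\end{itemize}
If $\delta>0$, the same argument goes through with $\Delta$ replaced by the upper endpoint $2(1-\bar{\eta}+\delta)E\subscript{s}$ of~\eqref{eq:delta_sandwich}, as the remark prescribes.

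\emph{Upper bound.} Each of the three terms in~\eqref{eq:upper_bound_dmin} is a valid upper bound on $d\subscript{min}$ for any code of size $M^k$. The minimum of them is therefore also an upper bound.
\begin{itemize}
\item \textbf{Singleton.} Delete $d\subscript{min}-1$ coordinates. The $M^k$ codewords remain distinct, so $M^k\leqslant M^{n-d\subscript{min}+1}$, which gives $d\subscript{min}\leqslant n-k+1$.
\item \textbf{Plotkin.} Sum the Hamming distances over all ordered pairs of distinct codewords. At each coordinate, the number of ordered pairs that disagree is at most $|\mathcal{C}|^2(1-1/M)$, with equality for balanced symbol counts. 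The total is thus at most $n|\mathcal{C}|^2(M-1)/M$. It is also at least $|\mathcal{C}|(|\mathcal{C}|-1)d\subscript{min}$. Substituting $|\mathcal{C}|=M^k$ gives $d\subscript{min}\leqslant n(M-1)M^{k-1}/(M^k-1)$, and integrality supplies the floor.
\item \textbf{Hamming.} Let $t=\lfloor (d\subscript{min}-1)/2\rfloor$. The Hamming balls of radius $t$ around the codewords are disjoint, so $M^k\sum_{l=0}^{t}\binom{n}{l}(M-1)^l\leqslant M^n$, i.e., $\sum_{l\leqslant t}\binom{n}{l}(M-1)^l\leqslant M^r$. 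Let $t^*$ be the smallest $t$ for which this sum exceeds $M^r$. Then every admissible $t$ satisfies $t\leqslant t^*-1$, so $d\subscript{min}\leqslant 2t+2\leqslant 2t^*=d\subscript{min}\superscript{H}$.
\end{itemize}

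The only delicate point is the off-by-one bookkeeping in the Hamming term. One has to check that the factor $2$ and the strict inequality in~\eqref{eq:hamming_dmax} together cover both parities of $d\subscript{min}$. The bound $d\subscript{min}\leqslant 2t+2$ handles this, because $t=\lfloor (d\subscript{min}-1)/2\rfloor$ gives $d\subscript{min}\leqslant 2t+1$ when $d\subscript{min}$ is odd and $d\subscript{min}=2t+2$ when it is even.

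The closing remarks about which term dominates are qualitative. I would support them only by noting that the Plotkin term is below $n$ roughly when $d\subscript{min}/n$ approaches $(M-1)/M$, which happens at low rate. I would not prove them formally.
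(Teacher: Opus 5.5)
Your proposal is correct and follows the paper's proof: the Plotkin (per-coordinate ordered-pair counting) and Singleton (coordinate deletion) arguments are exactly those of Appendix~\ref{app:dmin_bounds}, and your no-list lower bound and sphere-packing derivation of the Hamming term, including the parity bookkeeping $d\subscript{min}\leqslant 2t+2\leqslant 2t^*$, correctly spell out the parts the paper defers to its conference version. One small point in your $\delta>0$ aside: there the minimum-Euclidean and minimum-Hamming pairs need not coincide, so argue from the minimum-Hamming pair (whose distance is at least $D\subscript{min}\geqslant 2R$) together with the upper endpoint of~\eqref{eq:delta_sandwich}, which is indeed the endpoint you named.
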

\begin{proof}
	See Appendix~\ref{app:dmin_bounds}.
\end{proof}
By Eq.~\eqref{eq:hamming_to_euclidean}, $\sqrt{\Delta\cdot d\subscript{min}\superscript{min}}\leqslant D\subscript{min}\leqslant \sqrt{\Delta\cdot d\subscript{min}\superscript{max}}$.
\subsection{Lower Bound on the Block Confusion Probability}
The monotonicity of $P\subscript{pair}$ w.r.t. $D$ is captured by:
\begin{lemma}\label{lem:monotonicity_Ppair}
	Given fixed $(E\subscript{s},\sigma)$, the pairwise \ac{blcp} $P\subscript{pair}$ is monotonically decreasing in the codeword distance $D=\Vert\mathbf{x}'-\mathbf{x}\Vert$ for all $D\geqslant 2R(\varepsilon)$, and convex if $R(\varepsilon)/\sigma > \sqrt{\ln 3}\,/\,2$.
\end{lemma}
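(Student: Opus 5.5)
The plan is to reduce $P\subscript{pair}$ to a one-dimensional Gaussian smoothing of an even, radially decreasing profile, and then read off both claims from derivatives in the normalized distance $u\triangleq D/\sigma$. By Eq.~\eqref{eq:con_prob_upon_distance} and the alignment $\mathbf{x}'-\mathbf{x}=D\mathbf{e}_1$, write $\mathbf{z}=\mathbf{w}/\sigma$ and condition on its first coordinate $z_1$. The remaining $N-1$ coordinates give a central $\chi^2_{N-1}$ term, so
\begin{equation*}
P\subscript{pair}(D)=h(u)\triangleq\int_{-\rho}^{\rho} g(t)\,\phi(t+u)\,\mathrm{d}t,\qquad g(t)\triangleq F_{\chi^2_{N-1}}(\rho^2-t^2),
\end{equation*}
where $\phi$ is the standard normal density. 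For $N=1$ we take $g\equiv 1$ on $[-\rho,\rho]$. The profile $g$ is even, nonnegative and nonincreasing in $\vert t\vert$. Since $u$ enters only through $\phi(t+u)$, we may differentiate under the integral; all integrands are bounded and the domain is compact.

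\emph{Monotonicity.} Differentiating gives $h'(u)=-\int_{-\rho}^{\rho}g(t)(t+u)\phi(t+u)\,\mathrm{d}t$. For $u\geqslant 2\rho$ we have $t+u\geqslant u-\rho\geqslant\rho>0$ on the whole domain, so the integrand is positive and $h'(u)<0$. Hence $P\subscript{pair}$ is decreasing for $D\geqslant 2R(\varepsilon)$. As a cross-check valid for every $u>0$, one can pair $t$ with $-t$ and use that $s\phi(s)$ is an odd function, which is just Anderson's inequality for the symmetric convex ball. For the statement, however, the pointwise sign argument on $D\geqslant 2R$ suffices.

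\emph{Convexity.} Differentiating twice gives
\begin{equation*}
h''(u)=\int_{-\rho}^{\rho}g(t)\,\psi(t+u)\,\mathrm{d}t,\qquad \psi(s)\triangleq(s^2-1)\phi(s).
\end{equation*}
Because $g$ is even, fold the integral onto $[0,\rho]$:
\begin{equation*}
h''(u)=\int_0^{\rho}g(t)\bigl[\psi(u+t)+\psi(u-t)\bigr]\mathrm{d}t.
\end{equation*}
It therefore suffices to show $\psi(u+t)+\psi(u-t)\geqslant 0$ for all $0\leqslant t\leqslant\rho$ and $u\geqslant 2\rho$. The only possibly negative term is $\psi(u-t)$, and only when $u-t<1$. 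Its argument satisfies $s_1=u-t\geqslant\rho$, while its partner $s_2=u+t=s_1+2t$ lies further out. I would first handle the extremal configuration $u=2\rho$, $t=\rho$, i.e.\ $s_1=\rho$ and $s_2=3\rho$. There the condition reads
\begin{equation*}
(9\rho^2-1)e^{-9\rho^2/2}\geqslant(1-\rho^2)e^{-\rho^2/2},
\end{equation*}
equivalently $(9\rho^2-1)\geqslant(1-\rho^2)e^{4\rho^2}$. The factor $e^{4\rho^2}$ is where the threshold $4\rho^2>\ln 3$, i.e.\ $\rho>\sqrt{\ln 3}/2$, should originate. I then need to argue that this corner is indeed the worst case over the admissible $(u,t)$ region. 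Increasing $u$ at fixed $t$, or decreasing $t$, moves $s_1$ toward or past $1$, where $\psi$ becomes positive, and moves $s_2$ along the tail where $\psi>0$. A monotonicity check of $\psi(s_1+2t)/\vert\psi(s_1)\vert$ on $s_1\in[\rho,1)$ should settle this.

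The main obstacle is exactly this last step: pinning down the sharp sufficient condition. The pairing bound is crude, and the worst case is not obviously at the corner. If the clean threshold $\sqrt{\ln 3}/2$ does not emerge from the pairing, I will try a cruder but valid route instead: bound the negative mass $\int_{u-t<1}g\,\vert\psi\vert$ from above by replacing $g\leqslant 1$ and $(1-s^2)\leqslant 1$. I will compare it against the positive mass from the mirrored region using $\phi(s_2)/\phi(s_1)=e^{-2t u}\geqslant e^{-4\rho^2}$ together with $s_2^2-1\geqslant 9\rho^2-1$. I would then check that the resulting inequality reduces to $e^{4\rho^2}>3$ after the constants are collected.
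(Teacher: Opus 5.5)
Your monotonicity argument is correct and matches the paper's. The convexity part has a genuine gap. The pointwise sufficient condition $\psi(u+t)+\psi(u-t)\geqslant 0$ is strictly stronger than what holds under the hypothesis, and it fails in the range the lemma covers. Your corner inequality $(9\rho^2-1)\geqslant(1-\rho^2)e^{4\rho^2}$ does not reduce to $e^{4\rho^2}>3$. At $\rho^2=\ln 3/4$ its left side is about $1.47$ and its right side about $2.18$, and it only starts to hold near $\rho\approx 0.78$. Concretely, at $\rho=0.55>\sqrt{\ln 3}/2$ with $u=2\rho$ and $t=\rho$, one gets $\psi(1.65)+\psi(0.55)\approx 0.176-0.239<0$. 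So no worst-case analysis can rescue the pointwise route down to the stated threshold. The fallback fails as well. For $t=\rho$ and $u\in(2\rho,1+\rho)$, where $\psi(u-t)<0$ still occurs when $\rho<1$, one has $e^{-2tu}<e^{-4\rho^2}$, the opposite of the inequality you use. Also $u+t\geqslant 3\rho$ fails for $t<\rho$.

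The missing idea is to use the monotone shape of $g$ \emph{before} comparing the two sides. Define $G_u(a)\triangleq\phi'(u+a)-\phi'(u-a)=(u-a)\phi(u-a)-(u+a)\phi(u+a)$. Then $G_u(0)=0$ and $G_u'(a)=\psi(u+a)+\psi(u-a)$, so integrating your folded formula by parts gives
\begin{equation*}
h''(u)=g(\rho)\,G_u(\rho)+\int_0^{\rho}\bigl(-g'(t)\bigr)\,G_u(t)\,\mathrm{d}t ,
\end{equation*}
with $-g'\geqslant 0$. Equivalently, write $g$ as a mixture of indicators of symmetric intervals $[-a,a]$. This is exactly what the paper gets by conditioning on $S$, which yields the kernel $\Phi(m+a)-\Phi(m-a)$.

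It then suffices that $G_u(a)>0$ for $a\in(0,\rho]$ and $u\geqslant 2\rho$, i.e., that $f(a)\triangleq(u-a)e^{2ua}-(u+a)>0$. This integrated kernel is positive where your pointwise one is not: at the point above, $0.55\,\phi(0.55)-1.65\,\phi(1.65)\approx 0.02>0$. The paper closes the argument in three steps. First, $f(0)=0$ and $f'(0)=2u^2-2>0$ because $u\geqslant 2\rho>1$. Second, $a\mapsto e^{2ua}(2u^2-2ua-1)$ is unimodal, so $f$ rises and then at most falls, which reduces everything to $f(\rho)>0$. Third, $\partial_u f(\rho)=e^{2u\rho}\bigl(1+2\rho(u-\rho)\bigr)-1>0$, so the worst case is $u=2\rho$, where $f(\rho)=\rho(e^{4\rho^2}-3)$. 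The $3$ in the threshold is the ratio $(u+a)/(u-a)=3\rho/\rho$ of the first-derivative weights, not a ratio of $\psi$ values.
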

\begin{proof}
	See Appendix~\ref{app:monotonicity_Ppair}.
\end{proof}
This yields a uniform-spectrum estimate of $P\subscript{con}$:
\begin{proposition}
	\label{th:lower_bound_Pcon}
	Given fixed $(M,n,k,E\subscript{s})$, for any $\varepsilon$-bounded decoder that rejects list-output with a well-distributed codebook $\mathcal{X}$ whose average number of minimum-distance neighbors matches the uniform-spectrum estimate $\bar{A}\subscript{min}\approx(M^k/M^n)\comb{n}{d\subscript{min}}(M-1)^{d\subscript{min}}$, the \ac{blcp} $P\subscript{con}$ is approximated from below by
	\begin{equation}\label{eq:lower_bound_Pcon}
		\widetilde{P}\subscript{con}\superscript{LB}\triangleq\comb{n}{d\subscript{min}\superscript{max}}\frac{(M-1)^{d\subscript{min}\superscript{max}}}{M^{n-k}}P\subscript{pair}\left(\sqrt{\Delta\cdot d\subscript{min}\superscript{max}}\right).
	\end{equation}
\end{proposition}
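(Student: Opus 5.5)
The plan is to start from the exact expression~\eqref{eq:overall_con_prob} and drop all terms except those at the minimum distance. Every summand $P(\mathbf{x}\to\mathbf{x}')$ is nonnegative. Restricting the inner sum to $\mathcal{V}(\mathcal{X},\mathbf{x},d\subscript{min})$ therefore gives
\begin{equation*}
P\subscript{con}\geqslant\frac{1}{\vert\mathcal{X}\vert}\sum_{\mathbf{x}\in\mathcal{X}}\vert\mathcal{V}(\mathcal{X},\mathbf{x},d\subscript{min})\vert\,P\subscript{pair}\bigl(\sqrt{\Delta d\subscript{min}}\bigr)=\bar{A}\subscript{min}\,P\subscript{pair}\bigl(\sqrt{\Delta d\subscript{min}}\bigr).
\end{equation*}
Here I use Eq.~\eqref{eq:pairwise_con_prob} and, under Remark~\ref{rem:distance_bounds}, Eq.~\eqref{eq:hamming_to_euclidean}: all minimum-distance neighbours sit at the same Euclidean distance, so $P\subscript{pair}$ factors out of the sum. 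The prefactor is then exactly the average number $\bar{A}\subscript{min}$ of minimum-distance neighbours.

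Next I insert the uniform-spectrum hypothesis. It replaces $\bar{A}\subscript{min}$ by $(M^k/M^n)\binom{n}{d\subscript{min}}(M-1)^{d\subscript{min}}$, which equals $\binom{n}{d\subscript{min}}(M-1)^{d\subscript{min}}/M^{n-k}$. At this point the estimate depends on the unknown $d\subscript{min}$ only through the single function
\begin{equation*}
g(d)\triangleq\binom{n}{d}(M-1)^{d}M^{-(n-k)}\,P\subscript{pair}\bigl(\sqrt{\Delta d}\bigr).
\end{equation*}
The final step is to replace $d\subscript{min}$ by $d\subscript{min}\superscript{max}$ from Lemma~\ref{lem:dmin_bounds}, using $d\subscript{min}\leqslant d\subscript{min}\superscript{max}$. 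The no-list condition gives $D\subscript{min}\geqslant 2R$. Lemma~\ref{lem:monotonicity_Ppair} then shows that $P\subscript{pair}(\sqrt{\Delta d})$ is nonincreasing on the whole admissible range $d\geqslant d\subscript{min}\superscript{min}$. Hence $P\subscript{pair}(\sqrt{\Delta d\subscript{min}})\geqslant P\subscript{pair}(\sqrt{\Delta d\subscript{min}\superscript{max}})$.

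The main obstacle is the combinatorial factor $\binom{n}{d}(M-1)^d$, which is not monotone in $d$. It increases only for $d\lesssim n(M-1)/M$. I would handle this in three steps:
\begin{enumerate*}[label=(\roman*)]
\item Observe that the Plotkin term in~\eqref{eq:upper_bound_dmin} forces $d\subscript{min}\superscript{max}\leqslant\lfloor n(M-1)M^{k-1}/(M^k-1)\rfloor$, which lies essentially at the mode $n(M-1)/M$ of the spectrum.
\item Check the ratio $\binom{n}{d+1}(M-1)/\binom{n}{d}=(n-d)(M-1)/(d+1)$. It is at least $1$ precisely when $d+1\leqslant(n+1)(M-1)/M$. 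So on $[d\subscript{min}\superscript{min},d\subscript{min}\superscript{max}]$ the factor is nondecreasing, up to an at most one-step boundary effect at the Plotkin limit.
\item Conclude that $\binom{n}{d\subscript{min}}(M-1)^{d\subscript{min}}\leqslant\binom{n}{d\subscript{min}\superscript{max}}(M-1)^{d\subscript{min}\superscript{max}}$.
\end{enumerate*}

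This bound runs in the wrong direction for a rigorous lower bound: the neighbour count grows with $d$ while $P\subscript{pair}$ decays. For that reason the statement can only be an approximation, and I would present it that way. The argument is rigorous up to the restriction to minimum-distance terms, and then gives an estimate by evaluating at $d\subscript{min}\superscript{max}$. Here I would argue that the exponential decay of the non-central $\chi^2$ CDF dominates the polynomial-exponential growth of the spectrum. A Chernoff estimate of $F_{\chi^2_N(\Delta d/\sigma^2)}(\rho^2)$ shows that the exponent $-\Delta d/(8\sigma^2)$ outweighs $\ln[(n-d)(M-1)/d]$ per unit $d$ at the SNRs considered. So $g$ is decreasing in $d$, and $g(d\subscript{min}\superscript{max})\leqslant g(d\subscript{min})\leqslant P\subscript{con}$, which yields~\eqref{eq:lower_bound_Pcon}.
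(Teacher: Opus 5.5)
Your argument is correct at the level of rigor the statement allows, but it orders the steps differently from the paper.

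\textbf{The paper's route.} The paper applies Lemma~\ref{lem:monotonicity_Ppair} first. This gives the rigorous inequality $P\subscript{con}\geqslant\bar{A}\subscript{min}\,P\subscript{pair}\bigl(\sqrt{\Delta d\subscript{min}\superscript{max}}\bigr)$. Only then does it substitute the uniform-spectrum neighbour count, evaluated directly at $d\subscript{min}\superscript{max}$. It explicitly calls this an approximation, not a one-sided bound, intended for codebooks that attain $d\subscript{min}\superscript{max}$. So the fact that the count moves in the opposite direction never has to be addressed.

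\textbf{Your route.} You insert the count at the code's actual $d\subscript{min}$, as the hypothesis literally reads, and then have to move both factors of $g(d)$ to $d\subscript{min}\superscript{max}$. Your steps (i)--(iii) correctly show that the count increases on that range. For $n\leqslant M^k-2$, Lemma~\ref{lem:bound_lambda} even removes your boundary hedge, since then $d\subscript{min}\superscript{max}<(M-1)(n+1)/M$. Your route therefore rests on the extra claim that $g$ is decreasing.

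\textbf{What each buys.} That extra claim explains something the paper only observes numerically in Sec.~\ref{subsec:code_validation}: the estimate stays below the exact value for codes that fall short of $d\subscript{min}\superscript{max}$. The paper's route is shorter and confines the non-rigorous content to a single substitution.

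\textbf{The heuristic step.} As written, your support for $g$ decreasing is heuristic, for three reasons:
\begin{enumerate}
\item A Chernoff upper bound on one pairwise term does not control the ratio $P\subscript{pair}\bigl(\sqrt{\Delta(d+1)}\bigr)/P\subscript{pair}\bigl(\sqrt{\Delta d}\bigr)$.
\item The coefficient $1/8$ is not derived.
\item ``At the SNRs considered'' is a restriction that does not appear in the statement.
\end{enumerate}
To make it rigorous you need a quantitative lower bound on the per-unit-$d$ decay of $\ln P\subscript{pair}(\sqrt{\Delta d})$ that beats the combinatorial ratio $(n-d)(M-1)/(d+1)$. One starting point is the identity $\partial_\lambda F_{\chi^2_N(\lambda)}(x)=\tfrac{1}{2}\bigl[F_{\chi^2_{N+2}(\lambda)}(x)-F_{\chi^2_N(\lambda)}(x)\bigr]$. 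Note also that $d\geqslant d\subscript{min}\superscript{min}\geqslant 4R^2/\Delta$ keeps the combinatorial ratio below $n(M-1)\Delta/(4R^2)$, which is only linear in $\Delta/\sigma^2$.
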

\begin{proof}
	See Appendix~\ref{app:lower_bound_Pcon}.
\end{proof}
The uniform-spectrum assumption holds for random codebooks; for the explicit codes of Sec.~\ref{subsec:code_validation}, which fall short of $d\subscript{min}\superscript{max}$, the estimate lies below the exact value by the corresponding margin. Correspondingly, the \ac{blep} satisfies the approximate upper bound $P\subscript{ers}\lesssim\varepsilon-\widetilde{P}\subscript{con}\superscript{LB}$.

\subsection{Upper Bound on the Block Confusion Probability}
An upper bound on the \ac{blcp} follows similarly:
\begin{theorem}\label{th:upper_bound_Pcon}
	Given fixed $(M,k,E\subscript{s})$, for any $\varepsilon$-bounded decoder rejecting list output, $P\subscript{con}$ is upper-bounded by
	\begin{equation}\label{eq:upper_bound_Pcon}
		P\subscript{con}\superscript{UB}\triangleq (M^k-1)P\subscript{pair}\left(\sqrt{\Delta\cdot d\subscript{min}\superscript{min}}\right).
	\end{equation}
\end{theorem}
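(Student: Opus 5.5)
The plan is to start from the exact expression~\eqref{eq:overall_con_prob} and bound every pairwise term by the same worst-case quantity. By~\eqref{eq:pairwise_con_prob}, each term $P(\mathbf{x}\to\mathbf{x}')$ equals $P\subscript{pair}(D(\mathbf{x},\mathbf{x}'))$. For a fixed transmitted codeword $\mathbf{x}$, the inner sum runs over $\vert\mathcal{V}_\Sigma(\mathcal{X},\mathbf{x})\vert=M^k-1$ competitors, because $\mu$ is one-to-one and $\vert\mathcal{X}\vert=\vert\mathcal{C}\vert=M^k$. It therefore suffices to show that
\[
P\subscript{pair}(D(\mathbf{x},\mathbf{x}'))\leqslant P\subscript{pair}\!\left(\sqrt{\Delta\cdot d\subscript{min}\superscript{min}}\right)
\]
for every distinct pair. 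The inner sum is then at most $(M^k-1)P\subscript{pair}(\sqrt{\Delta d\subscript{min}\superscript{min}})$, and averaging over $\mathbf{x}$ with weights $1/\vert\mathcal{X}\vert$ leaves this bound unchanged, which gives~\eqref{eq:upper_bound_Pcon}.

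For the pairwise inequality, I use~\eqref{eq:hamming_to_euclidean}, which is valid under the standing assumption of Remark~\ref{rem:distance_bounds}, to write $D(\mathbf{x},\mathbf{x}')=\sqrt{\Delta\, d(\mathbf{c},\mathbf{c}')}$. By Lemma~\ref{lem:dmin_bounds}, $d(\mathbf{c},\mathbf{c}')\geqslant d\subscript{min}(\mathcal{C})\geqslant d\subscript{min}\superscript{min}$, so $D(\mathbf{x},\mathbf{x}')\geqslant D^\ast\triangleq\sqrt{\Delta d\subscript{min}\superscript{min}}$. The ceiling in~\eqref{eq:lower_bound_dmin} gives $\Delta d\subscript{min}\superscript{min}\geqslant 4R^2$, hence $D^\ast\geqslant 2R(\varepsilon)$. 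Both distances therefore lie in the range $D\geqslant 2R$, where Lemma~\ref{lem:monotonicity_Ppair} states that $P\subscript{pair}$ is monotonically decreasing. This yields $P\subscript{pair}(D(\mathbf{x},\mathbf{x}'))\leqslant P\subscript{pair}(D^\ast)$.

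The one point that needs care is that the monotone region of Lemma~\ref{lem:monotonicity_Ppair} begins only at $2R$. The ceiling construction of $d\subscript{min}\superscript{min}$ (the no-list condition $R\leqslant D\subscript{min}/2$ translated to Hamming distance) guarantees that $D^\ast$ lies in this region, so no separate case is needed. As a fallback, monotonicity for all $D\geqslant0$ also follows from the stochastic ordering of the non-central $\chi^2$ family in its non-centrality parameter $\lambda=D^2/\sigma^2$, via~\eqref{eq:con_prob_upon_distance}. If $\delta>0$, the same argument applies with $\Delta$ replaced by its lower endpoint $2(1-\bar\eta-\delta)E\subscript{s}$ from~\eqref{eq:delta_sandwich}, as the Remark prescribes. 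The remaining steps are a union-type counting of $M^k-1$ terms and the averaging, and both are routine.
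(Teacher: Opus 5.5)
Your proposal is correct and follows the same chain of inequalities as the paper's proof, which the paper defers to its conference version: each of the $M^k-1$ pairwise terms is bounded by $P\subscript{pair}(\sqrt{\Delta\cdot d\subscript{min}\superscript{min}})$ using $d(\mathbf{c},\mathbf{c}')\geqslant d\subscript{min}\superscript{min}$ and the monotonicity of $P\subscript{pair}$ on $D\geqslant 2R$, and averaging over $\mathbf{x}$ preserves the bound. One caveat applies only to your $\delta>0$ aside: deriving $d\subscript{min}\geqslant\lceil 4R^2/\Delta\rceil$ from the no-list condition needs the \emph{upper} endpoint $2(1-\bar{\eta}+\delta)E\subscript{s}$ of~\eqref{eq:delta_sandwich}, so you cannot replace $\Delta$ by the lower endpoint everywhere (it belongs only in the step $D(\mathbf{x},\mathbf{x}')^2\geqslant\Delta\, d(\mathbf{c},\mathbf{c}')$), although $P\subscript{con}\leqslant(M^k-1)P\subscript{pair}(2R)$ follows directly from $D\subscript{min}\geqslant 2R$ for any $\delta$.
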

\begin{proof}
	See~\cite[App.~D]{HZS+2026confusions}; the chain of inequalities carries over with the per-symbol squared distance $E$ therein replaced by $\Delta$ and the noise dimension by $N$. Equality requires an equidistant (simplex) codebook, which exists only for $\vert\mathcal{X}\vert\leqslant N+1$.
\end{proof}
Again, it lower-bounds the \ac{blep} with $P\subscript{ers}\geqslant \varepsilon-P\subscript{con}\superscript{UB}$.

\section{Sensitivity of the Confusion Probability Estimate and Bound}\label{sec:sensitivity_analysis}
We now study the sensitivity of $\widetilde{P}\subscript{con}\superscript{LB}$ and $P\subscript{con}\superscript{UB}$ to the system parameters; $P\subscript{ers}$ follows from $P\subscript{con}$ under fixed $\varepsilon$.
\subsection{Estimate versus Power}
We first analyze the impact of $E\subscript{s}$ on $\widetilde{P}\subscript{con}\superscript{LB}$:
\begin{theorem}
	\label{th:monotonicity_Pcon_LB_regarding_E}
	Given fixed $(M, k, n, \sigma, \varepsilon)$, the estimate $\widetilde{P}\subscript{con}\superscript{LB}$ is a monotonically decreasing and convex function of $E\subscript{s}$.
\end{theorem}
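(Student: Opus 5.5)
The plan is to reduce the statement to Lemma~\ref{lem:monotonicity_Ppair} by a change of variables. First, observe that the combinatorial prefactor $\binom{n}{d\subscript{min}\superscript{max}}(M-1)^{d\subscript{min}\superscript{max}}/M^{n-k}$ in Eq.~\eqref{eq:lower_bound_Pcon} does not depend on $E\subscript{s}$. By Lemma~\ref{lem:dmin_bounds}, $d\subscript{min}\superscript{max}$ is the minimum of the Hamming, Plotkin and Singleton terms, and each of these depends only on $(M,n,k)$. The prefactor is therefore a positive constant, and it suffices to show that $E\subscript{s}\mapsto P\subscript{pair}\bigl(\sqrt{\Delta\, d\subscript{min}\superscript{max}}\bigr)$ is decreasing and convex.

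Second, $\rho=F^{-1}_{\chi_N}(1-\varepsilon)$ and hence $R=\sigma\rho$ are fixed by $(n,\nu,\sigma,\varepsilon)$ through Eq.~\eqref{eq:decision_region_radius}, independently of $E\subscript{s}$. Meanwhile $\Delta=2(1-\bar{\eta})E\subscript{s}$ is linear in $E\subscript{s}$. Write $u\triangleq D^2=\Delta\, d\subscript{min}\superscript{max}=cE\subscript{s}$ with $c=2(1-\bar{\eta})d\subscript{min}\superscript{max}>0$. Since $u$ is a positive affine function of $E\subscript{s}$, monotonicity and convexity in $E\subscript{s}$ are equivalent to the same properties of $g(u)\triangleq P\subscript{pair}(\sqrt{u})$ in $u$.

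Third, transfer monotonicity and convexity from $D$ to $u=D^2$. With $f(D)=P\subscript{pair}(D)$ and $g(u)=f(\sqrt{u})$, the chain rule gives
\begin{equation*}
g'(u)=\frac{f'(\sqrt{u})}{2\sqrt{u}},\qquad g''(u)=\frac{1}{4u}\left(f''(\sqrt{u})-\frac{f'(\sqrt{u})}{\sqrt{u}}\right).
\end{equation*}
Lemma~\ref{lem:monotonicity_Ppair} gives $f'\leqslant0$ for $D\geqslant 2R$, so $g'\leqslant0$. It also gives $f''\geqslant0$ when $R/\sigma>\sqrt{\ln 3}/2$. Combined with $-f'/\sqrt{u}\geqslant0$, this yields $g''\geqslant0$. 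Hence convexity of $f$ in $D$ together with its decrease implies the stronger convexity in $D^2$, and therefore in $E\subscript{s}$.

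The main obstacle is the domain of validity. Lemma~\ref{lem:monotonicity_Ppair} only covers $D\geqslant 2R$, so I must argue that $\sqrt{\Delta\, d\subscript{min}\superscript{max}}\geqslant 2R$ on the relevant range of $E\subscript{s}$. This is exactly the no-list condition, i.e.\ $d\subscript{min}\superscript{max}\geqslant d\subscript{min}\superscript{min}=\lceil 4R^2/\Delta\rceil$, which must hold for any admissible code by Lemma~\ref{lem:dmin_bounds}. The theorem should therefore be read on the feasible set of $E\subscript{s}$. This set is an interval $[4R^2/(2(1-\bar{\eta})d\subscript{min}\superscript{max}),\infty)$, so convexity on it is meaningful. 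The condition $\rho>\sqrt{\ln 3}/2\approx0.52$ is met for every practical $\varepsilon<1$ and $N\geqslant 2$, because $\rho$ is an upper quantile of $\chi_N$. If needed, I would verify this directly from the Poisson-mixture form $F_{\chi^2_N(\lambda)}(x)=\sum_j e^{-\lambda/2}(\lambda/2)^j/j!\,\mathcal{P}(N/2+j,x/2)$. Differentiating term by term gives strict decrease in $\lambda$, since $\mathcal{P}(s,x/2)$ decreases in $s$.
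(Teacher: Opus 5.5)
Your proposal is correct and takes essentially the route the paper intends; the paper's proof only defers to the conference appendix, but the convexity half of Lemma~\ref{lem:monotonicity_Ppair} is evidently there to serve this theorem. The shared argument is that the prefactor and $R$ do not depend on $E\subscript{s}$, and that Lemma~\ref{lem:monotonicity_Ppair} combined with the chain rule through the concave map $E\subscript{s}\mapsto\sqrt{\Delta\, d\subscript{min}\superscript{max}}$ gives decrease and convexity; your formula for $g''$ is precisely the step that makes this rigorous. Your restriction to the feasible range $\sqrt{\Delta\, d\subscript{min}\superscript{max}}\geqslant 2R$ is genuinely needed, since near $E\subscript{s}=0$ the map $E\subscript{s}\mapsto F_{\chi^2_N(\lambda)}(\rho^2)$ is strictly concave whenever $\rho^2>N+2$ (as at $\varepsilon=0.05$), so the unqualified statement would be false.
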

\begin{proof}
	See~\cite[App.~E]{HZS+2026confusions}, substituting as in the proof of Theorem~\ref{th:upper_bound_Pcon}.
\end{proof}

\subsection{Estimate versus Blocklength}
To study $\widetilde{P}\subscript{con}\superscript{LB}$ as a function of the blocklength $n$, we first need properties of $d\subscript{min}\superscript{max}(n)$ as defined in Eq.~\eqref{eq:upper_bound_dmin}.
\begin{lemma}
	\label{lem:monotonicity_dmin_max_regarding_n}
	Given fixed $(M, k)$, the maximum minimum Hamming distance $d\subscript{min}\superscript{max}$ is a monotonically increasing (discrete) function of $n$.
\end{lemma}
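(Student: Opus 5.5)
The plan is to use the fact that a pointwise minimum of nondecreasing functions is nondecreasing. Write Eq.~\eqref{eq:upper_bound_dmin} as $d\subscript{min}\superscript{max}(n)=\min\{d\subscript{min}\superscript{H}(n),\,d\subscript{min}\superscript{P}(n),\,d\subscript{min}\superscript{S}(n)\}$, where $d\subscript{min}\superscript{P}$ is the Plotkin term and $d\subscript{min}\superscript{S}$ the Singleton term. It then suffices to show that each term is nondecreasing in $n$ for fixed $(M,k)$, with $r=n-k$ varying along with $n$. Here ``monotonically increasing'' is understood in the weak (nondecreasing) sense appropriate for a discrete, floor-valued function.

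\textbf{Singleton and Plotkin terms.} The Singleton term $n-k+1$ is strictly increasing in $n$. The Plotkin term is $\lfloor n\,c\rfloor$ with the constant $c=(M-1)M^{k-1}/(M^k-1)>0$, which does not depend on $n$. Since $n\mapsto nc$ is increasing and $\lfloor\cdot\rfloor$ is nondecreasing, their composition is nondecreasing.

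\textbf{Hamming term (main step).} Let $S_n(t)\triangleq\sum_{l=0}^t\binom{n}{l}(M-1)^l$ and $\mathcal{T}_n\triangleq\{t\in\mathbb{N}: S_n(t)>M^{n-k}\}$, so that $d\subscript{min}\superscript{H}(n)=2\min\mathcal{T}_n$. The set $\mathcal{T}_n$ is nonempty for $k\geqslant1$, because $S_n(n)=M^n>M^{n-k}$. I will show $\mathcal{T}_{n+1}\subseteq\mathcal{T}_n$, which immediately gives $\min\mathcal{T}_{n+1}\geqslant\min\mathcal{T}_n$.
\begin{enumerate}
\item Apply Pascal's rule $\binom{n+1}{l}=\binom{n}{l}+\binom{n}{l-1}$. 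This gives
\begin{equation*}
S_{n+1}(t)=S_n(t)+(M-1)S_n(t-1),
\end{equation*}
with the convention $S_n(-1)=0$.
\item Use $S_n(t-1)\leqslant S_n(t)$ to obtain $S_{n+1}(t)\leqslant M\,S_n(t)$.
\item Hence, if $t\in\mathcal{T}_{n+1}$, then $M\,S_n(t)\geqslant S_{n+1}(t)>M^{n+1-k}$. Dividing by $M$ yields $S_n(t)>M^{n-k}$, i.e., $t\in\mathcal{T}_n$.
\end{enumerate}

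I expect the Hamming term to be the only real obstacle. The volume of the Hamming ball and the bound $M^r$ on the right-hand side both grow with $n$, so monotonicity of their comparison is not obvious. The Pascal-recursion inequality $S_{n+1}\leqslant M S_n$ resolves it, because the right-hand side grows by exactly the factor $M$ when $n$ increases by one. Combining the three nondecreasing terms through the minimum completes the argument.
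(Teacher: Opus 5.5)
Your proposal is correct and follows the paper's proof. Both show that each of the Hamming, Plotkin and Singleton terms is nondecreasing in $n$ and conclude that their minimum is too; the paper defers the Pascal's-rule step for the Hamming term to its conference version, and you carry it out explicitly via $S_{n+1}(t)\leqslant M\,S_n(t)$, which gives $\mathcal{T}_{n+1}\subseteq\mathcal{T}_n$.
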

\begin{proof}
	The Hamming-bound term $d\subscript{min}\superscript{H}$ is nondecreasing in $n$ by Pascal's rule~\cite[App.~F]{HZS+2026confusions}, and so are the Plotkin and Singleton terms of Eq.~\eqref{eq:upper_bound_dmin}; hence their minimum $d\subscript{min}\superscript{max}$ is nondecreasing.
\end{proof}

\begin{lemma}
	\label{lem:bound_dmin_max_regarding_n}
	$\forall n\in[k, M^k-1]$, $\exists~\lambda\leqslant\frac{2(M-1)}{M}+\frac{1}{n}$ such that $d\subscript{min}\superscript{max}(n)\leqslant\lambda n$.
	Equality is attained only at $n=M^k-1$, the Hamming-bound-achieving (perfect code) blocklength.
\end{lemma}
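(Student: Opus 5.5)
The plan is to bound $d\subscript{min}\superscript{max}(n)$ by the Plotkin term of Eq.~\eqref{eq:upper_bound_dmin}, since a minimum is at most each of its arguments, and then check how tight this is.

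\textbf{Step 1 (Plotkin estimate).} For $k\geqslant1$ we have $M^{k-1}/(M^k-1)=\frac{1}{M}\cdot\frac{M^k}{M^k-1}$. Hence
\begin{equation*}
	\left\lfloor\frac{n(M-1)M^{k-1}}{M^k-1}\right\rfloor\leqslant\frac{(M-1)n}{M}\cdot\frac{M^k}{M^k-1}
	=\frac{(M-1)n}{M}+\frac{(M-1)n}{M(M^k-1)}.
\end{equation*}
For $n\leqslant M^k-1$ the second term is at most $(M-1)/M<1$. We therefore get $d\subscript{min}\superscript{max}(n)\leqslant\frac{M-1}{M}n+1=\bigl(\frac{M-1}{M}+\frac1n\bigr)n$. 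Setting $\lambda\triangleq d\subscript{min}\superscript{max}(n)/n$ gives $\lambda\leqslant\frac{M-1}{M}+\frac1n\leqslant\frac{2(M-1)}{M}+\frac1n$, which is the stated bound with slack. This is the reason the range $n\in[k,M^k-1]$ appears.

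\textbf{Step 2 (Hamming term).} If the stated constant $2(M-1)/M$ is meant to come from the Hamming term, I would also bound $d\subscript{min}\superscript{H}$. Let $t^*$ be the minimal $t$ in Eq.~\eqref{eq:hamming_dmax}. The sum $\sum_{l\leqslant t}\binom{n}{l}(M-1)^l$ reaches $M^n$ at $t=n$ and already exceeds $M^{n-k}$ once $t$ is around the point where the binomial mass concentrates, namely $l\approx\frac{M-1}{M}n$. This gives $t^*\leqslant\frac{M-1}{M}n+\frac12$ roughly, so $d\subscript{min}\superscript{H}=2t^*\leqslant\frac{2(M-1)}{M}n+1$. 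Making the ``$+1$'' rigorous is the main obstacle. I would handle it with a median argument for $\mathrm{Bin}(n,(M-1)/M)$: at $t=\lceil\frac{M-1}{M}n\rceil$ the partial sum is at least $M^n/2\geqslant M^{n-k}$ whenever $k\geqslant1$, with strict inequality when $M\geqslant2$ and $k\geqslant1$. Either route alone suffices, since $d\subscript{min}\superscript{max}$ is a minimum.

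\textbf{Step 3 (equality claim).} The statement that equality is attained only at $n=M^k-1$ I would treat as a tightness remark. Equality requires the Plotkin floor to retain the full additive correction, which needs $n(M-1)/(M(M^k-1))$ to reach its maximum over the admissible range. That happens only at the endpoint $n=M^k-1$, where $\frac{n(M-1)M^{k-1}}{M^k-1}=(M-1)M^{k-1}$ is an integer. This is the simplex/Hamming-dual length the statement refers to. For every smaller $n$ the correction term is strictly smaller, so the inequality is strict. I would verify this endpoint case by direct substitution.
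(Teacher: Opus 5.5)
The inequality part of your proposal is correct. The equality clause is not established, and Step~3 cannot establish it.

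\textbf{Inequality (Step~1).} Your Step~1 proves the inequality by a different route from the paper. The paper bounds the Hamming term $d\subscript{min}\superscript{H}=2t^*$ through the $M$-ary entropy function, which is where the factor $2$ and the value $\frac{M-1}{M}$ (the maximizer of $H_M$) come from. It then passes to $d\subscript{min}\superscript{max}\leqslant d\subscript{min}\superscript{H}$. You use the Plotkin term instead. That route is elementary and gives the sharper slope $\frac{M-1}{M}$; it is Lemma~\ref{lem:bound_lambda} carried to the endpoint $n=M^k-1$.

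Two side remarks:
\begin{enumerate}
\item With the loose constant $\frac{2(M-1)}{M}$, your argument needs no range restriction. Since $\frac{M^k}{M^k-1}\leqslant 2$, you get $d\subscript{min}\superscript{max}(n)\leqslant\frac{2(M-1)}{M}n$ for every $n$. So, contrary to your remark, the Plotkin route does not explain the range $[k,M^k-1]$.
\item Step~2 is not needed, which is fortunate. The median argument only gives $d\subscript{min}\superscript{H}\leqslant 2\lceil\frac{M-1}{M}n\rceil$, and this can exceed $\frac{2(M-1)}{M}n+1$ when $M\geqslant3$ (e.g., $M=4$, $n=3$: $6>5.5$).
\end{enumerate}

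\textbf{Equality (Step~3).} This is the genuine gap. What you characterize is when the intermediate Plotkin estimate $d\subscript{min}\superscript{max}(n)\leqslant\frac{M-1}{M}(n+1)$ is exact. That is a different inequality from the stated one, and you recast the statement's ``perfect code'' blocklength as the simplex length to fit it.

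With your witness $\lambda=d\subscript{min}\superscript{max}(n)/n$, your own Step~1 gives $\lambda\leqslant\frac{M-1}{M}+\frac1n$. This is a full $\frac{M-1}{M}$ below the cap $\frac{2(M-1)}{M}+\frac1n$ for every $n$, including $n=M^k-1$. The endpoint substitution you plan would therefore show strict inequality, and $d\subscript{min}\superscript{max}(n)$ never reaches $\bigl(\frac{2(M-1)}{M}+\frac1n\bigr)n$.

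Because $\lambda$ is only existentially quantified (one could always take it equal to the cap), the clause has content only for the specific slope produced by the entropy route the paper cites. Combine the Hamming-bound condition $\sum_{l\leqslant t}\binom{n}{l}(M-1)^l\leqslant M^{n-k}$ with the type bound $\binom{n}{t}(M-1)^t\geqslant M^{nH_M(t/n)}/(n+1)$. This gives $H_M(t/n)\leqslant 1-\frac{k-\log_M(n+1)}{n}$. The right side is at most $1$ exactly when $n\leqslant M^k-1$, which is the origin of the range. It equals $1$ only at $n=M^k-1$, and only there do $H_M^{-1}$ and the resulting $\lambda$ reach their maxima.

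To prove the statement as intended, you need that entropy-derived $\lambda$ as the witness. The Plotkin route delivers the inequality, but because it is strictly stronger it can never produce the equality case.
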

\begin{proof}
	The bound was proven for the Hamming-bound term $d\subscript{min}\superscript{H}$ in~\cite[App.~G]{HZS+2026confusions} via the $M$-ary entropy function; since $d\subscript{min}\superscript{max}\leqslant d\subscript{min}\superscript{H}$ by Eq.~\eqref{eq:upper_bound_dmin}, the claim follows.
\end{proof}

\begin{lemma}\label{lem:bound_lambda}
	For all $n\in[k, M^k-2]$, $d\subscript{min}\superscript{max}(n)<\frac{(M-1)(n+1)}{M}$.
\end{lemma}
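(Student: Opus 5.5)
The plan is to bypass the Hamming-bound term and use the Plotkin term of Eq.~\eqref{eq:upper_bound_dmin} directly, since $d\subscript{min}\superscript{max}$ is a minimum of three quantities and is therefore no larger than any one of them. The Plotkin term has an explicit rational form in $n$, so the claimed inequality should reduce to an elementary comparison in $n$ and $M^k$, with no need for the entropy-function argument behind Lemma~\ref{lem:bound_dmin_max_regarding_n}.

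The first step is to drop the floor, giving
\begin{equation*}
d\subscript{min}\superscript{max}(n)\leqslant\left\lfloor\frac{n(M-1)M^{k-1}}{M^k-1}\right\rfloor\leqslant\frac{n(M-1)M^{k-1}}{M^k-1}.
\end{equation*}
It then suffices to show the strict inequality
\begin{equation*}
\frac{n(M-1)M^{k-1}}{M^k-1}<\frac{(M-1)(n+1)}{M}.
\end{equation*}
Next, divide both sides by $(M-1)/M>0$; this is valid since $M\geqslant2$. Then multiply by $M^k-1>0$. The inequality becomes
\begin{equation*}
nM^k<(n+1)(M^k-1)=nM^k+M^k-n-1,
\end{equation*}
which is equivalent to $n<M^k-1$. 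This holds precisely on the stated range $n\leqslant M^k-2$, and the lower end $n\geqslant k$ plays no role.

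I expect no genuine obstacle; the only care needed is bookkeeping. One point is that strictness comes from $n\leqslant M^k-2$ and not from the floor. Another is that at $n=M^k-1$ the argument degenerates to equality, which matches the boundary case singled out in Lemma~\ref{lem:bound_dmin_max_regarding_n}. Finally, the proof should state explicitly that $k\geqslant1$, so that $M^k-1>0$.
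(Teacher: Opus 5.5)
Your proposal is correct and follows the paper's proof: bound $d\subscript{min}\superscript{max}(n)$ by the unfloored Plotkin term and observe that $\frac{nM^k}{M^k-1}=n+\frac{n}{M^k-1}<n+1$ exactly when $n\leqslant M^k-2$. Your cross-multiplied form $nM^k<(n+1)(M^k-1)\iff n<M^k-1$ is the same inequality written out in full.
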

\begin{proof}
	See Appendix~\ref{app:bound_lambda}.
\end{proof}

\begin{remark}\label{rem:lemma_comparison}
	The Plotkin-based bound of Lemma~\ref{lem:bound_lambda} is strictly tighter throughout its validity range $n\in[k,M^k-2]$: its leading term is $\frac{M-1}{M}n$, versus $\frac{2(M-1)}{M}n$ for the entropy-based bound of Lemma~\ref{lem:bound_dmin_max_regarding_n}, roughly a factor of two. Lemma~\ref{lem:bound_dmin_max_regarding_n} is nevertheless retained because it characterizes the equality case at the perfect-code blocklength $n=M^k-1$, which lies outside the range of Lemma~\ref{lem:bound_lambda}. Lemma~\ref{lem:bound_lambda}, in turn, supplies the strict prefactor inequality $\zeta(n)<1$ used in Theorem~\ref{th:monotonicity_PconLB_regarding_n} below.
\end{remark}

Both $d\subscript{min}\superscript{max}(n)$ and $d\subscript{min}\superscript{min}(n)$ are nondecreasing integer-valued functions of $n$; for such a function, we call each maximal run of consecutive blocklengths on which it is constant a \emph{consistent interval} of that function, and say the function \emph{jumps} at the boundaries between consecutive intervals. The behavior of both the estimate $\widetilde{P}\subscript{con}\superscript{LB}$ and the bound $P\subscript{con}\superscript{UB}$ over $n$ is governed by the interplay of two effects: the $\varepsilon$-calibrated decision radius grows with the blocklength, and the relevant Hamming distance occasionally jumps. The first effect is captured by the following monotonicity property of the pairwise term.

\begin{lemma}\label{lem:Ppair_dimension_monotone}
	Fix $D>0$, $\varepsilon\in(0,1)$, and $\sigma>0$, and let the decision radius be calibrated to $\varepsilon$ at every blocklength per Eq.~\eqref{eq:decision_region_radius}. Then the pairwise confusion probability at fixed Euclidean distance is strictly increasing in the blocklength: $P\subscript{pair}^{(n+1)}(D)>P\subscript{pair}^{(n)}(D)$ for all $n$.
\end{lemma}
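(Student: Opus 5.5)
The approach is to reduce the claim to a one-dimensional statement about central $\chi^2$ laws by means of the Poisson-mixture representation of the non-central $\chi^2$ \ac{cdf}. Going from $n$ to $n+1$ raises the noise dimension from $N$ to $N'=N+\nu$, and the calibrated radius from $\rho_N=F^{-1}_{\chi_N}(1-\varepsilon)$ to $\rho_{N'}$. Write $\lambda=D^2/\sigma^2$, which does not depend on $N$, and $q_N\triangleq\rho_N^2$, the $(1-\varepsilon)$-quantile of $\chi^2_N$. Then Eq.~\eqref{eq:con_prob_upon_distance} becomes
\begin{equation*}
P\subscript{pair}^{(n)}(D)=\sum_{j\geqslant0}e^{-\lambda/2}\frac{(\lambda/2)^j}{j!}\,F_{\chi^2_{N+2j}}(q_N).
\end{equation*}
The Poisson weights do not depend on $N$, and the $j=0$ term equals $1-\varepsilon$ in every dimension. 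It therefore suffices to show, for each $j\geqslant1$, that
\begin{equation*}
F_{\chi^2_{N'+2j}}(q_{N'})>F_{\chi^2_{N+2j}}(q_N).
\end{equation*}
Strictness of the lemma then follows because $\lambda>0$ gives positive weight to $j=1$.

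Next I rewrite the $j$-th term. Let $V\sim\chi^2_{2j}$ be independent of the central variable, so that $\chi^2_{N+2j}\overset{d}{=}\chi^2_N+V$. Conditioning on $V=v$ gives
\begin{equation*}
F_{\chi^2_{N+2j}}(q_N)=\mathbb{E}_V\bigl[F_{\chi^2_N}(q_N-V)\bigr]=1-\varepsilon-\mathbb{E}_V\bigl[F_{\chi^2_N}(q_N)-F_{\chi^2_N}(q_N-V)\bigr].
\end{equation*}
The law of $V$ is the same in both dimensions. Hence it is enough to show a pointwise statement for every fixed $v>0$. The statement is that the probability mass of $\chi^2_N$ in the window $(q_N-v,\,q_N]$, which ends at the fixed quantile level $1-\varepsilon$, strictly decreases when $N$ is replaced by $N'$. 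Equivalently, one can use the incomplete-gamma recursion to write the difference as $\sum_{i<j}x^{s+i}e^{-x}/\Gamma(s+i+1)$, with $s=N/2$ and $x=q_N/2$. I prefer the window formulation because it avoids tracking how the quantile $x$ depends on $s$.

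The key tool, and the step I expect to be the main obstacle, is the dispersive ordering of the gamma family. I would invoke the known result that $\mathrm{Gamma}(a)\leqslant_{\mathrm{disp}}\mathrm{Gamma}(b)$ for $a<b$, applied to $\chi^2_N=2\,\mathrm{Gamma}(N/2)$. Here $X\leqslant_{\mathrm{disp}}Y$ means that $F_Y^{-1}(u)-F_X^{-1}(u)$ is nondecreasing in $u$, equivalently $F_Y^{-1}(u_2)-F_Y^{-1}(u_1)\geqslant F_X^{-1}(u_2)-F_X^{-1}(u_1)$ for $u_1<u_2$. Set $u_2=1-\varepsilon$. For $X=\chi^2_N$, let $u_1^X$ be the level at which $F_X^{-1}(u_1^X)=q_N-v$, so the $X$-window has mass $u_2-u_1^X$ and quantile-length $v$. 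Suppose the $Y$-window, $Y=\chi^2_{N'}$, had mass at least $u_2-u_1^X$. Then its quantile-length would be at least the $X$ quantile-length $v$, with strict inequality under strict dispersion. That contradicts a $Y$-window of length exactly $v$, so $F_Y(q_{N'})-F_Y(q_{N'}-v)<F_X(q_N)-F_X(q_N-v)$.

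To make this step self-contained rather than citing it, I would prove the strict ordering directly. It is equivalent to $f_Y(F_Y^{-1}(u))<f_X(F_X^{-1}(u))$ for all $u$, and I would establish this through the log-concavity of the gamma densities with shape $\geqslant1$ together with the fact that the ratio $f_{N'}/f_N$ is increasing. The small-shape cases ($N=1,2$ with $\nu=1$) may need a separate check, and I would handle them by a direct computation.
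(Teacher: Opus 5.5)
Your argument is correct in its main line, but it takes a genuinely different route from the paper.

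\textbf{Comparison with the paper's proof.} The paper keeps the non-central law $T\sim\chi^2_N(\lambda)$ intact and writes the extra $\nu$ dimensions as an independent $Y\sim\chi^2_\nu$. It then combines the calibration identity $\mathbb{E}_Y[F_U(b-Y)]=1-\varepsilon=F_U(a)$ with the strict convexity of $\varphi=F_T\circ F_U^{-1}$ and Jensen's inequality. That convexity is just the monotone likelihood ratio of $f_T/f_U$, read off the Poisson mixture.

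You do the dual step. You average over the non-centrality part and compare the central laws $\chi^2_N$ and $\chi^2_{N+\nu}$ through their dispersive ordering. You do not need to split by $j$: $\chi^2_N(\lambda)\overset{d}{=}\chi^2_N+V_\lambda$, where $V_\lambda$ is a Poisson$(\lambda/2)$-mixture of $\chi^2_{2j}$ laws whose distribution does not depend on $N$. Your window argument therefore applies in one step.

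The paper's route is self-contained and uniform in $N$: it needs no log-concavity and no special case. Yours isolates a property of the central law alone, so it yields the monotonicity for any independent nonnegative perturbation whose law is independent of $n$. The price is importing the dispersive-ordering theorem.

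\textbf{Two loose ends.} Write $W_N(v)\triangleq F_{\chi^2_N}(q_N)-F_{\chi^2_N}(q_N-v)$ for the window mass.

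\emph{First}, your claim that the window mass decreases strictly for every $v>0$ is false. For $v\geqslant q_{N'}$ both windows cover all the mass below their quantiles, so $W_N(v)=W_{N'}(v)=1-\varepsilon$. Also, your level $u_1^X$ does not exist once $v\geqslant q_N$. This is harmless, and it actually removes your strictness worry. On $q_N\leqslant v<q_{N'}$ you have $W_N(v)=1-\varepsilon>W_{N'}(v)$ trivially, and $V$ puts positive mass on this interval. So the \emph{non-strict} dispersive order, which gives $W_{N'}\leqslant W_N$ on $(0,q_N)$, already yields the strict final inequality. You never need strict dispersion.

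\emph{Second}, the self-contained justification you sketch cannot work as stated. Log-concavity together with an increasing density ratio does not imply dispersive ordering. For example, the uniform density on $[0,1]$ and the density $2t$ on $[0,1]$ are both log-concave and likelihood-ratio ordered. Yet their quantile difference $\sqrt{u}-u$ is not monotone.

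What does work is the convolution structure. For $N\geqslant2$, $\chi^2_N$ is log-concave and $\chi^2_{N+\nu}\overset{d}{=}\chi^2_N+\chi^2_\nu$. A log-concave variable is dispersively smaller than its sum with any independent variable, which gives the ordering. The case $N=1$ arises only for $\nu=n=1$. There, compare hazard rates: $\chi^2_1$ has a decreasing hazard rate that stays strictly above $1/2$, while $\chi^2_2$ has constant hazard $1/2$. This gives $f_{\chi^2_2}(F^{-1}_{\chi^2_2}(u))=(1-u)/2<f_{\chi^2_1}(F^{-1}_{\chi^2_1}(u))$, which is the density-quantile criterion for $\chi^2_1\leqslant_{\mathrm{disp}}\chi^2_2$.
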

\begin{proof}
	See Appendix~\ref{app:Ppair_dimension_monotone}.
\end{proof}

\begin{theorem}
	\label{th:monotonicity_PconLB_regarding_n}
	Given fixed $(M, k, E\subscript{s}, \sigma, \varepsilon)$ and writing $d=d\subscript{min}\superscript{max}(n)$, the estimate evolves over the blocklength as follows.
	\begin{enumerate}
		\item Whenever $d\subscript{min}\superscript{max}(n+1)=d\subscript{min}\superscript{max}(n)$ (within a consistent interval), the step ratio factorizes exactly as
		\begin{equation}\label{eq:PconLB_interval_ratio}
			\frac{\widetilde{P}\subscript{con}\superscript{LB}(n+1)}{\widetilde{P}\subscript{con}\superscript{LB}(n)}
			=\zeta(n)\cdot
			\frac{P\subscript{pair}^{(n+1)}\bigl(\sqrt{\Delta d}\bigr)}{P\subscript{pair}^{(n)}\bigl(\sqrt{\Delta d}\bigr)},
		\end{equation}
		where the combinatorial factor $\zeta(n)\triangleq\frac{n+1}{(n+1-d)M}$ satisfies $\zeta(n)<1$ by Lemma~\ref{lem:bound_lambda} and the geometric factor exceeds one by Lemma~\ref{lem:Ppair_dimension_monotone}.
		\item Whenever $d\subscript{min}\superscript{max}$ jumps from $d$ to $d'>d$ at a boundary, the step ratio equals
		\begin{equation}\label{eq:PconLB_jump_ratio}
			\frac{\comb{n+1}{d'}(M-1)^{d'}}{\comb{n}{d}(M-1)^{d}\,M}\cdot
			\frac{P\subscript{pair}^{(n+1)}\bigl(\sqrt{\Delta d'}\bigr)}{P\subscript{pair}^{(n)}\bigl(\sqrt{\Delta d}\bigr)},
		\end{equation}
		in which the combinatorial prefactor grows at most polynomially in $n$ while the pairwise term collapses hyper-exponentially with the increased distance by Lemma~\ref{lem:monotonicity_Ppair}.
	\end{enumerate}
\end{theorem}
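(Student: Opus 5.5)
The argument works directly from the definition~\eqref{eq:lower_bound_Pcon}. Write
\[
\widetilde{P}\subscript{con}\superscript{LB}(n)=\binom{n}{d}\frac{(M-1)^{d}}{M^{n-k}}P\subscript{pair}^{(n)}\bigl(\sqrt{\Delta d}\bigr),
\]
where $d=d\subscript{min}\superscript{max}(n)$. The superscript $(n)$ marks that both the decision radius $R(\varepsilon)=\sigma F^{-1}_{\chi_N}(1-\varepsilon)$ and the non-central $\chi^2$ law in~\eqref{eq:con_prob_upon_distance} depend on $N=\nu n$. Every step ratio is then a quotient of two such expressions. The combinatorial and geometric parts separate multiplicatively, and the task reduces to identifying each factor and bounding it with a result already in hand.

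For item~1, $d$ is the same at $n$ and $n+1$. The ratio of the prefactors is
\[
\frac{\binom{n+1}{d}}{\binom{n}{d}}\cdot\frac{M^{n-k}}{M^{n+1-k}}=\frac{n+1}{n+1-d}\cdot\frac1M=\zeta(n).
\]
The $(M-1)^d$ terms cancel, and what remains is exactly the pairwise ratio. This gives~\eqref{eq:PconLB_interval_ratio}.

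To show $\zeta(n)<1$, the condition $(n+1)<M(n+1-d)$ is equivalent to $d<\frac{(M-1)(n+1)}{M}$, which is Lemma~\ref{lem:bound_lambda} for $n\in[k,M^k-2]$. The step from $n$ to $n+1$ only requires the lemma at $n$, and $n+1\le M^k-1$ stays within the stated range.

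The geometric factor exceeds one by Lemma~\ref{lem:Ppair_dimension_monotone}, applied with fixed $D=\sqrt{\Delta d}$. This requires that $\Delta$ does not depend on $n$. That holds because $\Delta=2(1-\bar\eta)E\subscript{s}$ is a property of the constellation alone.

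For item~2, the same division with $d$ at $n$ and $d'$ at $n+1$ gives the prefactor
\[
\frac{\binom{n+1}{d'}(M-1)^{d'}}{\binom{n}{d}(M-1)^{d}M},
\]
multiplied by $P\subscript{pair}^{(n+1)}(\sqrt{\Delta d'})/P\subscript{pair}^{(n)}(\sqrt{\Delta d})$. This is~\eqref{eq:PconLB_jump_ratio}, and the identity itself is pure algebra.

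The qualitative claims need two further estimates.
\begin{itemize}
\item \emph{Prefactor.} The jump $d'-d$ is bounded, since $d\subscript{min}\superscript{H}$ is even and moves in steps of two, and the Plotkin and Singleton terms move by at most one or two per unit increase of $n$. Hence $\binom{n+1}{d'}/\binom{n}{d}\le (n+1)^{d'-d}$ up to constants, so the prefactor is polynomial in $n$ of bounded degree.
\item \emph{Pairwise term.} I would split the ratio as
\[
\frac{P\subscript{pair}^{(n+1)}(\sqrt{\Delta d'})}{P\subscript{pair}^{(n+1)}(\sqrt{\Delta d})}\cdot\frac{P\subscript{pair}^{(n+1)}(\sqrt{\Delta d})}{P\subscript{pair}^{(n)}(\sqrt{\Delta d})}.
\]
The first quotient is below one by Lemma~\ref{lem:monotonicity_Ppair}, because $\sqrt{\Delta d}\ge 2R$ by Lemma~\ref{lem:dmin_bounds}. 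The second is the mild growth from Lemma~\ref{lem:Ppair_dimension_monotone}.
\end{itemize}

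The main obstacle is making ``collapses hyper-exponentially'' precise. Monotonicity alone gives only a ratio below one. For a rate, I would use the Chernoff form of the non-central $\chi^2$ lower tail,
\[
P\subscript{pair}(D)\le\exp\bigl(-(D-R)^2/(2\sigma^2)\bigr)\quad\text{for } D>R,
\]
which follows from Gaussian concentration of $\Vert\mathbf{w}/\sigma-(D/\sigma)\mathbf{e}_1\Vert$, a 1-Lipschitz function whose mean is at least $D/\sigma$. Raising $D^2$ from $\Delta d$ to $\Delta d'$ therefore multiplies the pairwise term by roughly $\exp\bigl(-\Theta(\Delta(d'-d)/\sigma^2)\bigr)$, which overwhelms the polynomial prefactor. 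I would state this part as the asymptotic, Gaussian-type decay in $D$ that the theorem's wording suggests, not as an exact inequality.
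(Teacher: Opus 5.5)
Your proposal is correct and follows essentially the paper's own proof. It substitutes Eq.~\eqref{eq:lower_bound_Pcon} at $n$ and $n+1$, uses $\binom{n+1}{d}/\binom{n}{d}=(n+1)/(n+1-d)$ with Lemma~\ref{lem:bound_lambda} for $\zeta(n)<1$ and Lemma~\ref{lem:Ppair_dimension_monotone} for the geometric factor, and at a jump bounds the prefactor by $[(n+1)(M-1)]^{d'-d}\zeta(n)$. Your observation that $d'-d\leqslant 2$ and your Gaussian-concentration estimate even supply details that the paper takes from the numerics and from a loose appeal to Lemma~\ref{lem:monotonicity_Ppair}.

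One small slip: feasibility at $n$ gives $\sqrt{\Delta d}\geqslant 2R^{(n)}$, not $2R^{(n+1)}$. You can fix this by splitting the pairwise ratio the other way round, as $\bigl[P\subscript{pair}^{(n+1)}(\sqrt{\Delta d'})/P\subscript{pair}^{(n)}(\sqrt{\Delta d'})\bigr]\cdot\bigl[P\subscript{pair}^{(n)}(\sqrt{\Delta d'})/P\subscript{pair}^{(n)}(\sqrt{\Delta d})\bigr]$. Alternatively, note that the derivative argument of Appendix~\ref{app:monotonicity_Ppair} gives monotonicity for every $D>0$, since $\vert m-a\vert<m+a$ whenever $m,a>0$.
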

\begin{proof}
	See Appendix~\ref{app:monotonicity_PconLB_regarding_n}.
\end{proof}
In all configurations of Sec.~\ref{sec:results}, the within-interval steps~\eqref{eq:PconLB_interval_ratio} stayed within a factor of four in either direction, while every boundary step~\eqref{eq:PconLB_jump_ratio} dropped the estimate by three to eight orders of magnitude, so that across multiple intervals the estimate decays by hundreds of orders of magnitude, as Fig.~\ref{fig:p2_bounds_regarding_n} shows. The conference precursor asserted a within-interval decrease; the exact factorization~\eqref{eq:PconLB_interval_ratio} corrects this: neither direction holds universally, and the decay is driven entirely by the distance jumps.
% Commented out: LB envelope theorem is less informative than UB envelope since P_con^LB is already monotonically decreasing for large n.
% \begin{theorem}
% 	\label{th:Pcon_LB_envelope_n}
% 	Given fixed $(\varepsilon, M, k, E\subscript{s}, \sigma)$, the value of $\widetilde{P}\subscript{con}\superscript{LB}$ at the boundary $\bar{n}_j$ of the $j$-th $d\subscript{min}\superscript{max}$-consistent interval is
% 	\begin{equation}\label{eq:envelope_LB}
% 		\widetilde{P}\subscript{con}\superscript{LB}(\bar{n}_j) = \Lambda_j \cdot F_{\chi^2_{\bar{n}_j}(\Delta\subscript{max} d_j/\sigma^2)}(\rho_j^2),
% 	\end{equation}
% 	where $d_j = d\subscript{min}\superscript{max}(\bar{n}_j)$, $\Lambda_j = \binom{\bar{n}_j}{d_j}(M-1)^{d_j}/M^{\bar{n}_j-k}$, $\rho_j = F^{-1}_{\chi_{\bar{n}_j}}(1-\varepsilon)$, and $F_{\chi^2_n(\lambda)}(\cdot)$ denotes the \ac{cdf} of the non-central chi-squared distribution. These boundary values trace the piecewise structure of $\widetilde{P}\subscript{con}\superscript{LB}$, with the first-order difference of $\widetilde{P}\subscript{con}\superscript{LB}$ exhibiting a discontinuity at each $\bar{n}_j$.
% \end{theorem}
% \begin{proof}
% 	See Appendix~\ref{app:Pcon_LB_envelope_n}.
% \end{proof}

\subsection{Upper Bound versus Power}
We first consider $d\subscript{min}\superscript{min}$, which governs the upper bound $P\subscript{con}\superscript{UB}$:
\begin{lemma}
	For any fixed $(\sigma, n, \varepsilon)$, $d\subscript{min}\superscript{min}$ is non-increasing in $E\subscript{s}$ and decreases by unit steps at discrete points.
\end{lemma}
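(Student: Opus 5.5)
The plan is to write $d\subscript{min}\superscript{min}$ explicitly as a function of $E\subscript{s}$ through Eq.~\eqref{eq:lower_bound_dmin} and read off both claims. For fixed $(\sigma,n,\varepsilon)$, the geometric dimension $N=\nu n$ is fixed. By Eq.~\eqref{eq:decision_region_radius}, the radius $R(\varepsilon)=\sigma F^{-1}_{\chi_N}(1-\varepsilon)$ therefore depends only on $(\sigma,N,\varepsilon)$, not on $E\subscript{s}$. The only $E\subscript{s}$-dependence then enters through $\Delta=2(1-\bar{\eta})E\subscript{s}$. Here $\bar{\eta}$ is a property of the normalized constellation geometry (e.g., $-1/(M-1)$ for $M$-PSK), so it is invariant under scaling of the per-symbol energy. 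Hence
\begin{equation*}
	d\subscript{min}\superscript{min}(E\subscript{s})=\left\lceil g(E\subscript{s})\right\rceil,\qquad g(E\subscript{s})\triangleq\frac{4R^2(\varepsilon)}{2(1-\bar{\eta})E\subscript{s}}=\frac{c}{E\subscript{s}},
\end{equation*}
with $c\triangleq 2\sigma^2\bigl[F^{-1}_{\chi_N}(1-\varepsilon)\bigr]^2/(1-\bar{\eta})>0$. Positivity of $c$ uses $\bar{\eta}<1$, which holds because distinct constellation points are distinct, and $\varepsilon\in(0,1)$.

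Since $g$ is continuous and strictly decreasing on $E\subscript{s}>0$, and $\lceil\cdot\rceil$ is nondecreasing, the composition $\lceil g\rceil$ is non-increasing in $E\subscript{s}$. This gives the first claim.

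For the step structure, I would identify the breakpoints as the values where $g$ crosses an integer, namely $E_m\triangleq c/m$ for $m\in\mathbb{N}$. On each half-open interval $E\subscript{s}\in[c/(m+1),\,c/m)$ one has $g(E\subscript{s})\in(m,m+1]$, so $d\subscript{min}\superscript{min}=m+1$ is constant there. At $E\subscript{s}=c/m$ the value drops to exactly $m$. Thus the function is piecewise constant on a countable, locally finite set of intervals, and it decreases only at the discrete points $E_m$.

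The point needing care is that each decrease is by exactly one unit, not more. This follows because $g$ is continuous and strictly monotone, so it passes through every integer value $m$ in turn and cannot skip any. Consequently consecutive plateaus take the values $m+1$ and $m$. A minor side issue is that $d\subscript{min}\superscript{min}$ is clipped below by $1$ for very large $E\subscript{s}$, or becomes vacuous once it exceeds $d\subscript{min}\superscript{max}$. I would note that the statement concerns the formula~\eqref{eq:lower_bound_dmin} itself, so the unit-step description holds for all $m\geqslant1$, with the final plateau $d\subscript{min}\superscript{min}=1$ for $E\subscript{s}\geqslant c$.
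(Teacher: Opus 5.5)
Your proof is correct and follows the same route as the paper. Both proofs rewrite $d\subscript{min}\superscript{min}$ as the ceiling of $2\sigma^2[F^{-1}_{\chi_N}(1-\varepsilon)]^2/[(1-\bar{\eta})E\subscript{s}]$, note that only the denominator depends on $E\subscript{s}$, and read monotonicity off the ceiling. The unit-step claim, however, does not follow from ``being integer-valued'' alone, as the paper's proof says; it needs your argument that $c/E\subscript{s}$ is continuous and strictly monotone and so cannot skip an integer, and your breakpoints $c/m$ coincide with the $E\subscript{s,i}$ of Theorem~\ref{th:Pcon_UB_regarding_E}.
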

\begin{proof}
As defined by Eq.~\eqref{eq:lower_bound_dmin}, $d\subscript{min}\superscript{min}=\lceil 4R^2(\varepsilon)/\Delta\rceil=\bigl\lceil 2\sigma^2[F^{-1}_{\chi_N}(1-\varepsilon)]^2/[(1-\bar{\eta})E\subscript{s}]\bigr\rceil$. Since $F^{-1}_{\chi_N}(1-\varepsilon)$ and $\bar{\eta}$ are constants for fixed $(\sigma, n, \varepsilon)$ and constellation, the ceiling argument is non-increasing in $E\subscript{s}$, and being integer-valued it decreases by unit steps.
\end{proof}

This yields the following behavior of $P\subscript{con}\superscript{UB}$:
\begin{theorem}
	\label{th:Pcon_UB_regarding_E}
	Given fixed $(M, k, n, \sigma, \varepsilon)$, $P\subscript{con}\superscript{UB}$ is piecewise continuous regarding $E\subscript{s}$, with jump discontinuities at $E\subscript{s,i}=2R^2(\varepsilon)/[(1-\bar{\eta})i]$ where $i\in\mathbb{N}^+$; it decreases monotonically in each individual continuous interval and jumps \emph{upward} as $E\subscript{s}$ increases through each $E\subscript{s,i}$ (where $d\subscript{min}\superscript{min}$ steps down from $i+1$ to $i$; cf.\ Corollary~\ref{co:Pcon_UB_regarding_E_extremes}).
\end{theorem}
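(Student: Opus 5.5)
The plan is to write $P\subscript{con}\superscript{UB}=(M^k-1)P\subscript{pair}(D)$ with $D^2=\Delta\, d\subscript{min}\superscript{min}$ and $\Delta=2(1-\bar{\eta})E\subscript{s}$. From Eq.~\eqref{eq:con_prob_upon_distance},
\begin{equation*}
P\subscript{pair}(D)=F_{\chi^2_N(\lambda)}(\rho^2),\qquad \lambda=\frac{\Delta\, d\subscript{min}\superscript{min}}{\sigma^2}=\frac{2(1-\bar{\eta})E\subscript{s}\, d\subscript{min}\superscript{min}}{\sigma^2},\qquad \rho=F^{-1}_{\chi_N}(1-\varepsilon).
\end{equation*}
With $(M,k,n,\sigma,\varepsilon)$ fixed, $\rho$ does not depend on $E\subscript{s}$. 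So $P\subscript{con}\superscript{UB}$ is a fixed function of the single scalar $\lambda(E\subscript{s})=c\,E\subscript{s}\,d\subscript{min}\superscript{min}(E\subscript{s})$, where $c\triangleq 2(1-\bar{\eta})/\sigma^2>0$. The proof therefore splits into two parts: how $d\subscript{min}\superscript{min}$ behaves as a step function of $E\subscript{s}$ (the preceding lemma), and how the noncentral $\chi^2$ CDF at fixed argument depends on its noncentrality.

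\textbf{Locating the jumps.} By Eq.~\eqref{eq:lower_bound_dmin}, $d\subscript{min}\superscript{min}=\lceil 4R^2/\Delta\rceil=\lceil 2R^2/[(1-\bar{\eta})E\subscript{s}]\rceil$, so it equals $i$ exactly when $(i-1)<2R^2/[(1-\bar{\eta})E\subscript{s}]\leqslant i$. Equivalently, $E\subscript{s}\in[E\subscript{s,i},E\subscript{s,i-1})$ with $E\subscript{s,i}=2R^2/[(1-\bar{\eta})i]$. On each such interval $d\subscript{min}\superscript{min}$ is constant, so $\lambda=c\,i\,E\subscript{s}$ is linear and continuous in $E\subscript{s}$. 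Since the noncentral $\chi^2$ CDF is continuous in $\lambda$ (a Poisson mixture of central CDFs that converges uniformly), $P\subscript{con}\superscript{UB}$ is continuous on each interval, which gives piecewise continuity.

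\textbf{Monotonicity inside an interval.} Here I use the standard fact that $F_{\chi^2_N(\lambda)}(t)$ is strictly decreasing in $\lambda$ for fixed $t>0$. One route is $\partial_\lambda F_{\chi^2_N(\lambda)}(t)=-\tfrac12 f_{\chi^2_{N+2}(\lambda)}(t)<0$, obtained by termwise differentiation of the Poisson mixture. Alternatively, the probability that a Gaussian centred at distance $\sqrt\lambda$ lands in a fixed centred ball decreases with $\sqrt\lambda$ by Anderson's inequality. This is the same ingredient behind Lemma~\ref{lem:monotonicity_Ppair} and Theorem~\ref{th:monotonicity_Pcon_LB_regarding_E}. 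Since $\lambda$ is strictly increasing in $E\subscript{s}$ on the interval, $P\subscript{con}\superscript{UB}$ strictly decreases there.

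\textbf{Direction of the jumps.} As $E\subscript{s}$ increases through $E\subscript{s,i}$ from below, $d\subscript{min}\superscript{min}$ drops from $i+1$ to $i$. The left limit is $\lambda^-=c(i+1)E\subscript{s,i}$ and the value at the point is $\lambda^+=c\,i\,E\subscript{s,i}$, so $\lambda^+<\lambda^-$ and the noncentrality drops by $c\,E\subscript{s,i}=4R^2/\sigma^2\cdot(1/i)\cdot\ldots>0$. By strict monotonicity in $\lambda$, $P\subscript{pair}$ and hence $P\subscript{con}\superscript{UB}$ jump strictly upward. This makes every $E\subscript{s,i}$ a genuine discontinuity.

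The main obstacle is bookkeeping at the boundaries. I need to check the ceiling convention, so that at $E\subscript{s}=E\subscript{s,i}$ exactly the argument equals $i$ and the function is right-continuous, and confirm that the jump is nonzero for every $i$. The latter holds because $E\subscript{s,i}>0$ and $\rho^2>0$ make the strict $\lambda$-monotonicity applicable.
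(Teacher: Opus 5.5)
Your proof is correct and follows essentially the same route as the paper's argument, which it gives by reference to its conference version. In both, $d\subscript{min}\superscript{min}$ is a right-continuous step function of $E\subscript{s}$ that drops from $i+1$ to $i$ at $E\subscript{s,i}$, and $P\subscript{pair}$ is strictly decreasing in the distance. You use the global monotonicity of the non-central $\chi^2$ CDF in $\lambda$ where the paper uses Lemma~\ref{lem:monotonicity_Ppair}; this is harmless, since $D^2=\Delta\, d\subscript{min}\superscript{min}\geqslant 4R^2$ on every interval and at both one-sided limits of each jump.

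Two cosmetic slips do not affect the argument. The Poisson-mixture derivative is $\partial_\lambda F_{\chi^2_N(\lambda)}(t)=-f_{\chi^2_{N+2}(\lambda)}(t)$, without the factor $\tfrac12$; only the sign matters. The noncentrality drop at $E\subscript{s,i}$ is exactly $c\,E\subscript{s,i}=4\rho^2/i$.
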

\begin{proof}
	See~\cite[App.~J]{HZS+2026confusions}, substituting as in the proof of Theorem~\ref{th:upper_bound_Pcon}.
\end{proof}
\begin{corollary}
	\label{co:Pcon_UB_regarding_E_extremes}
	Given fixed $(M, k, n, \sigma, \varepsilon)$, $P\subscript{con}\superscript{UB}$ has its local maxima regarding $E\subscript{s}$ at points $E\subscript{s,i}=2R^2(\varepsilon)/[(1-\bar{\eta})i]$ where $i\in\mathbb{N}^+$, and its local minima at $E\subscript{s,i}^-=\lim\limits_{\delta\to 0}E\subscript{s,i}-\vert\delta\vert$. All local maxima are equal, forming a \emph{constant envelope} (ceiling), while the local minima are monotonically increasing w.r.t. $i$.
\end{corollary}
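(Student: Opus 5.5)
The plan is to read everything off Theorem~\ref{th:Pcon_UB_regarding_E} together with the explicit form of $P\subscript{pair}$ in Eq.~\eqref{eq:con_prob_upon_distance}. For fixed $(\sigma,n,\varepsilon)$ the normalized radius $\rho=F^{-1}_{\chi_N}(1-\varepsilon)$ does not depend on $E\subscript{s}$. With $d=d\subscript{min}\superscript{min}$, the upper bound is
\begin{equation*}
P\subscript{con}\superscript{UB}=(M^k-1)\,F_{\chi^2_N(\lambda)}(\rho^2),\qquad \lambda=\frac{\Delta d}{\sigma^2}=\frac{2(1-\bar{\eta})E\subscript{s}\,d}{\sigma^2}.
\end{equation*}
So $E\subscript{s}$ enters only through the product $\Delta d$.

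\textbf{Locating the extrema.} On the interval $E\subscript{s}\in(E\subscript{s,i},E\subscript{s,i-1}]$ the integer $d$ is constant. I first have to fix the ceiling convention at the breakpoint: at $E\subscript{s}=E\subscript{s,i}$ the argument $4R^2/\Delta$ equals exactly $i$, so $d=i$ there. Just below $E\subscript{s,i}$ the argument exceeds $i$, so $d=i+1$. This matches the upward jump stated in Theorem~\ref{th:Pcon_UB_regarding_E}.

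Within each interval, $\lambda$ grows linearly in $E\subscript{s}$. The non-central $\chi^2$ CDF at a fixed point is strictly decreasing in $\lambda$, by stochastic ordering in the non-centrality parameter; this is the same fact that underlies Lemma~\ref{lem:monotonicity_Ppair}. Hence $P\subscript{con}\superscript{UB}$ strictly decreases on each interval. Two consequences follow:
\begin{itemize}
\item the supremum over an interval is attained at its left endpoint, which is $E\subscript{s,i}$ itself, and this endpoint belongs to the interval with $d=i$;
\item the infimum is approached at the right end, i.e.\ at $E\subscript{s,i-1}^-$ (the left limit), where $d=i$ still holds.
\end{itemize}
Re-indexing, the local minima sit at the points $E\subscript{s,i}^-$ with $d=i+1$. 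Each is a local minimum because the function jumps upward immediately after it, and each maximum is a local maximum because the function decreases on its right and lies above the left limit on its left.

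\textbf{Constant envelope.} At $E\subscript{s}=E\subscript{s,i}$ we have $\Delta=2(1-\bar{\eta})E\subscript{s,i}=4R^2/i$ and $d=i$, so $\Delta d=4R^2$ and $\lambda=4\rho^2$. Every local maximum therefore equals
\begin{equation*}
(M^k-1)\,F_{\chi^2_N(4\rho^2)}(\rho^2),
\end{equation*}
independently of $i$. This is the ceiling: geometrically, the nearest neighbor sits exactly at distance $2R$, where the balls touch.

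\textbf{Monotone minima.} At $E\subscript{s,i}^-$ we have $\Delta\to4R^2/i$ and $d=i+1$, so $\lambda\to4\rho^2(i+1)/i$. This limit is strictly decreasing in $i$ and tends to $4\rho^2$ as $i\to\infty$. Since $F_{\chi^2_N(\lambda)}(\rho^2)$ is decreasing in $\lambda$, the minima increase strictly with $i$ and approach the ceiling from below.

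\textbf{Main obstacle.} The step needing the most care is the boundary bookkeeping: getting the ceiling convention right at the exact breakpoints, and justifying that the left limit is the meaningful "local minimum" even though it is an infimum that is not attained. I also have to invoke strict monotonicity of the non-central $\chi^2$ CDF in $\lambda$ for all $\lambda>0$, not only for $D\geqslant2R$. If needed, I will cite the standard stochastic-ordering result for the noncentral chi-squared family (Anderson's lemma / unimodal symmetric Gaussian) to cover this.
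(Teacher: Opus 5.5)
Your proposal is correct and is essentially the paper's argument, which is deferred to the conference appendix but spells out the ceiling identity $\Delta(E\subscript{s,i})\cdot i=4R^2(\varepsilon)$ right after the corollary: $d\subscript{min}\superscript{min}=i$ on $[E\subscript{s,i},E\subscript{s,i-1})$ --- not on $(E\subscript{s,i},E\subscript{s,i-1}]$ as you first wrote, though your later reasoning uses the right convention --- the bound decreases on each piece, every maximum has non-centrality $\lambda=4\rho^2$, and the left limits have $\lambda=4\rho^2(i+1)/i$, which decreases in $i$. Your worry about needing monotonicity of $P\subscript{pair}$ for $D<2R$ is moot: the ceiling guarantees $\Delta\, d\subscript{min}\superscript{min}\geqslant4R^2(\varepsilon)$, so every distance that occurs satisfies $D\geqslant2R$, and Lemma~\ref{lem:monotonicity_Ppair} already suffices.
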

\begin{proof}
	See~\cite[App.~K]{HZS+2026confusions}, substituting as in the proof of Theorem~\ref{th:upper_bound_Pcon}.
\end{proof}
The common value of all local maxima can be expressed in closed form. Since $\Delta(E\subscript{s,i})\cdot i = 4R^2(\varepsilon)$ is independent of $i$, we have
\begin{equation}\label{eq:envelope_UB_Es}
	P\subscript{con}\superscript{UB}(E\subscript{s,i}) = (M^k-1)\,F_{\chi^2_N(4\rho^2)}(\rho^2),
\end{equation}
where $\rho = F^{-1}_{\chi_N}(1-\varepsilon)$. This is the blocklength envelope of Theorem~\ref{th:Pcon_UB_envelope_n} at fixed $n$: the ceiling is set by the ratio $D/R=2$ whether $E\subscript{s}$ or $n$ varies, and the Chernoff bound~\eqref{eq:envelope_chernoff} applies to it at the fixed~$n$.

\subsection{Upper Bound versus Blocklength}
 On the other hand, since $F^{-1}_{\chi_N}(1-\varepsilon)$ strictly increases with $N=\nu n$, and hence with $n$ (for $\varepsilon<1/2$), for any fixed $(\sigma, E\subscript{s}, \varepsilon)$, $d\subscript{min}\superscript{min}$ is non-decreasing w.r.t. $n$, which ensures with Eq.~\eqref{eq:upper_bound_Pcon} that
\begin{theorem}\label{th:Pcon_UB_regarding_n}
	Given fixed $(M, k, E\subscript{s}, \sigma, \varepsilon)$: within every $d\subscript{min}\superscript{min}$-consistent interval, $P\subscript{con}\superscript{UB}$ strictly increases with $n$; at every boundary between consecutive intervals, where $d\subscript{min}\superscript{min}$ jumps upward by one, $P\subscript{con}\superscript{UB}$ jumps \emph{downward}. The result is a sawtooth pattern whose local maxima sit exactly at the right boundaries of the intervals.
\end{theorem}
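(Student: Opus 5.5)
Within a $d\subscript{min}\superscript{min}$-consistent interval, $P\subscript{con}\superscript{UB}(n)=(M^k-1)P\subscript{pair}^{(n)}(\sqrt{\Delta d})$ with $d$ held fixed. Since $M$, $k$ and $E\subscript{s}$ are fixed, both the prefactor $M^k-1$ and the Euclidean distance $D=\sqrt{\Delta d}$ are constant. The only $n$-dependence therefore enters through the dimension $N=\nu n$ of the non-central $\chi^2$ law and through the calibrated radius $\rho_n=F^{-1}_{\chi_N}(1-\varepsilon)$. This is exactly the setting of Lemma~\ref{lem:Ppair_dimension_monotone}, which gives $P\subscript{pair}^{(n+1)}(D)>P\subscript{pair}^{(n)}(D)$. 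Hence $P\subscript{con}\superscript{UB}$ strictly increases at each step inside the interval. The first claim thus reduces to invoking that lemma with $D$ frozen.

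At a boundary, I first establish that $d\subscript{min}\superscript{min}$ increases by exactly one. The argument $4R^2/\Delta=4\sigma^2\rho_n^2/\Delta$ is increasing in $n$ (for $\varepsilon<1/2$). I would show that its per-step increment $4\sigma^2(\rho_{n+1}^2-\rho_n^2)/\Delta$ is less than one, so that the ceiling cannot skip an integer. Heuristically this holds because $\rho_n^2\approx N+O(\sqrt N)$ increments by about $\nu$ per step. The increment is therefore roughly $4\sigma^2\nu/\Delta=2\nu/[(1-\bar\eta)\,\mathrm{SNR}_s]$, which is below one only above a moderate per-symbol SNR. I expect this to be the main obstacle, because the statement as written asserts a unit jump unconditionally. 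My plan is to prove the ``jumps downward'' part for an arbitrary jump $d\to d'>d$, and to treat the unit-step claim as holding under that increment condition, noting the condition explicitly.

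For the downward jump, I compare $P\subscript{pair}^{(n+1)}(\sqrt{\Delta d'})$ with $P\subscript{pair}^{(n)}(\sqrt{\Delta d})$. Write $D'=\sqrt{\Delta d'}$ and $R_{n+1}=\sigma\rho_{n+1}$. By the definition of the ceiling, $d'\geq 4R_{n+1}^2/\Delta$, so $D'\geq 2R_{n+1}$, while $D=\sqrt{\Delta d}$ satisfies $D\geq 2R_n$. A single step mixes a dimension increase, which raises $P\subscript{pair}$, with a distance increase, which lowers it via Lemma~\ref{lem:monotonicity_Ppair}, so a direct monotonicity argument does not suffice. I would therefore pass to the tail representation $P\subscript{pair}=P(\Vert\mathbf{z}-(D/\sigma)\mathbf{e}_1\Vert\le\rho)$ and bound it by $P(z_1\geq D/\sigma-\rho)\cdot P(\Vert\mathbf{z}_{2:N}\Vert\le\rho)$-type estimates. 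This shows that raising $D^2$ by $\Delta$ (a factor of order $e^{-\Delta/(8\sigma^2)\cdot\ldots}$) outweighs the dimension gain, which is exactly the gain that produced the within-interval increase of a single step. A cleaner route is to compare against the boundary value of the previous interval: just before the jump, $\Delta d$ lies within $\Delta$ of $4R_n^2$, whereas after it $\Delta d'$ exceeds $4R_{n+1}^2$. In that case the post-jump point has ratio $D/R\geq2$ while the pre-jump point has $D/R$ close to $2$ from above with $d$ maximal, and I would show the post-jump value does not exceed the pre-jump value. Once the downward jump is established, the sawtooth structure and the location of the local maxima at the right endpoints follow immediately from combining the strict increase within each interval with the downward jumps between them.
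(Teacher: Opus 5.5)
\textbf{Within-interval step and unit-jump caveat.} Your within-interval step is exactly the paper's argument. With $d\subscript{min}\superscript{min}$ frozen, $D=\sqrt{\Delta d}$ is fixed, and Lemma~\ref{lem:Ppair_dimension_monotone} gives the strict increase. Your caveat on the unit step is also fair. The paper's proof does not establish it, and it is guaranteed only under a condition such as your $4\sigma^2(\rho_{n+1}^2-\rho_n^2)<\Delta$.

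\textbf{The gap: the downward jump is never proven.} Neither of your two routes closes it.

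Your first route cannot work. The product $P(z_1\geqslant D/\sigma-\rho)\,P(\Vert\mathbf{z}_{2:N}\Vert\leqslant\rho)$ is only an upper bound on $P\subscript{pair}$, and an exponentially loose one. At $D=2R$ it behaves like $e^{-\rho^2/2}\approx e^{-N/2}$. The true value $F_{\chi^2_N(4\rho^2)}(\rho^2)$ decays like $e^{-0.91N}$: this is the Chernoff exponent of Theorem~\ref{th:Pcon_UB_envelope_n} with $\rho^2\approx N$ and $u^*=(1+\sqrt{17})/2$, tight at exponential order. Meanwhile, the step ratio $P\subscript{con}\superscript{UB}(n+1)/P\subscript{con}\superscript{UB}(n)$ across a boundary is a factor that does not shrink with $N$. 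In the exponent, the gain from raising the non-centrality by $\Delta/\sigma^2$ competes with an $O(\nu)$ loss from the extra dimensions. So no estimate that loses an exponential, or even a polynomial, factor in $N$ can decide the sign. You would also need a matching lower bound at $n$.

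Your second route only restates the goal. Both points satisfy $D/R\geqslant 2$, and that alone does not order them, because the increase in $n$ and the increase in $D$ push $P\subscript{pair}$ in opposite directions.

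\textbf{The missing idea.} The paper reduces the comparison to a single parameter. Instead of comparing $n_i$ with $n_i+1$ directly, it goes through the right boundary $n_{i+1}$ of the next interval.
\begin{itemize}
\item Since $n_i+1$ and $n_{i+1}$ lie in the same $d\subscript{min}\superscript{min}$-consistent interval, your own within-interval result gives $P\subscript{con}\superscript{UB}(n_i+1)\leqslant P\subscript{con}\superscript{UB}(n_{i+1})$.
\item Theorem~\ref{th:Pcon_UB_envelope_n} gives $P\subscript{con}\superscript{UB}(n_{i+1})<P\subscript{con}\superscript{UB}(n_i)$.
\end{itemize}
The second inequality is tractable because at every right boundary $D=2R$. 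The value there is $(M^k-1)g(N)$ with $g(N)=F_{\chi^2_N(4\rho_N^2)}(\rho_N^2)$, which depends on the dimension alone. The two competing effects thus collapse into one monotonicity statement in $N$.

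\textbf{Caveat on the paper's route.} The paper treats $4R^2(n_i)=\Delta i$ as exact at integer right boundaries. It also argues that $g$ decreases only through a normal approximation. A fully rigorous version of this route therefore still has to control the residual excess $\Delta i-4R^2(n_i)\in[0,\,4R^2(n_i+1)-4R^2(n_i))$. Still, without some such reduction, or a quantitative large-deviation comparison with controlled prefactors, your boundary step remains unproven, and with it the sawtooth claim.
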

\begin{proof}
	Within a $d\subscript{min}\superscript{min}$-consistent interval, $d\subscript{min}\superscript{min}$ is constant while $n$ increases: the argument $D=\sqrt{\Delta\cdot d\subscript{min}\superscript{min}}$ of $P\subscript{pair}$ in Eq.~\eqref{eq:upper_bound_Pcon} is fixed, and by Lemma~\ref{lem:Ppair_dimension_monotone} the pairwise term, and hence $P\subscript{con}\superscript{UB}$, strictly increases with $n$.

	For the boundary between the $i$-th and $(i{+}1)$-th intervals, with right boundaries $n_i$ and $n_{i+1}$, the within-interval increase gives $P\subscript{con}\superscript{UB}(n_i+1)\leqslant P\subscript{con}\superscript{UB}(n_{i+1})$, and Theorem~\ref{th:Pcon_UB_envelope_n} (proven independently) gives $P\subscript{con}\superscript{UB}(n_{i+1})<P\subscript{con}\superscript{UB}(n_i)$. Hence $P\subscript{con}\superscript{UB}(n_i+1)<P\subscript{con}\superscript{UB}(n_i)$: every boundary jump is downward, and the local maxima lie at the right boundaries.
\end{proof}
Intuitively, the downward jumps arise because $P\subscript{pair}$ decreases hyper-exponentially in the Euclidean distance $D$ by Lemma~\ref{lem:monotonicity_Ppair}, so the discrete increase of $D$ at each $d\subscript{min}\superscript{min}$ transition dominates the accumulated within-interval growth of the decision radius.

% \todo[inline]{All appendices can be removed as long as already published in the INFOCOM version}
% \subsection{Upper Bound Envelope of Confusion Probability against Blocklength}\label{subsec:envelope_Pcon_UB}
In contrast to $\widetilde{P}\subscript{con}\superscript{LB}$, whose decay Theorem~\ref{th:monotonicity_PconLB_regarding_n} attributes to the distance jumps, $P\subscript{con}\superscript{UB}$ exhibits a sawtooth pattern: it increases within each $d\subscript{min}\superscript{min}$-consistent interval and then drops sharply when $d\subscript{min}\superscript{min}$ increases. The local maxima occur at the right boundary of each interval, just before $d\subscript{min}\superscript{min}$ jumps upward. The \emph{envelope} of these local maxima admits a compact characterization.

By Eq.~\eqref{eq:lower_bound_dmin}, $d\subscript{min}\superscript{min} = \lceil 4R^2(\varepsilon)/\Delta\rceil$. Since $R(\varepsilon)$ increases with $n$, $d\subscript{min}\superscript{min}$ transitions from $i$ to $i{+}1$ at the blocklength $n_i$ satisfying $4R^2(n_i) = \Delta\cdot i$, i.e., $n_i$ is the right boundary of the $d\subscript{min}\superscript{min}=i$ interval. At each such boundary, the Euclidean distance argument of $P\subscript{con}\superscript{UB}$ in Eq.~\eqref{eq:upper_bound_Pcon} evaluates to
\begin{equation}\label{eq:D_at_jump}
	\sqrt{\Delta\cdot d\subscript{min}\superscript{min}(n_i)} = \sqrt{\Delta\cdot i} = 2R(n_i).
\end{equation}

\begin{theorem}\label{th:Pcon_UB_envelope_n}
	Given fixed $(\varepsilon, M, k)$, the local maxima of $P\subscript{con}\superscript{UB}$ at the right boundaries $\{n_i\}$ of the $d\subscript{min}\superscript{min}$-consistent intervals are given by
	\begin{equation}\label{eq:envelope_exact}
		P\subscript{con}\superscript{UB}(n_i) = (M^k-1)\,F_{\chi^2_{N_i}(4\rho_i^2)}(\rho_i^2),
	\end{equation}
	where $N_i=\nu n_i$ and $\rho_i = F^{-1}_{\chi_{N_i}}(1-\varepsilon)$. This envelope is strictly decreasing, i.e., $P\subscript{con}\superscript{UB}(n_{i+1}) < P\subscript{con}\superscript{UB}(n_i)$ for all $i$, and admits the Chernoff upper bound

	\begin{equation}\label{eq:envelope_chernoff}
		\begin{split}
			P\subscript{con}\superscript{UB}(n_i) < (M^k-1)\,\exp\!\bigg(&-\frac{(u_i^*{-}1)^2}{2}\rho_i^2 \\
			&+ \frac{N_i}{2}\!\left[(u_i^*{-}1) - \ln u_i^*\right]\bigg),
		\end{split}
	\end{equation}
	where $u_i^* = \frac{N_i + \sqrt{N_i^2 + 16\rho_i^4}}{2\rho_i^2}$.
\end{theorem}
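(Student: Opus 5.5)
The plan has three parts matching the three claims of Theorem~\ref{th:Pcon_UB_envelope_n}: the exact envelope value, its strict decrease, and the Chernoff bound.

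\emph{Exact envelope value.} At the right boundary $n_i$, Eq.~\eqref{eq:D_at_jump} gives the distance argument of Eq.~\eqref{eq:upper_bound_Pcon} as $D=2R(n_i)$. Inserting this into Eq.~\eqref{eq:con_prob_upon_distance} gives non-centrality $\lambda=D^2/\sigma^2=4R^2(n_i)/\sigma^2=4\rho_i^2$, evaluated at $\rho_i^2$ in $N_i=\nu n_i$ dimensions. This yields Eq.~\eqref{eq:envelope_exact} directly. Geometrically it is the probability that $\mathbf{z}=\mathbf{w}/\sigma$ falls into a ball of radius $\rho_i$ centered at distance $2\rho_i$, so it depends on $i$ only through $N_i$. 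Here $\sigma$ and $E\subscript{s}$ cancel, which is why only $(\varepsilon,M,k)$ need be fixed.

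\emph{Chernoff bound.} For $X\sim\chi^2_{N}(\lambda)$ and $t>0$, use $P(X\leqslant a)\leqslant e^{ta}\,\mathbb{E}[e^{-tX}]$ with $\mathbb{E}[e^{-tX}]=(1+2t)^{-N/2}\exp(-\lambda t/(1+2t))$. Substituting $u=1+2t>1$, $a=\rho^2$ and $\lambda=4\rho^2$, the log-bound becomes $g(u)=\frac{(u-1)\rho^2}{2}-\frac{N}{2}\ln u-\frac{2\rho^2(u-1)}{u}$.

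Setting $g'(u)=0$ gives $\rho^2u^2-Nu-4\rho^2=0$, whose positive root is the stated $u_i^*$; it exceeds $1$ because $\rho_i^2>N_i/3$ for $\varepsilon<1/2$ and $N\geqslant 2$. Using the stationarity identity $4\rho^2/u^*=\rho^2u^*-N$ to eliminate $1/u^*$, $g(u^*)$ simplifies to $-\frac{(u^*-1)^2}{2}\rho^2+\frac{N}{2}[(u^*-1)-\ln u^*]$. This is exactly Eq.~\eqref{eq:envelope_chernoff}. The inequality is strict because $X$ is non-degenerate, so Markov's inequality on $e^{-tX}$ is never tight.

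\emph{Strict decrease (main obstacle).} I must show $h(N)\triangleq F_{\chi^2_N(4\rho_N^2)}(\rho_N^2)$ decreases in $N$ when $\rho_N=F^{-1}_{\chi_N}(1-\varepsilon)$ is recalibrated; since $n_{i+1}>n_i$, it suffices to treat $N\mapsto N+\nu$. The difficulty is that Lemma~\ref{lem:Ppair_dimension_monotone} pushes the other way: at fixed $D$ the probability grows with $N$. Here, however, $D=2R$ grows together with $R$.

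My first attempt splits off the displacement coordinate. Write $\mathbf{z}=(z_1,\mathbf{z}_\perp)$, so that
\[h(N)=\mathbb{E}_{z_1}\bigl[F_{\chi^2_{N-1}}\bigl(\rho_N^2-(z_1-2\rho_N)^2\bigr)\mathbb{1}\{\rho_N\leqslant z_1\leqslant 3\rho_N\}\bigr].\]
The event requires $z_1\geqslant\rho_N$ together with $\Vert\mathbf{z}_\perp\Vert^2\leqslant\rho_N^2-(z_1-2\rho_N)^2$, which is below the $(1-\varepsilon)$-quantile of $\Vert\mathbf{z}\Vert^2$. I would then try to show $\partial_N\ln h<0$, treating $N$ as continuous. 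The Gaussian factor contributes roughly $-\rho_N\,\rho_N'\approx-\tfrac12$ per dimension. The $\chi^2_{N-1}$ factor is evaluated at a point that stays a fixed fraction below the mean, so its log-derivative is a large-deviation rate. I expect to compare these two via the exponent $g(u^*)$, which behaves like $-cN$ with $c$ bounded away from zero.

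If an exact derivative argument proves unwieldy, the fallback is more limited. I would show monotonicity of the Chernoff exponent in Eq.~\eqref{eq:envelope_chernoff}, whose per-dimension rate is a function of $\rho_N^2/N$ and is decreasing since $\rho_N^2/N\to1$ from above. I would combine this with a matching lower bound on $h(N)$ of the form $h(N)\geqslant\mathrm{poly}(N)^{-1}e^{g(u^*)}$, obtained from an exponential tilt. That would establish the strict decrease for all but small $N$; the remaining cases would be verified numerically, which is the weak point of this plan.
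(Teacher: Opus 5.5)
Your derivation of the exact envelope value and of the Chernoff bound follows the paper's proof step for step: the same substitution $u=1+2s$, the same quadratic, and the same elimination of $1/u^*$. Your observation that Markov's inequality is strict for a continuous $X$ also justifies the ``$<$'' in Eq.~\eqref{eq:envelope_chernoff}, which the paper's proof only gives as ``$\leqslant$''.

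The genuine gap is the strict decrease of $h(N)$. The paper handles this step only through the normal approximation $h(N)\approx\Phi(-4\sqrt{N}/\sqrt{18})$, which is not a proof, so you are right to single it out. Your plan does not close it either. The first attempt stops at an outline and derives no inequality. The fallback fails for a concrete reason, and at the opposite end from the one you expect. Write $g(u^*)=-N\,r(\theta_N)$ with $\theta_N=\rho_N^2/N\downarrow1$; then $r(\theta_N)\to r(1)\approx0.91$. Consecutive boundaries satisfy $\rho_{N_{i+1}}^2-\rho_{N_i}^2=\Delta/(4\sigma^2)$, so their spacing $N_{i+1}-N_i\approx(1-\bar{\eta})\gamma/2$ stays bounded. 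Hence the exponent drops by only $O(1)$ from one boundary to the next. A $\mathrm{poly}(N)^{-1}$ slack between your lower and upper bounds, however, costs $\Theta(\ln N)$ in the exponent. The sandwich therefore cannot certify $h(N_{i+1})<h(N_i)$ once $N$ is large, and those are precisely the infinitely many cases no numerical check can cover. What is missing is a two-sided estimate whose upper-to-lower ratio tends to one. Bahadur--Rao-type exact asymptotics under the exponential tilt, with explicit error terms, would do this. They give $h(N+\nu)/h(N)\to e^{-\nu r(1)}<1$ beyond an explicit $N_0$, leaving only finitely many $N$ to verify.

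Even the monotonicity of the Chernoff exponent does not carry over from Lemma~\ref{lem:Ppair_Dnull_monotone}. There the per-dimension non-centrality $\gamma/\nu$ is fixed; here it is $4\theta_N$ and moves with $\theta_N$. By the envelope theorem, $r'(\theta)=(u^*-1)(4-u^*)/(2u^*)>0$. So the per-dimension rate \emph{decreases} as $N$ grows, from about $1.87$ at $\theta=3$ to about $0.91$ at $\theta=1$. You must therefore verify $r(\theta_N)+N\,r'(\theta_N)\,\theta_N'>0$ directly, rather than argue that both factors increase. Finally, $u^*>1$ needs no hypothesis, since the quadratic equals $-3\rho_i^2-N_i<0$ at $u=1$. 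The condition $\rho_i^2>N_i/3$ you invoke is exactly what characterizes $u^*<4$, which is what makes $r'>0$ here.
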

\begin{proof}
	See Appendix~\ref{app:Pcon_UB_envelope_n}.
\end{proof}
Eq.~\eqref{eq:envelope_exact} is exact for any finite $n_i$, while the Chernoff bound~\eqref{eq:envelope_chernoff} captures the same exponential decay rate in closed form but omits a polynomial prefactor of order $\Theta(n_i^{-1/2})$, a nearly constant gap on a logarithmic scale. The envelope decay depends only on $(\varepsilon, M, k)$, not on the constellation geometry.

\section{Block z-Channel}\label{sec:z_channel}

The $\varepsilon$-bounded decoder thus turns the physical \ac{awgn} channel into a \emph{\ac{blec}}: each transmitted codeword is either decoded correctly (probability $1-\varepsilon$) or erased (probability $\varepsilon$), with inter-codeword confusion negligible by comparison. We now extend this model to the possibility that the transmitter is \emph{idle} in some blocks. Such activity uncertainty is common in practice: in grant-free and configured-grant uplink access, the receiver cannot know a priori whether a device transmitted in an allocated resource~\cite{SAS+2020grantfree}, \ac{phy} control procedures must detect discontinuous transmission for the same reason, and in event-triggered or semantic transmission schemes~\cite{HZS+2025semantic}, staying silent is itself an informative action; the single-transmitter link considered here is the elementary building block of any such slotted access system. The resulting channel is a \emph{block z-channel}: unlike the classical (binary) z-channel, which is defined axiomatically by $P(0|0)=1$, $P(1|0)=0$, $P(0|1)=\varepsilon$, $P(1|1)=1-\varepsilon$, our block z-channel emerges from the geometry of bounded decoding in \ac{awgn}. Specifically, we show that a false alarm (an idle block mistakenly decoded as a codeword) occurs with probability $P\subscript{fa}\ll\varepsilon$, so the channel exhibits the characteristic z-channel asymmetry: silent blocks pass through almost undisturbed, while active blocks face the erasure probability~$\varepsilon$. The asymmetry is induced by the decoder: the decision regions lie far from the origin, where pure-noise observations concentrate.

\subsection{Model and Error Taxonomy}\label{subsec:z_channel_model}

In an idle block the transmitter sends nothing instead of a codeword $\mathbf{x}\in\mathcal{X}$, which we denote by $\mathbf{x}\subscript{null}=\mathbf{0}$; the received signal is then pure noise, $\mathbf{y}\subscript{null}=\mathbf{w}$.
Since $\mathbf{0}\not\in\mathcal{M}^n$ for any constellation with nonzero symbol energy, the null hypothesis differs from every codeword in all $n$ positions, and for a constant-envelope constellation the per-position squared distance is $\vert x_i\vert^2=E\subscript{s}$ at every position. Hence $D\subscript{null}^2\triangleq D^2(\mathbf{x},\mathbf{x}\subscript{null})=\Vert\mathbf{x}\Vert^2=nE\subscript{s}$ for all $\mathbf{x}\in\mathcal{X}$, a single value for the whole codebook; the same holds for every spherical codebook, constant-envelope or not, since $D\subscript{null}$ is simply the radius of the sphere.\footnote{$D\subscript{null}$ need not exceed $D\subscript{min}$. With $D\subscript{min}^2=\Delta\,d\subscript{min}$, the ratio $D\subscript{null}/D\subscript{min}$ falls below one whenever $d\subscript{min}>n/[2(1-\bar{\eta})]$, i.e., $d\subscript{min}>3n/8$ for QPSK, which is the case whenever the required minimum distance $d\subscript{min}\superscript{min}$ is a large fraction of $n$: six of the eight configurations of Tab.~\ref{tab:false_alarm} (e.g., ${\approx}0.96$ for the $(32,16)$ QPSK configuration at $\gamma=\SI{10}{\dB}$). The false alarm analysis below does not rely on $D\subscript{null}\geqslant D\subscript{min}$; Sec.~\ref{subsec:bound_evaluation} discusses the resulting configurations in which $P\subscript{fa}\superscript{UB}$ exceeds $P\subscript{con}\superscript{UB}$.}

At the receiver, we adopt a \emph{null-unaware} bounded decoder: the $\varepsilon$-bounded decoder of Sec.~\ref{sec:bounded_decoding} is used unchanged. It outputs $\hat{\mathbf{x}}\in\mathcal{X}$ if $\mathbf{y}$ falls in some decision region and an erasure otherwise; an idle transmitter and a decoding failure are thus not distinguished, both resulting in an erasure.

Under this model, three types of block-level errors can occur:
\begin{enumerate}
	\item \textbf{False alarm} ($P\subscript{fa}$): The transmitter is idle, but the receiver decodes some codeword $\hat{\mathbf{x}}\in\mathcal{X}$.
	\item \textbf{Erasure} ($P\subscript{ers}$): The transmitter sends $\mathbf{x}\in\mathcal{X}$, but the receiver declares an erasure.
	\item \textbf{Confusion} ($P\subscript{con}$): The transmitter sends $\mathbf{x}$, but the receiver decodes a different codeword $\hat{\mathbf{x}}\neq\mathbf{x}$.
\end{enumerate}
Types~2 and~3 are identical to the erasure and confusion events analyzed in Secs.~\ref{sec:bound_analysis}--\ref{sec:sensitivity_analysis}. The only new quantity is the \ac{fap}~$P\subscript{fa}$.

\subsection{False Alarm Probability}\label{subsec:false_alarm}

When the transmitter is idle, $\mathbf{y}=\mathbf{w}\sim\mathcal{N}(\mathbf{0},\sigma^2\mathbf{I}_N)$. A false alarm occurs when $\mathbf{w}$ falls into the decision region of any codeword. Since each codeword $\mathbf{x}\in\mathcal{X}$ is at distance $D\subscript{null}=\sqrt{nE\subscript{s}}$ from the origin, the pairwise \ac{fap} for a single codeword is $P\subscript{pair}(D\subscript{null})$ as defined in Eq.~\eqref{eq:con_prob_upon_distance}. By the union bound over all $M^k$ codewords:
\begin{equation}\label{eq:false_alarm_UB}
	P\subscript{fa} \leqslant M^k \cdot P\subscript{pair}\!\left(\sqrt{nE\subscript{s}}\right).
\end{equation}

To show that~\eqref{eq:false_alarm_UB} is negligible compared to~$\varepsilon$, we express $P\subscript{pair}(D\subscript{null})$ in non-central chi-squared form. Normalizing by $\sigma$ and squaring, we have
\begin{equation}\label{eq:Ppair_Dnull_ncx2}
	P\subscript{pair}(D\subscript{null}) = F_{\chi^2_N(n\gamma)}(\rho^2),
\end{equation}
where $\gamma \triangleq E\subscript{s}/\sigma^2$ is the per-symbol \ac{snr} (normalized to the per-dimension noise variance) and $\rho = F^{-1}_{\chi_N}(1-\varepsilon)$ as before. Since $\gamma>0$, the non-central chi-squared mean $N+n\gamma$ exceeds the threshold $\rho^2$ for all practical $(n,\gamma,\varepsilon)$; consequently $P\subscript{pair}(D\subscript{null})$ is bounded by the left tail of a distribution whose mass concentrates well above the threshold. Applying the Chernoff bound for the non-central chi-squared left tail as in the proof of Theorem~\ref{th:Pcon_UB_envelope_n}:
\begin{equation}\label{eq:Ppair_Dnull_chernoff}
	P\subscript{pair}(D\subscript{null}) \leqslant \exp\!\left(-\frac{(u^*{-}1)^2}{2}\rho^2 + \frac{N}{2}\!\left[(u^*{-}1) - \ln u^*\right]\right),
\end{equation}
where
\begin{equation}\label{eq:u_star_fa}
	u^* = \frac{N + \sqrt{N^2 + 4n\gamma\rho^2}}{2\rho^2}
\end{equation}
is the positive root of $\rho^2 u^2 - Nu - n\gamma = 0$. Substituting into~\eqref{eq:false_alarm_UB}, we obtain:
\begin{theorem}[False alarm bound]\label{th:false_alarm_bound}
	For any finite blocklength~$n$, the \ac{fap} of a null-unaware $\varepsilon$-bounded decoder satisfies
	\begin{equation}\label{eq:false_alarm_chernoff}
		P\subscript{fa} \leqslant \exp\!\left(k\ln M - \frac{(u^*{-}1)^2}{2}\rho^2 + \frac{N}{2}\!\left[(u^*{-}1) - \ln u^*\right]\right),
	\end{equation}
	where $u^*$ is given by Eq.~\eqref{eq:u_star_fa} and $\rho = F^{-1}_{\chi_N}(1-\varepsilon)$. Both sides are computable for any given $(n,k,M,\varepsilon,\gamma)$.
\end{theorem}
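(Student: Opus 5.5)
The plan is to combine the union bound \eqref{eq:false_alarm_UB} with an explicit Chernoff bound on the left tail of the non-central $\chi^2$ distribution, and then exponentiate the factor $M^k$ as $e^{k\ln M}$. The first step needs no new work. Every codeword lies at distance $D\subscript{null}=\sqrt{nE\subscript{s}}$ from the origin, so the false-alarm event is the union over $\mathbf{x}\in\mathcal{X}$ of $\{\mathbf{w}\in\mathcal{B}_R(\mathbf{x})\}$. Rotational symmetry of $\mathbf{w}$, as in \eqref{eq:pairwise_con_prob}, gives each term the probability $P\subscript{pair}(D\subscript{null})$. By \eqref{eq:Ppair_Dnull_ncx2} this equals $P(X\leqslant\rho^2)$ with $X\sim\chi^2_N(\lambda)$ and $\lambda=n\gamma$. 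Hence $P\subscript{fa}\leqslant M^k P(X\leqslant\rho^2)$.

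For the tail, I would use the moment generating function $\mathbb{E}[e^{-tX}]=(1+2t)^{-N/2}\exp\bigl(-\lambda t/(1+2t)\bigr)$, valid for $t>0$. Markov's inequality applied to $e^{-tX}$ gives $P(X\leqslant\rho^2)\leqslant e^{t\rho^2}\mathbb{E}[e^{-tX}]$. Substituting $u=1+2t>1$ turns the exponent into
\[
g(u)=\frac{(u-1)\rho^2}{2}-\frac{N}{2}\ln u-\frac{\lambda(u-1)}{2u}.
\]
The function $g$ is convex on $u>0$: the first term is linear, $-\ln u$ is convex, and $-\lambda(u-1)/u=-\lambda+\lambda/u$ is convex. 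Setting $g'(u)=0$ gives $\rho^2u^2-Nu-\lambda=0$, whose positive root is exactly $u^*$ in \eqref{eq:u_star_fa}.

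To reach the stated form, I would use the stationarity identity $\lambda=\rho^2u^{*2}-Nu^*$ to eliminate $\lambda$. Then $\lambda(u^*-1)/(2u^*)=(u^*-1)(\rho^2u^*-N)/2$, and collecting terms gives $g(u^*)=-\tfrac{(u^*-1)^2}{2}\rho^2+\tfrac{N}{2}[(u^*-1)-\ln u^*]$. This is \eqref{eq:Ppair_Dnull_chernoff}, and multiplying by $M^k=e^{k\ln M}$ yields \eqref{eq:false_alarm_chernoff}. The same computation with $\lambda=4\rho_i^2$ reproduces the envelope bound of Theorem~\ref{th:Pcon_UB_envelope_n}, which confirms the algebra.

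The step I expect to be the main obstacle is admissibility of the Chernoff parameter. The bound requires $t>0$, i.e.\ $u^*>1$. Since $g'(1)=(\rho^2-N-\lambda)/2$ and $g$ is convex, $u^*>1$ holds exactly when $\rho^2<N+n\gamma$, i.e.\ when the threshold lies below the non-central mean, as remarked before the theorem. I would verify this condition as follows. From \eqref{eq:decision_region_radius}, $\rho^2$ is the $(1-\varepsilon)$-quantile of a central $\chi^2_N$ distribution, and by a Chernoff or Laurent--Massart type bound it is at most $N+2\sqrt{N\ln(1/\varepsilon)}+2\ln(1/\varepsilon)$. This is below $N+n\gamma$ whenever $n\gamma$ dominates these $O(\sqrt{N\ln(1/\varepsilon)})$ terms. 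In the residual regime where $u^*\leqslant1$, the optimal admissible choice is $u=1$, which gives only the trivial bound $P\subscript{pair}\leqslant1$. I would therefore state the result under the condition $N+n\gamma>\rho^2$, which the paper assumes for all practical $(n,\gamma,\varepsilon)$, and note that both sides remain explicitly computable from $(n,k,M,\varepsilon,\gamma)$.
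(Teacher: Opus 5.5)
Your proposal is correct and follows the paper's own route: the union bound~\eqref{eq:false_alarm_UB} in the non-central $\chi^2$ form~\eqref{eq:Ppair_Dnull_ncx2}, the left-tail Chernoff bound optimized at the positive root $u^*$ of $\rho^2u^2-Nu-n\gamma=0$ exactly as in the proof of Theorem~\ref{th:Pcon_UB_envelope_n}, and finally $M^k=e^{k\ln M}$. The admissibility condition $u^*>1$ that you flag, which the paper only asserts for ``practical'' parameters, needs neither a Laurent--Massart estimate nor an extra hypothesis: all $M^k\geqslant 2$ codewords lie on the sphere of radius $\sqrt{nE\subscript{s}}$, so $D\subscript{min}\leqslant 2\sqrt{nE\subscript{s}}$, and the no-list condition $R\leqslant D\subscript{min}/2$ built into the decoder then gives $\rho^2\leqslant n\gamma<N+n\gamma$, hence $g'(1)<0$ and $u^*>1$ by your convexity argument.
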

\begin{proof}
	Combine Eqs.~\eqref{eq:false_alarm_UB} and~\eqref{eq:Ppair_Dnull_chernoff}, take the logarithm.
\end{proof}

To establish that $P\subscript{fa}\ll\varepsilon$ for all blocklengths of interest, we show that the Chernoff bound~\eqref{eq:false_alarm_chernoff} is monotonically decreasing beyond a small threshold blocklength.
\begin{lemma}[Monotonicity of Chernoff bound]\label{lem:Ppair_Dnull_monotone}
	Define $\theta_N\triangleq\rho_N^2/N$ with $\rho_N=F^{-1}_{\chi_N}(1-\varepsilon)$ and $N=\nu n$; $\theta_N$ is strictly decreasing in~$N$ with $\lim_{N\to\infty}\theta_N=1$. For any $n$ with $\theta_N<1+\gamma/\nu$ (equivalently, $\rho_N^2 < N + n\gamma$), the Chernoff bound~\eqref{eq:Ppair_Dnull_chernoff} on~$P\subscript{pair}(D\subscript{null})$ is strictly decreasing in~$n$.
\end{lemma}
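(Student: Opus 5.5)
The plan is to rewrite the Chernoff exponent in~\eqref{eq:Ppair_Dnull_chernoff} as $N$ times a per-dimension rate function that depends on $n$ only through $\theta_N$. Once the bound has that shape, monotonicity in $n$ follows from two facts: the rate is negative and nonincreasing, and $N$ grows.

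\emph{Step 1: recast the Chernoff bound as an optimisation.} For $X\sim\chi^2_N(\lambda)$, $a>0$ and any $s>0$ we have
\begin{equation*}
P(X\leqslant a)\leqslant \exp\!\Bigl(sa-\tfrac{N}{2}\ln(1+2s)-\tfrac{\lambda s}{1+2s}\Bigr).
\end{equation*}
Substitute $u=1+2s>1$, $a=\rho_N^2=N\theta_N$ and $\lambda=n\gamma=N\gamma/\nu$. The exponent becomes $N\,\psi(u;\theta_N)$, where
\begin{equation*}
\psi(u;\theta)\triangleq \tfrac{\theta(u-1)}{2}-\tfrac12\ln u-\tfrac{\gamma}{2\nu}\Bigl(1-\tfrac1u\Bigr).
\end{equation*}
I would first check, by direct substitution, that minimising over $u$ gives exactly the stationary point $u^*$ of~\eqref{eq:u_star_fa}, which is the positive root of $\rho^2u^2-Nu-n\gamma=0$. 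I would also check that the closed form in~\eqref{eq:Ppair_Dnull_chernoff} equals $N h(\theta_N)$, where $h(\theta)\triangleq\min_{u\geqslant 1}\psi(u;\theta)$. If the algebra shows the displayed expression is only an upper envelope of $Nh(\theta_N)$ at $u^*$, I would work with $Nh(\theta_N)$ directly, since it is the same quantity evaluated at the optimiser.

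\emph{Step 2: sign and monotonicity of $h$.} Note that $\psi(1;\theta)=0$ and $\partial_u\psi(1;\theta)=\tfrac12(\theta-1-\gamma/\nu)$. Hence, exactly when $\theta<1+\gamma/\nu$, the minimiser satisfies $u^*<1$, which is the wrong side. So I will orient the substitution according to whether $\theta$ lies below the per-dimension mean: for the left tail the relevant Chernoff parameter corresponds to $u^*>1$ in the paper's convention, with $u^*>1\iff \rho^2<N+n\gamma$. The precise claim to verify is that $u^*>1$ if and only if $\rho^2<N+n\gamma$, and that in this case $h(\theta)<0$ strictly.

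For fixed $u$, $\psi$ is affine in $\theta$ with slope $(u-1)/2$. This slope has a fixed sign on the relevant branch $u>1$. Therefore $h$, being a minimum of functions that are monotone in the same direction, is monotone in $\theta$: nondecreasing, so a smaller $\theta$ gives a more negative rate.

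\emph{Step 3: combine.} Fix $n$ with $\theta_N<1+\gamma/\nu$ and set $N'=N+\nu>N$. Because $\theta_N$ is decreasing, $\theta_{N'}<\theta_N$, so the condition persists for all larger $n$. Then
\begin{equation*}
N'h(\theta_{N'})\leqslant N'h(\theta_N)<Nh(\theta_N),
\end{equation*}
where the last inequality uses $h(\theta_N)<0$. This is the strict decrease of the bound in $n$.

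\emph{Step 4: the properties of $\theta_N$.} The limit $\theta_N\to1$ follows from the CLT or concentration: $\rho_N^2=N+O(\sqrt N)$. Strict decrease of the quantile ratio $\rho_N^2/N$ for $\varepsilon<1/2$ is the main obstacle. I would try three routes in turn.
\begin{enumerate}
\item Write $\chi^2_{N+1}/(N+1)$ as a mixture (conditional average) of scaled copies of $\chi^2_N/N$, and use a quantile comparison for the gamma family: the upper quantile of $\mathrm{Gamma}(N/2,\,2/N)$ is decreasing in the shape parameter above the median.
\item Failing that, differentiate $\mathcal{P}(N/2,N\theta/2)=1-\varepsilon$ implicitly in $N$, treating $N$ as continuous, and bound the sign using the incomplete-gamma derivative.
\item Reuse the calibration argument from the proof of Lemma~\ref{lem:Ppair_dimension_monotone}, which already controls how $\rho_N$ grows with $N$.
\end{enumerate}
The rest of the argument is bookkeeping once this monotonicity is secured.
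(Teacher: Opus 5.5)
Your Steps 1--3 are the paper's proof in different notation. With $u=1+2t$, your $h(\theta)=\min_{u\geqslant1}\psi(u;\theta)$ is exactly $-\Lambda^*(\theta)$. Your observation that $\psi$ is affine in $\theta$ with slope $(u-1)/2>0$ replaces the paper's envelope-theorem step $d\Lambda^*/d\theta=-t^*<0$. Your chain $N'h(\theta_{N'})\leqslant N'h(\theta_N)<Nh(\theta_N)$ is the paper's ``both factors positive and increasing.''

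Two local fixes are needed. First, in Step~2 the sign is reversed. Since $\partial_u\psi(1;\theta)=\tfrac12(\theta-1-\gamma/\nu)<0$, $\psi$ drops below $0$ just to the right of $u=1$. So the minimiser over $u\geqslant1$ is $u^*>1$ and $h(\theta)<0$. No reorientation is needed, because $u=1+2s$ with $s>0$ already is the left-tail parametrisation. Second, the hedge in Step~1 can be dropped. $\partial_u\psi=(\theta u^2-u-\gamma/\nu)/(2u^2)$ changes sign only at $u^*$, and substituting $n\gamma/u^*=\rho^2u^*-N$ gives $N\psi(u^*;\theta_N)=-\tfrac{(u^*-1)^2}{2}\rho^2+\tfrac{N}{2}\bigl[(u^*-1)-\ln u^*\bigr]$ exactly.

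The real gap is in Step~4, and it is more than an unfinished computation. The statement you set out to prove, strict decrease of $\theta_N$ for every $\varepsilon<1/2$, is false:
\begin{itemize}
\item At $\varepsilon=0.3$, $\theta_1=[\Phi^{-1}(0.85)]^2\approx1.07<\theta_2=-\ln0.3\approx1.20$, and for $\nu=2$ also $\theta_4\approx1.22>\theta_2$.
\item At $\varepsilon=0.45$, $\theta_1\approx0.57$ and $\theta_2\approx0.80$ lie below $1$, while $\theta_N\approx1+\Phi^{-1}(0.55)\sqrt{2/N}>1$ for large $N$. So $\theta_N$ must rise before it falls.
\end{itemize}
Route~1 therefore cannot succeed as posed. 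Representing $\chi^2_{N+\nu}/(N+\nu)$ as a conditional average of $\chi^2_N/N$ gives only a convex-order comparison, and convex order does not order quantiles. Route~3 does not transfer either: the Jensen argument of Lemma~\ref{lem:Ppair_dimension_monotone} compares probabilities at a fixed $D$ and gives no handle on $\rho_N^2/N$. The limit $\theta_N\to1$ via $\rho_N^2=N+O(\sqrt N)$ is fine.

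The paper's proof does not derive the monotonicity either; it simply invokes it. The property holds only for sufficiently small $\varepsilon$; at $\varepsilon=0.05$, $\theta_1\approx3.84$, $\theta_2\approx3.00$, $\theta_3\approx2.60$, \dots. So either take $\theta_{N+\nu}<\theta_N$ as a hypothesis, as the paper effectively does, or prove it under an explicit smallness condition on $\varepsilon$. Your Step~3 uses precisely this inequality, both for the comparison $h(\theta_{N'})\leqslant h(\theta_N)$ and for the persistence of the condition $\theta_{N'}<1+\gamma/\nu$.
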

\begin{proof}
	Since $\chi^2_N(n\gamma)$ is a sum of $N$~i.i.d.\ $\chi^2_1(\gamma/\nu)$ random variables with mean $1{+}\gamma/\nu$, the Chernoff bound can be written as $P\subscript{pair}(D\subscript{null}) = P(\tfrac{1}{N}\chi^2_N(n\gamma)\leqslant\theta_N) \leqslant e^{-N\Lambda^*(\theta_N)}$, where $\Lambda^*(\theta) = \sup_{t>0}\left[\frac{1}{2}\ln(1{+}2t) + \frac{(\gamma/\nu)\, t}{1+2t} - t\theta\right]$ is the (left-tail) large-deviation probability function. For $\theta < 1{+}\gamma/\nu$, the supremum is attained at $t^*(\theta)>0$ and $\Lambda^*(\theta)>0$, and $\frac{d\Lambda^*}{d\theta} = -t^*(\theta) < 0$ by the envelope theorem, so $\Lambda^*$ is strictly decreasing on~$(-\infty,1{+}\gamma/\nu)$.

	Since $\theta_N$ is strictly decreasing with $\theta_N > 1$ and $\theta_N\to 1 < 1{+}\gamma/\nu$, whenever $\theta_N < 1{+}\gamma/\nu$ both factors in $N\cdot\Lambda^*(\theta_N)$ are positive and increasing: $N$ increases trivially with $n$, and $\Lambda^*(\theta_N)$ increases because $\theta_N$ decreases while $\Lambda^*$ is decreasing. Hence $N\Lambda^*(\theta_N)$ is strictly increasing, and $e^{-N\Lambda^*(\theta_N)}$ is strictly decreasing.
\end{proof}

The condition $\theta_N < 1{+}\gamma/\nu$ holds for all $n\geqslant n^*$, where $n^*$ is the smallest integer with $\rho_{\nu n^*}^2/(\nu n^*) < 1{+}\gamma/\nu$. In practice, $n^*$ is very small (for $\nu=2$ and $\varepsilon=0.05$, $n^*=13$ at $\gamma=1$ and $n^*=4$ at $\gamma=2$). Therefore, once the Chernoff bound~\eqref{eq:false_alarm_chernoff} is evaluated at a single base blocklength $n_0\geqslant n^*$ and found to satisfy $P\subscript{fa}(n_0)\ll\varepsilon$, Lemma~\ref{lem:Ppair_Dnull_monotone} guarantees $P\subscript{fa}(n)\leqslant P\subscript{fa}(n_0)\ll\varepsilon$ for all $n\geqslant n_0$.

The low-\ac{snr} regime in which $P\subscript{fa}$ could become comparable to~$\varepsilon$ is excluded by the no-list feasibility itself:
\begin{corollary}\label{co:feasibility_snr}
	Given $(M,n,k,\varepsilon)$ with $\theta_N=\rho_N^2/N>1$ as in Lemma~\ref{lem:Ppair_Dnull_monotone}, a necessary condition for the existence of any $M^k$-sized codebook admitting an $\varepsilon$-bounded decoder that rejects list output is
	\begin{equation}\label{eq:feasibility_snr}
		\gamma\geqslant\frac{2M\nu\,\theta_N}{(1-\bar{\eta})(M-1)}\cdot\frac{n}{n+1}
		\overset{\text{$M$-PSK}}{=}2\nu\,\theta_N\,\frac{n}{n+1}>2\nu\,\frac{n}{n+1}.
	\end{equation}
\end{corollary}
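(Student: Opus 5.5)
The plan is to combine the no-list condition with the upper bound on the minimum Hamming distance. Under the uniform-mismatch convention of Remark~\ref{rem:distance_bounds}, any codebook admitting an $\varepsilon$-bounded decoder without list output must satisfy $R(\varepsilon)\leqslant D\subscript{min}/2$. With $D\subscript{min}^2=\Delta\,d\subscript{min}(\mathcal{C})$ this becomes
\begin{equation*}
4\sigma^2\rho_N^2\leqslant 2(1-\bar{\eta})E\subscript{s}\,d\subscript{min}(\mathcal{C}).
\end{equation*}
This is exactly the reasoning behind $d\subscript{min}\superscript{min}$ in Lemma~\ref{lem:dmin_bounds}. Since $d\subscript{min}(\mathcal{C})\leqslant d\subscript{min}\superscript{max}(n)$, it follows that
\begin{equation*}
\gamma=E\subscript{s}/\sigma^2\geqslant 2\rho_N^2/\bigl[(1-\bar{\eta})\,d\subscript{min}\superscript{max}(n)\bigr].
\end{equation*}
So it suffices to bound $d\subscript{min}\superscript{max}(n)$ from above by a quantity linear in $n+1$.

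For that bound, I will use the Plotkin term of Eq.~\eqref{eq:upper_bound_dmin}. Lemma~\ref{lem:bound_lambda} gives $d\subscript{min}\superscript{max}(n)<(M-1)(n+1)/M$ on $n\in[k,M^k-2]$. At $n=M^k-1$, a direct check gives $n(M-1)M^{k-1}/(M^k-1)=(M-1)(n+1)/M$, so the non-strict version still holds. More generally, the inequality $nM^k/(M^k-1)\leqslant n+1$ is equivalent to $n\leqslant M^k-1$, so the Plotkin term alone supplies the bound over the whole range $n\leqslant M^k-1$. Substituting it and writing $\rho_N^2=\theta_N N=\theta_N\nu n$ gives
\begin{equation*}
\gamma\geqslant\frac{2M\nu\,\theta_N}{(1-\bar{\eta})(M-1)}\cdot\frac{n}{n+1},
\end{equation*}
which is the first expression of Eq.~\eqref{eq:feasibility_snr}.

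The remaining steps are bookkeeping. For $M$-PSK, $\bar{\eta}=-1/(M-1)$, so $1-\bar{\eta}=M/(M-1)$ and the prefactor collapses to $2\nu\theta_N\,n/(n+1)$. The final strict inequality follows from $\theta_N>1$, which is assumed in the statement and is also asserted in Lemma~\ref{lem:Ppair_Dnull_monotone}, where $\theta_N$ decreases strictly to~$1$.

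The main obstacle is the range of validity of the distance bound. For $n\geqslant M^k$, low-rate codes, the Plotkin term as stated slightly exceeds $(M-1)(n+1)/M$. There I would fall back on the Singleton term $n-k+1$ or the Hamming term via Lemma~\ref{lem:bound_dmin_max_regarding_n}, or else note that the corollary is intended for the rate regime $n<M^k$ relevant to the configurations considered. I would also state explicitly that the argument inherits the $\delta=0$ assumption; for $\delta>0$, $\Delta$ is replaced by the upper endpoint $2(1-\bar{\eta}+\delta)E\subscript{s}$ of Eq.~\eqref{eq:delta_sandwich}.
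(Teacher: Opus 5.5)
Your proof is the paper's own argument: the no-list condition $4R^2(\varepsilon)\leqslant\Delta\,d\subscript{min}(\mathcal{C})$ (which the paper writes as $d\subscript{min}\superscript{min}\leqslant d\subscript{min}\superscript{max}$), the Plotkin term relaxed to $(M-1)(n+1)/M$ as in Lemma~\ref{lem:bound_lambda}, the substitution $\rho_N^2=\nu n\,\theta_N$, and the identity $(1-\bar{\eta})(M-1)=M$ for $M$-PSK. Your range caveat is a real point that the paper's proof skips, since Lemma~\ref{lem:bound_lambda} only covers $n\leqslant M^k-2$ and your extension to $n=M^k-1$ is correct; however, the Singleton and Hamming fallbacks cannot rescue $n\geqslant M^k$, because the stated threshold actually fails there (for BPSK with $k=1$, the repetition code meets the no-list condition already at $\gamma=\theta_N<2\theta_N n/(n+1)$ for $n\geqslant2$), so the right fix is to restrict the corollary to $n\leqslant M^k-1$.
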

\begin{proof}
	Feasibility requires $d\subscript{min}\superscript{min}\leqslant d\subscript{min}\superscript{max}$ by Lemma~\ref{lem:dmin_bounds}. Bounding the left side from below by $4R^2(\varepsilon)/\Delta=2\sigma^2\rho_N^2/[(1-\bar{\eta})E\subscript{s}]=2\nu n\,\theta_N/[(1-\bar{\eta})\gamma]$, the right side from above by the Plotkin term of Eq.~\eqref{eq:upper_bound_dmin} relaxed as in Lemma~\ref{lem:bound_lambda}, i.e., by $(M-1)(n+1)/M$, and solving for $\gamma$ yields Eq.~\eqref{eq:feasibility_snr}; for $M$-PSK, $(1-\bar{\eta})(M-1)=M$ since $\bar{\eta}=-1/(M-1)$.
\end{proof}
For complex $M$-PSK constellations ($\nu=2$), the threshold in Eq.~\eqref{eq:feasibility_snr} exceeds $2\nu=4$ (i.e., $E\subscript{s}/N_0=\SI{3}{\dB}$) whenever $\theta_N>(n+1)/n$, i.e., $\rho_N^2>N+2$. This holds for every $n$ at $\varepsilon=0.05$: with $N=2n$, $P(\chi^2_N>N+2)=P(Y\leqslant n-1)$ for $Y\sim\mathrm{Poisson}(n+1)$, which equals $e^{-2}$ and $4e^{-3}$ for $n=1,2$ and exceeds $\frac{1}{2}-\sqrt{2/[\pi(n+1)]}>0.05$ for $n\geqslant3$, since the median of $Y$ is $n+1$~\cite{Choi1994medians} and $P(Y=n+1)=P(Y=n)\leqslant[2\pi(n+1)]^{-1/2}$ by Stirling's formula. Corollary~\ref{co:feasibility_snr} thus excludes all per-symbol \acp{snr} below $\SI{3}{\dB}$ regardless of the code rate, and the rate-dependent Hamming-bound term of Lemma~\ref{lem:dmin_bounds} tightens the threshold further (e.g., to $E\subscript{b}/N_0\approx\SI{6.6}{\dB}$ for the rate-$1$\,bit/symbol setups of Sec.~\ref{sec:results}). Consequently, \acp{snr} low enough for the false alarm bound~\eqref{eq:false_alarm_UB} to approach~$\varepsilon$ lie outside the feasible domain of the decoder under study, so the asymmetry holds wherever the block z-channel model applies.

Tab.~\ref{tab:false_alarm} in Sec.~\ref{sec:results} evaluates the false alarm upper bound for representative \ac{fbl} configurations; in every case it lies many orders of magnitude below~$\varepsilon$. There, Eq.~\eqref{eq:false_alarm_UB} is evaluated exactly via the non-central chi-squared form~\eqref{eq:Ppair_Dnull_ncx2} rather than the Chernoff tightening~\eqref{eq:false_alarm_chernoff}, which is looser by roughly an order of magnitude, consistent with the omitted $\Theta(n^{-1/2})$ prefactor. Together with $P\subscript{ers}+P\subscript{con}=\varepsilon$, this confirms the block z-channel structure with crossover probability~$\varepsilon$ and $P\subscript{fa}\approx 0$.

% \subsection{Litter-Masked Block z-Channel}
%
% Consider a special case where \emph{litter sequence} or \emph{garbage sequence}
%
% \todo[inline]{to be done - further split the codebook into two clusters: the valid codewords and the litter sequences. Upon every codebook, look for the best split}
% NOTE: Litter-masked z-channel removed from this manuscript for space reasons. Reserved for follow-up work.

\section{Numerical Results}\label{sec:results}
We now evaluate the bounds of Secs.~\ref{sec:sensitivity_analysis} and~\ref{sec:z_channel} numerically. Unless otherwise stated, we set $M=4$ (QPSK, hence $\nu=2$ and $N=2n$),~$\varepsilon=0.05$, and $\sigma^2=0.5$ per real dimension (i.e., $N_0=1$), with $\Delta=2(1-\bar\eta)E\subscript{s}=\frac{2M}{M-1}E\subscript{s}=\frac{8}{3}E\subscript{s}$ under the $\delta=0$ assumption of Remark~\ref{rem:distance_bounds}, and tested the bounds under different payload lengths, blocklengths, and \acp{snr}. Throughout, $d\subscript{min}\superscript{max}$ is evaluated as the tightest of the three bounds in Lemma~\ref{lem:dmin_bounds}, and every plotted or tabulated configuration satisfies the no-list feasibility condition $d\subscript{min}\superscript{min}\leqslant d\subscript{min}\superscript{max}$; all sweep ranges begin at the corresponding feasibility onset.

\begin{figure}[!htpb]
	\centering
	\begin{subfigure}{\linewidth}
		\centering
		\includegraphics[width=.82\linewidth, trim=5 5 5 5, clip]{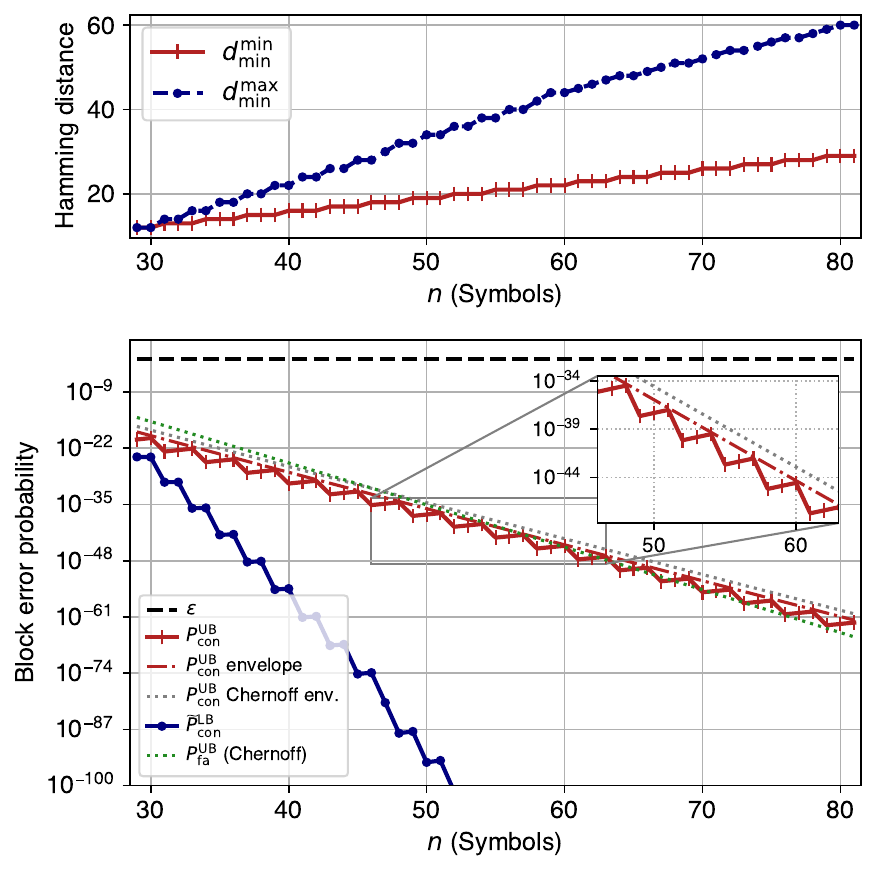}
		\subcaption{$k=16$}
		\label{subfig:p2_bounds_regarding_n_s1}
	\end{subfigure}\\
	\begin{subfigure}{\linewidth}
		\centering
		\includegraphics[width=.82\linewidth, trim=5 5 5 5, clip]{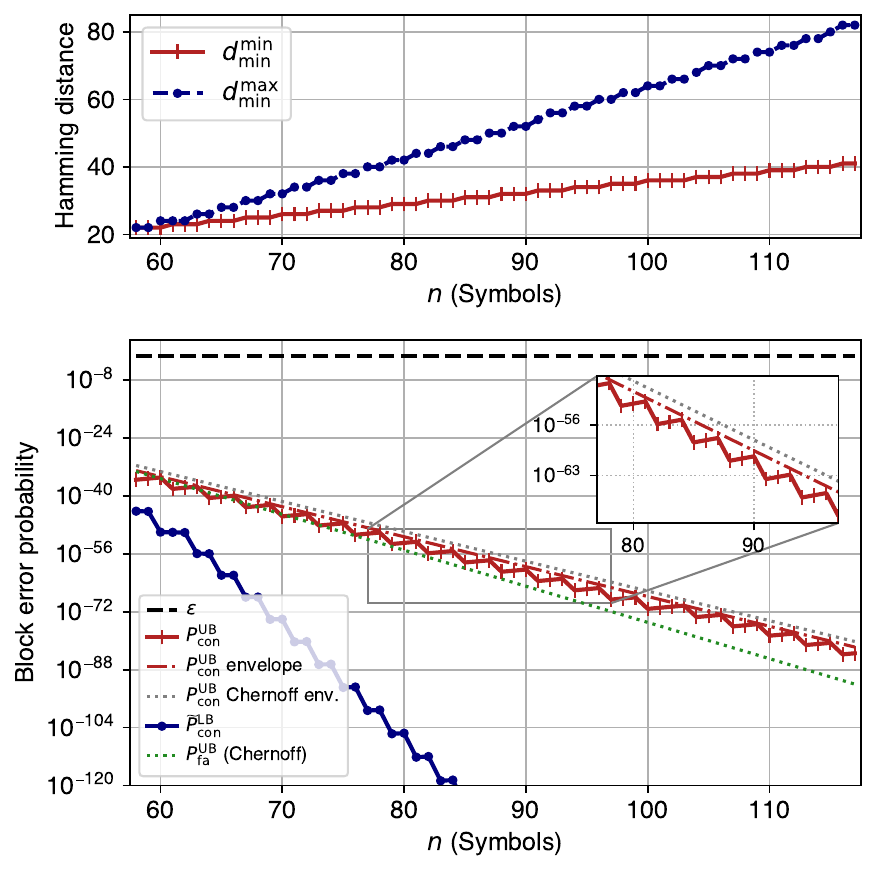}
		\subcaption{$k=32$}
		\label{subfig:p2_bounds_regarding_n_s2}
	\end{subfigure}
	\caption{\Ac{blcp} and \ac{fap} bounds vs.\ $n$ at fixed per-symbol \ac{snr} $E\subscript{s}/N_0=\SI{7}{\dB}$ (i.e., $\gamma=\SI{10}{\dB}$).}
	\label{fig:p2_bounds_regarding_n}
\end{figure}

\begin{figure}[!htpb]
	\centering
	\begin{subfigure}{\linewidth}
		\centering
		\includegraphics[width=.82\linewidth, trim=5 5 5 5, clip]{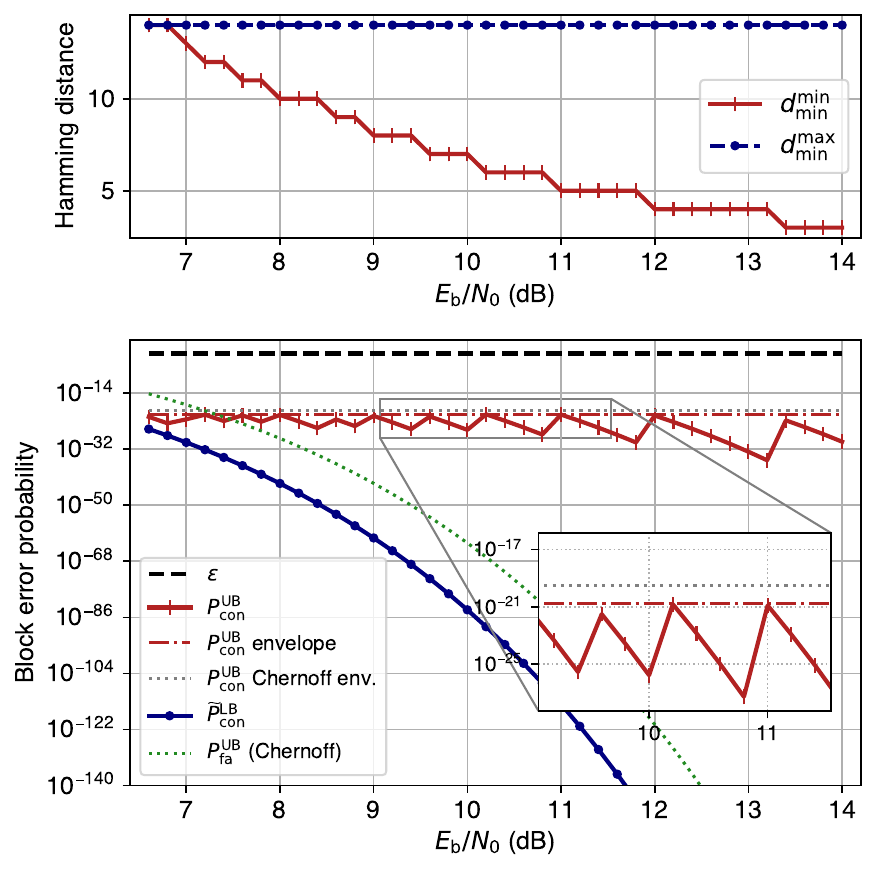}
		\subcaption{With $(32,16)$ codebooks}
		\label{subfig:p2_bounds_regarding_ebn0_s1}
	\end{subfigure}\\
	\begin{subfigure}{\linewidth}
		\centering
		\includegraphics[width=.82\linewidth, trim=5 5 5 5, clip]{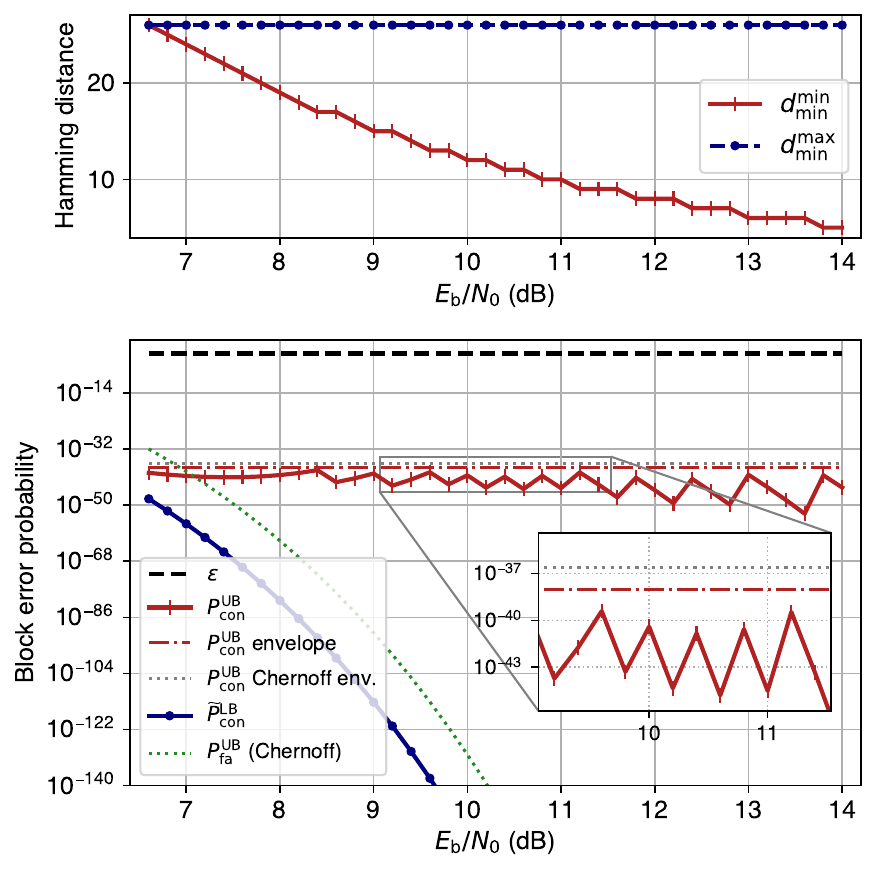}
		\subcaption{With $(64,32)$ codebooks}
		\label{subfig:p2_bounds_regarding_ebn0_s2}
	\end{subfigure}
	\caption{\Ac{blcp} estimate, upper bound, and \ac{fap} upper bound vs.\ the per-information-bit $E\subscript{b}/N_0$.% (both setups carry $1$\,bit per symbol, so $E\subscript{b}/N_0=E\subscript{s}/N_0$); the sweeps begin at the no-list feasibility onset of ${\approx}\SI{6.6}{\dB}$.%
	% Note the $\widetilde{P}\subscript{con}\superscript{LB}$ is convex as proven in Theorem~\ref{th:monotonicity_Pcon_LB_regarding_E}, but may look concave here due to the logarithmic $y$-scale.
	}
	\label{fig:p2_bounds_regarding_ebn0}
\end{figure}

\subsection{Evaluation of the Bounds}\label{subsec:bound_evaluation}
Fig.~\ref{fig:p2_bounds_regarding_n} shows the \ac{blcp} estimate $\widetilde{P}\subscript{con}\superscript{LB}$ and upper bound $P\subscript{con}\superscript{UB}$ versus blocklength $n$ at fixed per-symbol \ac{snr} $\gamma=\SI{10}{\dB}$, for payload lengths $k = 16$ and $k = 32$; the sweeps start at the respective feasibility onsets ($n=29$ and $n=58$). As $n$ increases, both quantities fall by more than a hundred orders of magnitude across the swept ranges, exactly as Theorems~\ref{th:monotonicity_PconLB_regarding_n} and \ref{th:Pcon_UB_regarding_n} describe: the estimate combines mild within-interval variation with dominant multi-order-of-magnitude drops at every $d\subscript{min}\superscript{max}$ transition, while the upper bound follows the proven increase-then-drop sawtooth across the $d\subscript{min}\superscript{min}$-consistent intervals, clearly visible in the zoomed insets.
Both curves remain many orders of magnitude below $\varepsilon = 0.05$, and in particular the rigorous upper bound $P\subscript{con}\superscript{UB}$ does so, confirming that nearly all block errors in this regime are erasures. For system design, this margin confirms that higher-layer protocols can safely model the \ac{phy} as a block erasure channel without additional error detection.
The dash-dotted curve traces the exact envelope of Theorem~\ref{th:Pcon_UB_envelope_n} through the $P\subscript{con}\superscript{UB}$ staircase peaks, while the Chernoff bound~\eqref{eq:envelope_chernoff} (gray dotted) tracks it closely.
Finally, the Chernoff false alarm bound $P\subscript{fa}\superscript{UB}$ of Theorem~\ref{th:false_alarm_bound} (green dotted) decays with~$n$ at a comparable exponential rate; although it can exceed the confusion bounds at the smallest feasible blocklengths, it stays more than $13$ orders of magnitude below~$\varepsilon$ throughout, confirming that false alarms are negligible in this regime.

Fig.~\ref{fig:p2_bounds_regarding_ebn0} plots the confusion probability estimate $\widetilde{P}\subscript{con}\superscript{LB}$ and upper bound $P\subscript{con}\superscript{UB}$ versus the per-information-bit $E_b/N_0$ for codebook setups $(32, 16)$ and $(64, 32)$. The sweeps begin at the no-list feasibility onset ($E\subscript{b}/N_0\approx\SI{6.6}{\dB}$, where $d\subscript{min}\superscript{min}=d\subscript{min}\superscript{max}$, as visible in the upper subplots): below this \ac{snr}, no $M^k$-sized codebook admits an $\varepsilon$-bounded decoder that rejects list output at all. Both curves decrease with increasing \ac{snr}, in agreement with Theorem~\ref{th:monotonicity_ers_and_con_regarding_R}. The step-wise decreases of $d_{\min}^{\min}$ induce upward jump discontinuities in the upper bound, reflecting the discrete nature of distance bounds at fixed decoding radius. As in Fig.~\ref{fig:p2_bounds_regarding_n}, both curves remain orders of magnitude below the \ac{bler} constraint.
Since $n$ and $\varepsilon$ are fixed, the envelope and Chernoff bound form a constant ceiling as given by~\eqref{eq:envelope_UB_Es}; the staircase peaks of $P\subscript{con}\superscript{UB}$ touch this ceiling each time $d\subscript{min}\superscript{min}$ steps down by one.
The Chernoff false alarm bound (green dotted) decreases monotonically with $E\subscript{b}/N_0$ and remains well below~$\varepsilon$ throughout, consistent with the tabulated values in Tab.~\ref{tab:false_alarm}. $P\subscript{fa}\superscript{UB}$ can exceed~$P\subscript{con}\superscript{UB}$ wherever $D\subscript{null}<D\subscript{min}$, i.e., near the feasibility onset of the sweeps and in six of the eight rows of Tab.~\ref{tab:false_alarm}; the z-channel abstraction requires only $P\subscript{fa}\ll\varepsilon$, not $P\subscript{fa}\ll P\subscript{con}$.

Tab.~\ref{tab:false_alarm} evaluates $P\subscript{con}\superscript{UB}$ and $P\subscript{fa}\superscript{UB}$ for feasible \ac{fbl} configurations spanning $M\in\{2,4,8\}$, code rates from $0.12$ to $1$\,bit/symbol, blocklengths from $30$ to $256$, and per-symbol \acp{snr} from $6$ to $\SI{11}{\dB}$. Both bounds fall many orders of magnitude below~$\varepsilon$ in every case; the largest false alarm bound is $3.3\times 10^{-9}$ (the BPSK row). The third row sits at the feasibility edge ($d\subscript{min}\superscript{min}=47$ vs.\ $d\subscript{min}\superscript{max}=48$), and even there $P\subscript{fa}\superscript{UB}<10^{-18}$. Configurations at lower \acp{snr} are infeasible: by Corollary~\ref{co:feasibility_snr}, no $M^k$-sized codebook meets the no-list constraint $2R(\varepsilon)\leqslant D\subscript{min}$ there.

\begin{table}[t]
	\centering
	\caption{Upper bounds $P\subscript{con}\superscript{UB}$ of Theorem~\ref{th:upper_bound_Pcon} and $P\subscript{fa}\superscript{UB}$ of Eq.~\eqref{eq:false_alarm_UB} for feasible \ac{fbl} configurations with $\varepsilon=0.05$.%
	% The tabulated $P\subscript{fa}\superscript{UB}$ values are the exact non-central chi-squared evaluations of Eq.~\eqref{eq:false_alarm_UB} via Eq.~\eqref{eq:Ppair_Dnull_ncx2}, not the Chernoff tightening~\eqref{eq:false_alarm_chernoff}.
	}\label{tab:false_alarm}
	\begin{tabular}{ccccccc}
		\hline
		$M$ & $k$ & $n$ & $k\log_2\!M/n$ & $\gamma$\,[dB] & $P\subscript{con}\superscript{UB}$ & $P\subscript{fa}\superscript{UB}$\\
		\hline
		2 &  4 &  32 & 0.12 &  6 & $<10^{-16}$ & $<10^{-8}$\\
		4 &  6 &  30 & 0.40 &  9 & $<10^{-25}$ & $<10^{-15}$\\
		4 &  6 &  64 & 0.19 &  7 & $<10^{-54}$ & $<10^{-18}$\\
		4 & 16 &  32 & 1.00 & 10 & $<10^{-22}$ & $<10^{-18}$\\
		4 & 32 &  64 & 1.00 & 11 & $<10^{-40}$ & $<10^{-62}$\\
		4 & 32 & 128 & 0.50 &  8 & $<10^{-90}$ & $<10^{-45}$\\
		4 & 32 & 256 & 0.25 &  7 & $<10^{-195}$ & $<10^{-75}$\\
		8 &  6 &  64 & 0.28 & 11 & $<10^{-53}$ & $<10^{-76}$\\
		\hline
	\end{tabular}
\end{table}

\subsection{Validation with Explicit Codebooks}\label{subsec:code_validation}
To connect the bounds to practical codes, we evaluate explicit codebooks under the same $\varepsilon$-bounded \ac{ml} decoder. A structural observation renders this evaluation exact: for any admissible codebook ($2R\leqslant D\subscript{min}$) the spherical decision regions are disjoint, so the sums over neighboring codewords are identities rather than union bounds, i.e., $P\subscript{con}=\sum_{d>0}A_d\,P\subscript{pair}(\sqrt{\Delta d})$ with $\{A_d\}$ the code's distance spectrum, and likewise $P\subscript{fa}=M^k P\subscript{pair}(\sqrt{nE\subscript{s}})$ for constant-envelope codebooks. For BPSK, $\Delta=4E\subscript{s}$ holds exactly at every differing position, so no uniform mismatch assumption is involved. We consider the BCH$(15,7)$ and BCH$(31,16)$ codes ($d\subscript{min}=5$ and $7$, generators constructed from the minimal polynomials over $\mathbb{F}_{2^4}$ and $\mathbb{F}_{2^5}$), the $(32,16)$ polar code with the weight-based (Reed--Muller) information set ($d\subscript{min}=8$), and, in an illustrative role, a PEG-constructed regular-$(3,6)$ LDPC code of the same size ($d\subscript{min}=6$).

\begin{figure}[!htpb]
	\centering
	\includegraphics[width=.85\linewidth]{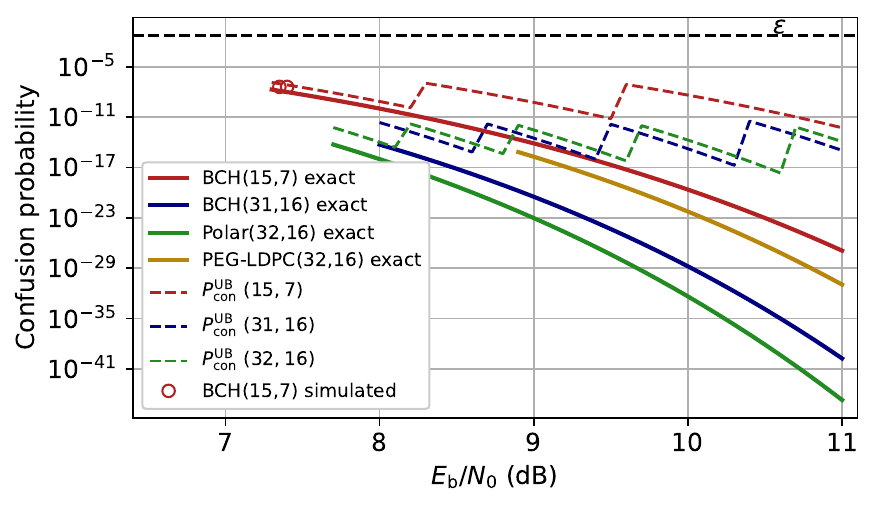}
	\caption{Exact confusion probabilities of explicit codebooks under bounded \ac{ml} decoding, the $(M,n,k)$-only bounds $P\subscript{con}\superscript{UB}$, and simulated values for BCH$(15,7)$ ($5\times10^7$ trials per point); each exact curve starts at the code's admissibility onset.}
	\label{fig:code_validation}
\end{figure}

Fig.~\ref{fig:code_validation} shows the exact confusion probabilities versus the per-information-bit $E\subscript{b}/N_0$, and Tab.~\ref{tab:code_validation} summarizes the reference point $E\subscript{b}/N_0=\SI{8}{\dB}$. Three observations stand out. First, every exact value respects the $(M,n,k)$-only sandwich: it lies below $P\subscript{con}\superscript{UB}$ and above the uniform-spectrum estimate $\widetilde{P}\subscript{con}\superscript{LB}$, the gap to the latter reflecting each code's shortfall from the benchmark $d\subscript{min}\superscript{max}$ (one unit for the BCH codes, two for the polar code). Second, code quality directly governs admissibility: at $\varepsilon=0.05$ the LDPC code's weaker minimum distance postpones its admissibility to $E\subscript{b}/N_0\approx\SI{8.9}{\dB}$ and leaves its confusion probability orders of magnitude above that of the equally-sized polar code, while the same polar blocklength with the Bhattacharyya-designed information set ($d\subscript{min}=4$) is inadmissible below ${\approx}\SI{10.6}{\dB}$; Lemma~\ref{lem:dmin_bounds} thus acts as a design constraint. Third, the Monte-Carlo bounded-\ac{ml} simulation agrees with the exact evaluation within the statistical accuracy of the rare events (two confusions in $5\times10^7$ trials at $\SI{7.35}{\dB}$ against an exact value of $1.4\times10^{-8}$; none at higher \acp{snr}, where the exact values fall below $5\times10^{-9}$), the measured erasure rates satisfy $P\subscript{ers}\approx\varepsilon-P\subscript{con}$ throughout, and decoding pure noise reproduces the exact false alarm values tightly ($5.57\times10^{-5}$ measured vs.\ $5.63\times10^{-5}$ exact at $\SI{7.5}{\dB}$); in every case, all block errors but a vanishing fraction are erasures.

\begin{table}[t]
	\centering
	\caption{Explicit codebooks at $E\subscript{b}/N_0=\SI{8}{\dB}$, $\varepsilon=0.05$, BPSK: minimum distance, smallest admissible $\varepsilon$, and exact confusion and false alarm probabilities.}\label{tab:code_validation}
	\begin{tabular}{lcccc}
		\hline
		Code & $d\subscript{min}$ & $\varepsilon\subscript{min}$ & $P\subscript{con}$ & $P\subscript{fa}$\\
		\hline
		BCH$(15,7)$ & 5 & $0.014$ & $1.0\times10^{-10}$ & $4.0\times10^{-6}$\\
		BCH$(31,16)$ & 7 & $0.044$ & $5.3\times10^{-15}$ & $6.7\times10^{-15}$\\
		Polar-RM$(32,16)$ & 8 & $0.020$ & $9.8\times10^{-17}$ & $1.0\times10^{-14}$\\
		Polar-Bh.$(32,16)$ & 4 & $0.80$ & \multicolumn{2}{c}{inadmissible}\\
		PEG-LDPC$(32,16)$ & 6 & $0.22$ & \multicolumn{2}{c}{inadmissible}\\
		\hline
	\end{tabular}
\end{table}

Two further checks address the modeling assumptions. For the uniform mismatch assumption of Remark~\ref{rem:distance_bounds}, we drew a random $\mathbb{Z}_4$-linear $(16,8)$ codebook on QPSK: $99.5\%$ of all codeword pairs are $\delta$-uniform at $\delta=1/3$ (mean deviation $0.11$), consistent with Lemma~\ref{lem:mismatch_concentration}; the residual outliers are low-weight pairs dominated by a single transition type, for which the sandwich~\eqref{eq:delta_sandwich} with the extremal correlation applies instead. For practical (non-\ac{ml}) decoding, the disjointness of the decision spheres yields a structural guarantee: under the same distance-threshold acceptance test, accepting a wrong codeword requires $\Vert\mathbf{y}-\mathbf{x}'\Vert\leqslant R$, which for an admissible codebook already implies that $\mathbf{x}'$ is the nearest codeword. Hence any candidate selector combined with the acceptance test attains a confusion probability at most equal to the bounded-\ac{ml} value; suboptimal selection only shifts probability mass from correct decisions and confusions into erasures. We confirmed this with bounded \emph{successive-cancellation} decoding of the polar code: at the reference point, $2\times10^5$ trials produced no confusions at an erasure rate of $0.0505$, matching the bounded-\ac{ml} behavior and supporting the \ac{ml}-based bounds as benchmarks for practical decoders.

\section{Non-Constant-Envelope Constellations}\label{sec:discussion}

% \begin{remark}[Generalization beyond constant-envelope constellations]\label{rem:non_constant_envelope}
	The constant-envelope assumption of Sec.~\ref{sec:bounded_decoding} simplifies the parameterization $\Delta = 2(1-\bar{\eta})E\subscript{s}$ but is not required by the bounding framework, which relies only on the symbol-wise labeling.

	The quantities entering the bounds are the extremal per-symbol squared Euclidean distances $\Delta\subscript{min} \triangleq \min_{a\neq b\in\mathcal{M}} |a-b|^2$ and $\Delta\subscript{max} \triangleq \max_{a\neq b\in\mathcal{M}} |a-b|^2$, which yield the Hamming-to-Euclidean sandwich $\Delta\subscript{min}\cdot d(\mathbf{c},\mathbf{c}') \leqslant D^2(\mathbf{x},\mathbf{x}') \leqslant \Delta\subscript{max}\cdot d(\mathbf{c},\mathbf{c}')$ for any pair of distinct codewords $\mathbf{x}=\mu(\mathbf{c})$ and $\mathbf{x}'=\mu(\mathbf{c}')$. This inequality holds for any $M$-ary constellation, including non-constant-envelope ones such as QAM and APSK, as it requires only that every codeword component is drawn from the same alphabet~$\mathcal{M}$.

	The decision radius $R(\varepsilon)=\sigma F^{-1}_{\chi_N}(1-\varepsilon)$ depends only on the noise statistics and~$\varepsilon$, not on the signal energy. The bound and estimate of Proposition~\ref{th:lower_bound_Pcon} and Theorem~\ref{th:upper_bound_Pcon}, the sensitivity analysis of Sec.~\ref{sec:sensitivity_analysis}, and the envelope of Theorem~\ref{th:Pcon_UB_envelope_n} all carry over by substituting $\Delta\subscript{min}$ and $\Delta\subscript{max}$ for the single~$\Delta$. The uniformity ratio $\alpha = \Delta\subscript{min}/\Delta\subscript{max}$ is then the sole constellation-dependent parameter of the envelope. For constant-envelope constellations under the $\delta=0$ mapping of Remark~\ref{rem:distance_bounds}, $\Delta\subscript{min}=\Delta\subscript{max}=\Delta$ and $\alpha=1$.
% \end{remark}

The false alarm analysis of Sec.~\ref{sec:z_channel} extends as well. Theorem~\ref{th:false_alarm_bound} uses only the common distance $D\subscript{null}=\sqrt{nE\subscript{s}}$ of all codewords from the origin and therefore holds for every spherical codebook, whether or not its components have constant modulus. A symbol-wise codebook over a non-constant-envelope constellation is no longer spherical: the codeword energy $\Vert\mathbf{x}\Vert^2=\sum_{i=1}^n\vert x_i\vert^2$ varies across codewords, and the union bound~\eqref{eq:false_alarm_UB} generalizes to
\begin{equation}\label{eq:false_alarm_UB_nce}
	P\subscript{fa}\leqslant\sum_{\mathbf{x}\in\mathcal{X}}P\subscript{pair}\bigl(\Vert\mathbf{x}\Vert\bigr)\leqslant M^k\,P\subscript{pair}\Bigl(\sqrt{n\,E\subscript{s}\superscript{min}}\Bigr),
\end{equation}
where $E\subscript{s}\superscript{min}\triangleq\min_{a\in\mathcal{M}}\vert a\vert^2$ is the minimum symbol energy. The coarse right-hand bound takes the same form as Eq.~\eqref{eq:false_alarm_UB} with $\gamma$ replaced by $\gamma\subscript{min}\triangleq E\subscript{s}\superscript{min}/\sigma^2$; for 16-QAM, $E\subscript{s}\superscript{min}=\bar{E}\subscript{s}/5$ with $\bar{E}\subscript{s}$ the average symbol energy, i.e., a $\SI{7}{\dB}$ penalty relative to the average-energy \ac{snr}. This worst case is attained only by the all-minimum-energy codeword, and the middle term of~\eqref{eq:false_alarm_UB_nce} is tighter. For i.i.d.\ symbols, the fraction of codewords with energy below $n(\bar{E}\subscript{s}-t)$ is at most $\exp[-2nt^2/(E\subscript{s}\superscript{max}-E\subscript{s}\superscript{min})^2]$ by Hoeffding's inequality, but since $P\subscript{pair}$ grows hyper-exponentially as the energy decreases, the sum is dominated by codewords with energy strictly below $n\bar{E}\subscript{s}$, at a deficit set by the balance between their scarcity and their larger false alarm probability. The effective \ac{snr} of the middle term thus lies between $\gamma\subscript{min}$ and $\bar{\gamma}=\bar{E}\subscript{s}/\sigma^2$ and follows exactly from~\eqref{eq:false_alarm_UB_nce} for any codebook; excluding codewords with energy below $nE\subscript{thr}$ at the encoder (energy shaping) raises it to $E\subscript{thr}/\sigma^2$ at a rate loss that vanishes with $n$ for $E\subscript{thr}<\bar{E}\subscript{s}$ by the Hoeffding bound. The false alarm analysis of Sec.~\ref{sec:z_channel} applies with $\gamma$ replaced by this effective \ac{snr}, and the requirement $\mathbf{0}\notin\mathcal{M}$ remains. Tab.~\ref{tab:assumptions} summarizes the modeling assumptions, where each is introduced, and where (or whether) it is relaxed.

\begin{table}[t]
	\centering
	\caption{Modeling assumptions, origins, and scope.}\label{tab:assumptions}
	\begin{tabular}{p{0.32\linewidth}cp{0.42\linewidth}}
		\hline
		Assumption & Origin & Relaxation / scope\\
		\hline
		Constant-envelope constellation & Sec.~\ref{sec:bounded_decoding} & relaxed here via $(\Delta\subscript{min},\Delta\subscript{max})$; $\alpha$ governs the envelope\\
		Symbol-wise one-to-one labeling of an $M$-ary code & Sec.~\ref{sec:bounded_decoding} & needed for the Hamming-to-Euclidean relation; general spherical codes: false alarm bound (Theorem~\ref{th:false_alarm_bound}) only\\
		Uniform symbol mismatch ($\delta=0$) & Rem.~\ref{rem:distance_bounds} & exact for equicorrelated constellations and symbol-balanced difference codewords; otherwise sandwich~\eqref{eq:delta_sandwich} with $\delta$ from Lemma~\ref{lem:mismatch_concentration} or from the code (Sec.~\ref{subsec:code_validation})\\
		Bounded-distance (spherical) decoding, no list output & Sec.~\ref{sec:bounded_decoding} & one-sided benchmark for any selector under the same test (Sec.~\ref{subsec:code_validation}); other acceptance rules not covered\\
		\ac{awgn} isotropy & Sec.~\ref{sec:bounded_decoding} & fading makes the noise geometry direction-dependent\\
		Equiprobable codewords, $\vert\mathcal{X}\vert=M^k$ & Sec.~\ref{sec:bounded_decoding} & non-uniform priors left to future work (Sec.~\ref{sec:conclusion})\\
		$\mathbf{0}\notin\mathcal{M}$ (z-channel only) & Sec.~\ref{sec:z_channel} & holds for PSK, QAM, APSK; verify for non-standard alphabets\\
		\hline
	\end{tabular}
\end{table}

\section{Conclusion and Outlook}\label{sec:conclusion}
We have derived probability bounds for the \ac{blcp} and \ac{blep} of error-bounded decoders at finite blocklength, and analyzed their sensitivity to blocklength and \ac{snr}. Extending the model to idle blocks, we showed that the \ac{fap} is negligible, establishing a block z-channel whose asymmetry emerges from the bounded-decoding geometry. Numerical results across a wide range of \ac{fbl} configurations, together with exact evaluations of explicit BCH, polar, and LDPC codebooks, confirm that the \ac{blcp} lies far below the \ac{bler} constraint and that the \ac{fap} remains well below~$\varepsilon$. These results provide quantitative \ac{phy}-layer support for the block erasure channel and block z-channel abstractions used in higher-layer protocol design.

The bounding framework extends to non-constant-envelope constellations such as QAM and APSK via the extremal per-symbol distances $\Delta\subscript{min}$ and $\Delta\subscript{max}$. Natural directions for future work include non-uniform codeword distributions and exploiting the block erasure and z-channel abstractions in cross-layer protocol design, e.g., informing outer erasure code selection at the transport layer or MAC-layer activity detection that exploits the z-channel asymmetry.

% \section*{Acknowledgment}
% The work of B. Han and H. D. Schotten was in part supported by the Federal Ministry of Research, Technology and Space (BMFTR) of Germany in the project Open6GHub+ (16KIS2406). The work of Y. Zhu and A. Schmeink was in part supported by BMFTR in the project 6GEM+ (16KIS2409K). The work of R. F. Schaefer was in part supported by BMFTR in the project 6G-life (16KIS2413K), and in part supported by the German Research Foundation (DFG) in the Cluster of of Excellence ``Centre for Tactile Internet with Human-in-the-Loop'' (CeTI) of TU Dresden (390696704). The work of G. Caire was in part supported by BMFTR in the project 6G-RIC (16KISK030). The work of H. V. Poor was supported in part by an Innovation Grant from Princeton NextG. B. Han (bin.han@rptu.de) is the corresponding author.

\bibliographystyle{IEEEtran}
\bibliography{references}

\appendices
% \section{Proof of theorem~\ref{th:monotonicity_ers_and_con_regarding_R}}\label{app:monotonicity_ers_and_con_regarding_R}

\section{Proof of Lemma~\ref{lem:dmin_bounds}}\label{app:dmin_bounds}
\begin{proof}
	The lower bound~\eqref{eq:lower_bound_dmin} and the Hamming-bound term~\eqref{eq:hamming_dmax} were established in~\cite[App.~A]{HZS+2026confusions}; the derivations carry over verbatim with the per-symbol squared-distance parameter $E$ therein replaced by $\Delta$ and the noise dimension by $N$. It remains to establish the Plotkin and Singleton terms of Eq.~\eqref{eq:upper_bound_dmin}.

	Two classical arguments provide them. First, the Plotkin bound: in any single symbol position, the number of \emph{ordered} codeword pairs differing there is at most $\vert\mathcal{C}\vert^2(1-1/M)$, since with $m_a$ codewords carrying symbol $a$ in that position, the mismatched ordered pairs number $\vert\mathcal{C}\vert^2-\sum_a m_a^2\leqslant\vert\mathcal{C}\vert^2(1-1/M)$ by the Cauchy--Schwarz inequality. Summing over the $n$ positions and lower-bounding the total by $\vert\mathcal{C}\vert(\vert\mathcal{C}\vert-1)\,d\subscript{min}$ yields, with $\vert\mathcal{C}\vert=M^k$, $d\subscript{min}(\mathcal{C})\leqslant n(M-1)M^{k-1}/(M^k-1)$. Second, the Singleton bound: deleting any $d\subscript{min}-1$ coordinates leaves all $M^k$ codewords distinct, which requires $M^{n-d\subscript{min}+1}\geqslant M^k$, i.e., $d\subscript{min}\leqslant n-k+1$. Taking the minimum of the three bounds (with the Plotkin term rounded down to an integer) establishes Eq.~\eqref{eq:upper_bound_dmin}.
\end{proof}

\section{Proof of Lemma~\ref{lem:monotonicity_Ppair}}\label{app:monotonicity_Ppair}
\begin{proof}
	Without loss of generality, let $\mathbf{d}\triangleq\mathbf{x}'-\mathbf{x}$ with $\Vert\mathbf{d}\Vert=D$, and align $\mathbf{d}=D\mathbf{e}_1$ by exploiting the rotational symmetry of $\mathbf{w}\sim\mathcal{N}(0,\sigma^2 I_N)$. Then $P\subscript{pair}(D) = P((w_1 - D)^2 + \sum_{i=2}^{N} w_i^2 \leqslant R^2)$. Let $S \triangleq \sum_{i=2}^{N} w_i^2/\sigma^2 \sim \chi^2_{N-1}$ (independent of $w_1$), $m \triangleq D/\sigma$, and $\rho \triangleq R/\sigma$. Conditioning on $S$ yields
	\begin{equation}\label{eq:Ppair_decomposed}
		P\subscript{pair}(D) = \mathbb{E}_S\!\left[\mathbf{1}_{S \leqslant \rho^2}\, h\!\left(m, \sqrt{\rho^2 - S}\right)\right],
	\end{equation}
	where $h(m, a) \triangleq \Phi(m + a) - \Phi(m - a)$, with $\Phi$ and $\phi$ denoting the standard normal \ac{cdf} and \ac{pdf}, respectively.

	\emph{Monotonicity.} Differentiating: $\frac{\partial h}{\partial m} = \phi(m + a) - \phi(m - a)$. For $D \geqslant 2R$, we have $m \geqslant 2\rho$ and $a \leqslant \rho$, so $m + a > m - a \geqslant \rho > 0$. Since $\phi$ is strictly decreasing on $(0, \infty)$, $\frac{\partial h}{\partial m} < 0$, thus $\frac{\partial P\subscript{pair}}{\partial D} = \sigma^{-1}\mathbb{E}_S[\mathbf{1}_{S \leqslant \rho^2}\, \frac{\partial h}{\partial m}] < 0$.

	\emph{Convexity.} We compute $\frac{\partial^2 h}{\partial m^2} = -(m + a)\phi(m + a) + (m - a)\phi(m - a)$, which is positive if and only if $f(a) \triangleq (m - a)\, e^{2m a} - (m + a) > 0$. Since $f(0) = 0$ and $f'(a) = e^{2m a}(2m^2 - 2m a - 1) - 1$, we have $f'(0) = 2m^2 - 2 > 0$ for $m > 1$, so $f$ initially increases from zero. The map $a \mapsto e^{2m a}(2m^2 - 2m a - 1)$ has a unique stationary point (a maximum), so $f'$ has at most one zero on $[0, \rho]$: $f$ first increases and may later decrease, hence $f(a) > 0$ on $(0, \rho]$ if and only if $f(\rho) > 0$. Since $a \leqslant \rho \leqslant m/2$, the worst case is $m = 2\rho$ (i.e., $D = 2R$), giving $f(\rho)\vert_{m=2\rho} = \rho(e^{4\rho^2} - 3) > 0 \iff \rho > \sqrt{\ln 3}/2 \approx 0.52$. This condition, $R(\varepsilon)/\sigma = F^{-1}_{\chi_N}(1-\varepsilon) > \sqrt{\ln 3}/2$, is satisfied for all practical \ac{fbl} parameters (any moderate blocklength with $\varepsilon \ll 1$). Under this condition, $\frac{\partial^2 h}{\partial m^2} \geqslant 0$ for all $a \in [0, \rho]$ (with equality only at $a = 0$), and therefore $\frac{\partial^2 P\subscript{pair}}{\partial D^2} = \sigma^{-2}\mathbb{E}_S[\mathbf{1}_{S \leqslant \rho^2}\, \frac{\partial^2 h}{\partial m^2}] > 0$.
\end{proof}

\section{Proof of Lemma~\ref{lem:Ppair_dimension_monotone}}\label{app:Ppair_dimension_monotone}
\begin{proof}
	Work with $\sigma$-normalized quantities and write $N=\nu n$. Let $U\sim\chi^2_N$ and $Y\sim\chi^2_\nu$ be independent, let $T\sim\chi^2_N(\lambda)$ with $\lambda=D^2/\sigma^2>0$ be independent of $Y$, and abbreviate $a=\rho_N^2$ and $b=\rho_{N+\nu}^2$ for the squared normalized decision radii at blocklengths $n$ and $n+1$, respectively. By Eq.~\eqref{eq:con_prob_upon_distance} and the additivity of independent chi-squared variables, $P\subscript{pair}^{(n)}(D)=F_T(a)$ and $P\subscript{pair}^{(n+1)}(D)=\mathbb{E}_Y[F_T(b-Y)]$, while the $\varepsilon$-calibration of the decision radius at both blocklengths gives
	\begin{equation}\label{eq:calibration_identity}
		\mathbb{E}_Y\!\left[F_U(b-Y)\right]=P(U+Y\leqslant b)=1-\varepsilon=F_U(a).
	\end{equation}
	The non-central chi-squared family has a monotone likelihood ratio in its argument: expanding the non-central density as a Poisson mixture, $f_T(t)/f_U(t)=e^{-\lambda/2}\sum_{j\geqslant 0}[(\lambda/2)^j/j!]\,f_{\chi^2_{N+2j}}(t)/f_{\chi^2_N}(t)$, and each ratio $f_{\chi^2_{N+2j}}(t)/f_{\chi^2_N}(t)\propto t^{j}$ is nondecreasing in $t$ (strictly increasing for $j\geqslant1$), so $f_T/f_U$ is strictly increasing for $\lambda>0$. Consequently $\varphi\triangleq F_T\circ F_U^{-1}$ is strictly convex on $(0,1)$, since $\varphi'(u)=(f_T/f_U)\bigl(F_U^{-1}(u)\bigr)$ is strictly increasing in $u$. Noting that $F_T(x)=\varphi\bigl(F_U(x)\bigr)$ for all $x\geqslant0$ and applying Jensen's inequality to the non-degenerate random variable $F_U(b-Y)$, we obtain $P\subscript{pair}^{(n+1)}(D)=\mathbb{E}[\varphi(F_U(b-Y))]>\varphi(\mathbb{E}[F_U(b-Y)])=\varphi(F_U(a))=F_T(a)=P\subscript{pair}^{(n)}(D)$, where the inequality is strict by the strict convexity of $\varphi$.
\end{proof}

\section{Proof of Proposition~\ref{th:lower_bound_Pcon}}\label{app:lower_bound_Pcon}
\begin{proof}
	Restricting the sum in Eq.~\eqref{eq:overall_con_prob} to pairs at the minimum Hamming distance $d\subscript{min}$ of $\mathcal{C}$ drops only non-negative terms, so $P\subscript{con}\geqslant\vert\mathcal{X}\vert^{-1}\sum_{\mathbf{x}\in\mathcal{X},\,\mathbf{x}'\in\mathcal{V}(\mathcal{X},\mathbf{x},d\subscript{min})}P\subscript{pair}(D(\mathbf{x},\mathbf{x}'))$.
	For each pair at Hamming distance $d\subscript{min}$, by Eq.~\eqref{eq:hamming_to_euclidean}, $D(\mathbf{x},\mathbf{x}')=\sqrt{\Delta\cdot d\subscript{min}}$. Since $d\subscript{min}\leqslant d\subscript{min}\superscript{max}$ by Lemma~\ref{lem:dmin_bounds} and $P\subscript{pair}$ is monotonically decreasing by Lemma~\ref{lem:monotonicity_Ppair}, each term satisfies $P\subscript{pair}(D(\mathbf{x},\mathbf{x}'))\geqslant P\subscript{pair}(\sqrt{\Delta\cdot d\subscript{min}\superscript{max}})$, so that
	\begin{equation}\label{eq:lower_bound_Pcon_step2}
		P\subscript{con}\geqslant P\subscript{pair}\!\left(\sqrt{\Delta\cdot d\subscript{min}\superscript{max}}\right)\cdot\bar{A}\subscript{min},
	\end{equation}
	where $\bar{A}\subscript{min}\triangleq\vert\mathcal{X}\vert^{-1}\sum_{\mathbf{x}\in\mathcal{X}}\vert\mathcal{V}(\mathcal{X},\mathbf{x},d\subscript{min})\vert$ is the average number of minimum-distance neighbors per codeword. So far, the chain of inequalities is rigorous.

	It remains to estimate $\bar{A}\subscript{min}$. For a well-distributed code whose $M^k$ codewords are approximately uniformly spread in $\mathcal{A}^n$, the number of codewords at Hamming distance $\kappa$ from any reference codeword is approximately $M^k/M^n$ times the total number of $M$-ary sequences at that distance. Evaluating this uniform-spectrum estimate at $\kappa=d\subscript{min}\superscript{max}$:
	\begin{equation}\label{eq:neighbor_count_estimate}
		\bar{A}\subscript{min}\approx\frac{1}{M^{n-k}}\comb{n}{d\subscript{min}\superscript{max}}(M-1)^{d\subscript{min}\superscript{max}}.
	\end{equation}
	Substituting this estimate into Eq.~\eqref{eq:lower_bound_Pcon_step2} yields Eq.~\eqref{eq:lower_bound_Pcon}. Eq.~\eqref{eq:neighbor_count_estimate} is an approximation, not a one-sided bound: $\widetilde{P}\subscript{con}\superscript{LB}$ is an \emph{estimate} for codebooks that attain $d\subscript{min}\superscript{max}$, and Sec.~\ref{subsec:code_validation} quantifies its gap for explicit codes.
\end{proof}

\section{Proof of Lemma~\ref{lem:bound_lambda}}\label{app:bound_lambda}
\begin{proof}
	By the Plotkin term of Eq.~\eqref{eq:upper_bound_dmin},
	% \begin{equation}
		$d\subscript{min}\superscript{max}(n)\leqslant\frac{n(M-1)M^{k-1}}{M^k-1}$.
	% \end{equation}
	Since $n\leqslant M^k-2$, we have $\frac{n}{M^k-1}<1$, thus
	% \begin{equation}
	% 	\frac{n(M-1)M^{k-1}}{M^k-1}=\frac{M-1}{M}\cdot\frac{nM^k}{M^k-1}<\frac{(M-1)(n+1)}{M}.
	% \end{equation}
	$\frac{n(M-1)M^{k-1}}{M^k-1}=\frac{M-1}{M}\cdot\frac{nM^k}{M^k-1}<\frac{(M-1)(n+1)}{M}$.
\end{proof}

\section{Proof of Theorem~\ref{th:monotonicity_PconLB_regarding_n}}\label{app:monotonicity_PconLB_regarding_n}
\begin{proof}
	Write $P\subscript{pair}^{(n)}(D)$ for the pairwise confusion probability at blocklength~$n$. Within a $d\subscript{min}\superscript{max}$-consistent interval, i.e., when $d\subscript{min}\superscript{max}(n+1)=d\subscript{min}\superscript{max}(n)\triangleq d$, substituting Eq.~\eqref{eq:lower_bound_Pcon} at blocklengths $n+1$ and $n$ and using $\binom{n+1}{d}/\binom{n}{d}=(n+1)/(n+1-d)$ yields Eq.~\eqref{eq:PconLB_interval_ratio} directly. The prefactor bound $\zeta(n)<1$ is equivalent to $d<(n+1)(M-1)/M$, which is exactly Lemma~\ref{lem:bound_lambda}, and the geometric factor exceeds one by Lemma~\ref{lem:Ppair_dimension_monotone}.

	At a boundary where $d\subscript{min}\superscript{max}(n+1)=d'>d=d\subscript{min}\superscript{max}(n)$, the same substitution yields Eq.~\eqref{eq:PconLB_jump_ratio}. Its combinatorial prefactor is bounded by $\binom{n+1}{d'}(M-1)^{d'}/[\binom{n}{d}(M-1)^{d}M]\leqslant [(n+1)(M-1)]^{d'-d}\zeta(n)$, polynomial in $n$ for bounded jump sizes $d'-d$ (one or two in Sec.~\ref{sec:results}), while the pairwise factor contracts hyper-exponentially in $\sqrt{\Delta d'}-\sqrt{\Delta d}$ by Lemma~\ref{lem:monotonicity_Ppair}.
\end{proof}

% Commented out: LB envelope proof (see corresponding theorem comment in main text)
% \section{Proof of Theorem~\ref{th:Pcon_LB_envelope_n}}\label{app:Pcon_LB_envelope_n}
% \begin{proof}
% 	At the last blocklength $\bar{n}_j = n_j - 1$ before the $j$-th jump of $d\subscript{min}\superscript{max}$, we have $d\subscript{min}\superscript{max}(\bar{n}_j) = d_j$ and, from Eq.~\eqref{eq:lower_bound_Pcon},
% 	\begin{equation}
% 		\widetilde{P}\subscript{con}\superscript{LB}(\bar{n}_j) = \Lambda_j \cdot P\subscript{pair}^{(\bar{n}_j)}\!\left(\sqrt{\Delta\subscript{max}\cdot d_j}\right).
% 	\end{equation}
% 	Recall from Eq.~\eqref{eq:con_prob_upon_distance} that $P\subscript{pair}(D) = P(\Vert\mathbf{w}-\mathbf{d}\Vert \leqslant R)$ where $\mathbf{w}\sim\mathcal{N}(\mathbf{0},\sigma^2\mathbf{I}_{\bar{n}_j})$ and $\Vert\mathbf{d}\Vert = D$. Normalizing by $\sigma$ and squaring, this equals $P(X \leqslant \rho_j^2)$ where $X \sim \chi^2_{\bar{n}_j}(D^2/\sigma^2)$ with $D^2/\sigma^2 = \Delta\subscript{max}\cdot d_j/\sigma^2$ and $\rho_j = R(\bar{n}_j)/\sigma = F^{-1}_{\chi_{\bar{n}_j}}(1-\varepsilon)$, which gives Eq.~\eqref{eq:envelope_LB}.
%
% 	Since $\widetilde{P}\subscript{con}\superscript{LB}$ decreases within each $d\subscript{min}\superscript{max}$-consistent interval (Theorem~\ref{th:monotonicity_PconLB_regarding_n}) while the transition to a new interval changes both $\Lambda_j$ and the non-centrality parameter simultaneously, $\widetilde{P}\subscript{con}\superscript{LB}$ experiences a discontinuity in its first-order difference at each boundary $\bar{n}_j$.
% \end{proof}

\section{Proof of Theorem~\ref{th:Pcon_UB_envelope_n}}\label{app:Pcon_UB_envelope_n}
\begin{proof}
	At the right boundary $n_i$ of the $d\subscript{min}\superscript{min}=i$ interval, by Eq.~\eqref{eq:D_at_jump}, $P\subscript{con}\superscript{UB}(n_i) = (M^k-1)\,P\subscript{pair}^{(n_i)}(D_i)$ with $D_i = 2R(n_i)$. By the definition of $P\subscript{pair}$ in Eq.~\eqref{eq:con_prob_upon_distance}, this is the probability that $\mathbf{w}\sim\mathcal{N}(\mathbf{0},\sigma^2\mathbf{I}_{N_i})$ with $N_i=\nu n_i$ satisfies $\Vert\mathbf{w}-\mathbf{d}\Vert \leqslant R$ where $\Vert\mathbf{d}\Vert = D_i$. Normalizing by $\sigma$ and squaring, we obtain $P\subscript{pair}(D_i) = P(X \leqslant \rho_i^2)$ where $X = \Vert\mathbf{w}/\sigma - \mathbf{d}/\sigma\Vert^2 \sim \chi^2_{N_i}(D_i^2/\sigma^2)$ is non-central chi-squared with $N_i$ degrees of freedom and non-centrality $\lambda_i = D_i^2/\sigma^2 = 4\rho_i^2$, with $\rho_i = R(n_i)/\sigma = F^{-1}_{\chi_{N_i}}(1-\varepsilon)$, which gives Eq.~\eqref{eq:envelope_exact}.

	To show the monotonicity of the envelope, we observe that at every right boundary $n_i$ the geometric ratio $D_i/R(n_i) = 2$ is the same, so the envelope value $g(N)\triangleq F_{\chi^2_N(4\rho_N^2)}(\rho_N^2)$ depends on $n$ only through the dimension $N=\nu n$, where $\rho_N = F^{-1}_{\chi_N}(1-\varepsilon)$. By the normal approximation to the non-central chi-squared distribution, $g(N)\approx\Phi(-4\sqrt{N}/\sqrt{18})$: the z-score separation between the non-central mean $N+4\rho_N^2$ and the threshold $\rho_N^2$ grows as $\Theta(\sqrt{N})$, so $g(N)$ is strictly decreasing. Hence $P\subscript{con}\superscript{UB}(n_{i+1}) < P\subscript{con}\superscript{UB}(n_i)$.

	For the Chernoff bound, the lower-tail Markov inequality gives $P(X\leqslant \rho_i^2) \leqslant \inf_{s>0} e^{s\rho_i^2}(1+2s)^{-N_i/2}\exp[-4\rho_i^2 s/(1+2s)]$. Differentiating the exponent $h(s) = s\rho_i^2 - \tfrac{N_i}{2}\ln(1+2s) - 4\rho_i^2 s/(1+2s)$ and setting $u = 1+2s$, the optimality condition $h'(s^*)=0$ yields $\rho_i^2 u^2 - N_i u - 4\rho_i^2 = 0$, whose positive root is $u_i^* = (N_i + \sqrt{N_i^2 + 16\rho_i^4}\,)/(2\rho_i^2)$. Substituting $s_i^* = (u_i^*-1)/2$ into $h$ and simplifying with $4/u_i^* = u_i^* - N_i/\rho_i^2$ yields $h(s_i^*)=-\frac{(u_i^*{-}1)^2}{2}\rho_i^2 + \frac{N_i}{2}[(u_i^*{-}1) - \ln u_i^*]$, which gives Eq.~\eqref{eq:envelope_chernoff}.
\end{proof}

\end{document}